\documentclass[12pt]{amsart}
\usepackage{amssymb,latexsym,amsmath,enumerate,nicefrac}
\usepackage[dvipsnames,svgnames,x11names]{xcolor}
\usepackage{float}

\usepackage{caption}
\usepackage{comment}

\includecomment{showTikz}

\excludecomment{showPDF}

\excludecomment{showTOC}

\excludecomment{showData}

\usepackage{amsaddr}
\usepackage[total={6.5in,9in}, top=1in, left=1in, headsep = 0in]{geometry}

\usepackage{fancyhdr}
\usepackage{color}

\definecolor{egyptianblue}{rgb}{0.06, 0.2, 0.65}
\definecolor{cobalt}{rgb}{0.0, 0.28, 0.67}
\definecolor{yaleblue}{rgb}{0.06, 0.3, 0.57}
\definecolor{maroon}{rgb}{0.5, 0.0, 0.0}
\definecolor{indigo}{rgb}{0.0, 0.25, 0.42}
\definecolor{armygreen}{rgb}{0.29, 0.33, 0.13}
\definecolor{huntergreen}{rgb}{0.21, 0.37, 0.23}
\definecolor{royalblue}{rgb}{0.0, 0.14, 0.4}
\definecolor{zaffre}{rgb}{0.0, 0.08, 0.66}

\definecolor{myurlcolor}{rgb}{0.5, 0.0, 0.0}
\definecolor{mycitecolor}{rgb}{0.06, 0.2, 0.65}
\definecolor{myrefcolor}{rgb}{0.06, 0.2, 0.65}

\usepackage[bookmarks=false,pagebackref=true]{hyperref}
\hypersetup{colorlinks,
linkcolor=myrefcolor,
citecolor=mycitecolor,
urlcolor=myurlcolor}

\renewcommand*{\backref}[1]{}

\usepackage[all]{xy}
\usepackage{tikz}
\usetikzlibrary{calc}
\usepackage{graphics}
\usepackage{graphicx}
\usepackage{dsfont}
\usepackage{amsfonts}
\usepackage{mathtools}
\usetikzlibrary{topaths}

\usepackage{listings}
\usepackage{hyperref}
\usepackage{stmaryrd}
\usepackage{tikz-cd}
\usepackage{adjustbox}
\usepackage{comment}
\usepackage{verbatim}

\usetikzlibrary{arrows}
 \usetikzlibrary{matrix,positioning}

\DeclareMathOperator{\Hom}{Hom}
\pgfkeys{/tikz/.cd, arrowhead/.default=-1pt, arrowhead/.code={}}

\DeclareMathOperator{\Con}{Con}

\DeclareMathOperator{\Id}{Id}

\DeclareMathOperator{\Ob}{Ob}

\DeclareMathOperator{\Lie}{Lie}
\DeclareMathOperator{\LSpan}{Span}
\DeclareMathOperator{\proj}{proj}
\newcommand{\Stab}{\operatorname{Stab}}

\newcommand{\catname}[1]{{\mathsf{#1}}} 

\newcommand{\simple}{\text{decomposes external constraints}}

\newcommand{\elementary}{decomposes into constraints}

\newcommand{\Set}{\catname{Set}}

\newcommand{\Diff}{{\catname{Diff}}}
\newcommand{\SurjSub}{{\catname{SurjSub}}}
\newcommand{\Cat}{{\mathfrak M}}
\newcommand{\Kin}{{\catname{Kin}}}

\newcommand{\KSpan}{\mathcal{K}}

\newcommand{\CatJ}{{\catname{J}}}
\newcommand{\CatC}{{\catname{C}}}
\newcommand{\C}{\CatC}

\newcommand{\SA}{S_A}
\newcommand{\SL}{S_L}
\newcommand{\SR}{S_R}
\newcommand{\sL}{s_L}
\newcommand{\sR}{s_R}

\newcommand{\CA}{C_A}
\newcommand{\CL}{C_L}
\newcommand{\CR}{C_R}
\newcommand{\cL}{c_L}
\newcommand{\cR}{c_R}

\newcommand{\QA}{Q_A}
\newcommand{\QL}{Q_L}
\newcommand{\QR}{Q_R}
\newcommand{\qL}{q_L}
\newcommand{\qR}{q_R}

\newcommand{\PA}{P_A}

\newcommand{\pL}{p_L}
\newcommand{\pR}{p_R}

\newcommand{\id}{\text{id}}

\newcommand{\R}{\mathds{R}}

\newcommand{\Span}{\mathsf{Span}}

\newcommand{\D}{\mathscr{D}}

\newcommand{\A}{\mathcal{A}}

\newcommand{\SE}{\mathds{SE}}
\newcommand{\SO}{\mathds{SO}}

\pgfkeys{/tikz/.cd, arrowhead/.default=-1pt, arrowhead/.code={}}

\allowdisplaybreaks

\usepackage{tikz}
\usetikzlibrary{calc}
\usepackage{extarrows}

\usetikzlibrary{decorations.pathmorphing}
\usetikzlibrary{arrows}
\usetikzlibrary{calc}
\usepackage{graphics}
\usepackage{dsfont}
\usepackage{amsmath}
\usepackage{amsfonts,amssymb}
\usepackage{mathrsfs}
\usepackage{mathtools}

\pgfkeys{/tikz/.cd, arrowhead/.default=-1pt, arrowhead/.code={}}

\newcommand{\x}{[x]}

\renewcommand{\SS}{{\mathds S}}
\newcommand{\ZZ}{\mathds{Z}}

\newcommand{\F}{\mathcal F}

\renewcommand{\Cat}{\mathsf{Cat}}

\newcommand{\Ext}{\text{Ext}}
\newcommand{\isoarrow}{\xlongrightarrow{\ \sim\ }}

\newtheorem{theorem}{Theorem}[section]
\newtheorem{lemma}{Lemma}[section]

\newtheorem{proposition}{Proposition}[section]
\theoremstyle{definition}
\newtheorem{definition}{Definition}[section]
\newtheorem{example}{Example}
\theoremstyle{remark}

\newtheorem*{remark}{Remark}

\makeatletter
\def\namedlabel#1#2{\begingroup
    #2%
    \def\@currentlabel{#2}%
    \phantomsection\label{#1}\endgroup
}
\makeatother

\begin{document}

\title{A compositional framework for open classical kinematic systems}

\author{Andrea Abeje-Stine \and David Weisbart}%
\address{%
\begin{tabular}[h]{cc}
Department of Mathematics %
 \\University of California, Riverside %
 \end{tabular}
 }
\email{astin005@ucr.edu}\email{weisbart@math.ucr.edu}%

\begin{abstract}

Our aim is to introduce a framework sufficiently general to describe the kinematics of a wide variety of open systems in classical mechanics while uniquely characterizing systems with specified simplest components. The data describing a physical system are local, so the construction of a global configuration space requires compatibility among local interactions. We model open systems as morphisms in a category $\Kin(\F)$, where composition encodes how subsystems attach to one another and embed into larger systems. The framework supports a precise treatment of geometric constraints and clarifies when locally specified subsystems are compatible. It also yields a structural approach to the study of linkages: we prove the nonconstructibility of sliding hinges and universal joints from two rigid bodies attached by a surface constraint compatible with rigid-motion symmetries.

\end{abstract}

\maketitle

\tableofcontents

\thispagestyle{empty}


\section{Introduction}\label{Sec:Intro}


\subsection{Summary of Results}

The study of mechanical linkages is ancient; mechanisms such as the trammel of Archimedes date back at least to Proclus and possibly to Archimedes himself \cite{Wetzel}. In 1875, Franz Reuleaux, the ``father of kinematics" \cite{Moon}, articulated \cite{Reuleaux} the compositional viewpoint central to the present work:
\begin{quote}
The elementary parts of a machine are not single, but occur always in pairs, so that the machine, from a kinematic point of view, must be divided rather into pairs of elements than into single elements.
\end{quote}
The principal novelty of this work is a general category-theoretic framework for open kinematical systems in classical mechanics. The framework specializes to the machines described by Reuleaux and gives structural criteria for determining which machines cannot be constructed from pairs of rigid bodies constrained by contact along a surface, the classical \emph{lower kinematic pairs}. 

Theorems~\ref{MainA} and \ref{MainB} show that, under natural locality and compatibility assumptions, a configuration space for a classical mechanical system exists and appears as a universal object determined uniquely by interactions between elementary parts. When compatibility fails, locally specified interaction data need not assemble into a global configuration space; explicit examples exhibit this phenomenon (Examples~\ref{nonexample:Flimitb} and \ref{example:lock3Bar}). The framework that this work constructs appears to provide a minimal structure needed to formulate and prove classical non--existence results for lower kinematic pairs (Theorems~\ref{thm:NoTwoActorUniversalJoint}, \ref{thm:NoTwoActorPlanarSlidingHinge}, and \ref{thm:NoTwoActorFiberProductSpatialSlidingHinge}). It places Reuleaux's pair-based viewpoint into a broader mathematical setting and serves as a foundation for subsequent work on dynamics.

\subsection{Motivation and Background}

In part because of their numerous engineering applications, linkages have been studied from many mathematical perspectives. Gr\"{u}bler formalized a mobility formula for planar linkages in 1917 \cite{Gruebler}, and Kutzbach extended the formula to spatial linkages in 1929 \cite{Kutzbach}. Freudenstein developed an efficient parameterization of four-link mechanisms \cite{Freudenstein}, building on earlier work of Burmester \cite{Burmester}. Denavit and Hartenberg introduced coordinate parameterizations for composing constraint equations, including closed-loop systems \cite{DH}. Mruthyunjaya developed a computational approach to the structural synthesis of linkages corresponding to arbitrary graphs \cite{Mruthyunjaya}.  The study of linkages intersects with algebraic geometry \cite{McCarthy} and Lie theory \cite{MurrayLiSastry}, with connections reaching back to Kempe's Universality Theorem \cite{Kempe} and to screw theory \cite{Ball}. Kapovich and Millson studied the topology of moduli spaces of polygons and applied their study to linkages \cite{KapMillson1, KapMillson2}. The current work reframes the study of linkages in the broader context of open classical mechanical systems from a category theoretic perspective.

An open system is one that can interact with external systems, and its description must account for how it embeds into larger systems. A compositional approach to the study of open systems asserts that the rules governing subsystem composition, together with the properties of the subsystems, determine the properties of the whole. Although modeling in classical mechanics usually begins with a configuration space and then imposes constraints on it, the data describing a system is local: components interact only through constraints involving small collections of parts. The existence of a global configuration space is therefore not automatic but a compatibility question: Does the collection of local interaction data determine a consistent global configuration space?
 
A central program in applied category theory involves modeling open systems as morphisms in a category, with composition encoding subsystem composition. This perspective originates in extended topological quantum field theory, where cobordism composition assembles manifolds \cite{Baez-Dolan,BFV,Freed,Hau}. Subsequent work applies this approach to electrical circuits \cite{Rebro}, dynamical systems \cite{SSV}, and cyber-physical systems \cite{BFV2}. In their study of classical mechanical systems, Baez et al.\ introduced a span-based framework for certain Lagrangian and Hamiltonian systems \cite{BWY}.

For now, view a span in a category $\C$ to be a pair of morphisms in $\C$ that have the same source. The essential idea of Baez et al.\ was to identify an open classical mechanical system with a span $S$ in a category $\C$ \cite{BWY}. The source of the morphisms that define $S$ encodes the trajectories of the system. Span composition determines how systems compose to form larger systems.

Technical obstructions arise when constructing span categories for classical mechanical systems. These obstructions complicate direct application of existing open system frameworks \cite{Courser}. The principal difficulty is that the relevant categories do not admit pullbacks. Weisbart and Yassine addressed this issue using generalized span categories by introducing $\F$\emph{-pullbacks} to determine span composition uniquely \cite{WY}. Street introduced the notion of a \emph{fake pullback} to treat span categories in categories without pullbacks \cite{Street}. In order to align with the prior work of Baez et al. \cite{BWY}, the present work adopts the generalized span framework, which Section~\ref{Sec:2} reviews. Street's approach offers an alternative treatment of span composition and its applicability here merits further investigation.

The prior framework \cite{BWY} directly treats systems constructed from finitely many linearly ordered subsystems. A spring--mass system consisting of finitely many masses constrained to move along a line and connected by springs provides a basic example. Such a system looks like this:

\medskip

\begin{center}
\begin{tikzpicture}[scale = .75]

%
\coordinate(r1) at (-1.5,0);
\coordinate(r2) at (0,0);
\coordinate(r3) at (1.5,0);
\coordinate(r4) at (3,0);
\coordinate(r5) at (4.5,0);

\coordinate(dotsa) at (2,0);
\coordinate(dotsb) at (2.5,0);

\coordinate(dots1) at (2.15,0);
\coordinate(dots2) at (2.25,0);
\coordinate(dots3) at (2.35,0);

\draw[decorate,decoration={coil,segment length=4pt, amplitude = 4pt},rotate=0] (r1)  -- (r2);
\draw[decorate,decoration={coil,segment length=4pt, amplitude = 4pt},rotate=0] (r2)  -- (r3);
\draw[decorate,decoration={coil,segment length=4pt, amplitude = 4pt},rotate=0] (r3)  -- (dotsa);
\draw[decorate,decoration={coil,segment length=4pt, amplitude = 4pt},rotate=0] (dotsb)  -- (r4);
\draw[decorate,decoration={coil,segment length=4pt, amplitude = 4pt},rotate=0] (r4)  -- (r5);

\draw[fill= black] (r1) circle (2.5pt);
\draw[fill= black] (r2) circle (2.5pt);
\draw[fill= black] (r3) circle (2.5pt);
\draw[fill= black] (r4) circle (2.5pt);
\draw[fill= black] (r5) circle (2.5pt);

\draw[fill= black] (dots1) circle (.5pt);
\draw[fill= black] (dots2) circle (.5pt);
\draw[fill= black] (dots3) circle (.5pt);
\end{tikzpicture}
\end{center}

\medskip

\noindent For a system with three masses and two springs, identifying the right mass of the left subsystem with the left mass of the right subsystem decomposes the larger system into smaller systems, like this:

\medskip

\begin{center}
\begin{tikzpicture}[scale = .75]

\coordinate(r1) at (-1.5,3);
\coordinate(r2) at (0,3);
\coordinate(r3) at (1.5,3);

\coordinate(r4) at (-1.5,1.5);
\coordinate(ra) at (-2.25,1.5);
\coordinate(rb) at (-.75,1.5);

\coordinate(r5) at (1.5,1.5);
\coordinate(rc) at (.75,1.5);
\coordinate(rd) at (2.25,1.5);

\coordinate(br) at (-3,0);
\coordinate(bm) at (0,0);
\coordinate(bl) at (3,0);

\coordinate(l1) at (-.75,3);
\coordinate(l2) at (.75,3);
\coordinate(l3) at (-1.5,1.5);
\coordinate(l4) at (1.5,1.5);
\draw[decorate,decoration={coil,segment length=4pt, amplitude = 4pt},rotate=0] (r1)  -- (r2);
\draw[decorate,decoration={coil,segment length=4pt, amplitude = 4pt},rotate=0] (r2)  -- (r3);
\draw[decorate,decoration={coil,segment length=4pt, amplitude = 4pt},rotate=0] (ra)  -- (rb);
\draw[decorate,decoration={coil,segment length=4pt, amplitude = 4pt},rotate=0] (rc)  -- (rd);

\draw[fill= black] (r1) circle (2.5pt);
\draw[fill= black] (r2) circle (2.5pt);
\draw[fill= black] (r3) circle (2.5pt);

\draw[fill= black] (ra) circle (2.5pt);
\draw[fill= black] (rb) circle (2.5pt);
\draw[fill= black] (rc) circle (2.5pt);
\draw[fill= black] (rd) circle (2.5pt);

\draw[fill= black] (br) circle (2.5pt);
\draw[fill= black] (bm) circle (2.5pt);
\draw[fill= black] (bl) circle (2.5pt);

\draw[shorten >=12pt,shorten <=10pt, -stealth, arrowhead=5pt, line width=.4pt] (r2) -- (r4);
\draw[shorten >=12pt,shorten <=10pt, -stealth, arrowhead=5pt, line width=.4pt] (r2) -- (r5);
\draw[shorten >=6pt,shorten <=10pt, -stealth, arrowhead=5pt, line width=.4pt] (r4) -- (br);
\draw[shorten >=6pt,shorten <=10pt, -stealth, arrowhead=5pt, line width=.4pt] (r4) -- (bm);
\draw[shorten >=6pt,shorten <=10pt, -stealth, arrowhead=5pt, line width=.4pt] (r5) -- (bm);
\draw[shorten >=6pt,shorten <=10pt, -stealth, arrowhead=5pt, line width=.4pt] (r5) -- (bl);

\end{tikzpicture}
\end{center}

\medskip

From the Hamiltonian perspective, the state space of each mass is $T^\ast\mathds R$, the cotangent bundle of $\mathds R$. Each arrow in the diagram represents a canonical projection between these subsystem state spaces:

\medskip

\begin{center} 
\begin{tikzpicture}

\node[font = \footnotesize] (SL) at (-1,0) {$T^\ast\mathds R$};
\node[font = \footnotesize] (SR) at (1,0) {$T^\ast\mathds R$};
\node[font = \footnotesize] (SA) at (0,1) {$T^\ast\mathds R^2$};
\node[font = \footnotesize] (QL) at (1,0) {$T^\ast\mathds R$};
\node[font = \footnotesize] (QR) at (3,0) {$T^\ast\mathds R$};
\node[font = \footnotesize] (QA) at (2,1) {$T^\ast\mathds R^2$};
\node[font = \footnotesize] (PA) at (1,2) {$T^\ast\mathds R^2\times_{T^\ast\mathds R}T^\ast\mathds R^2$};

\draw[-stealth, shorten >= 0pt, shorten <= 0pt] (SA) -- (SL);
\draw[-stealth, shorten >= 0pt, shorten <= 0pt] (SA) -- (SR);
\draw[-stealth, shorten >= 0pt, shorten <= 0pt] (QA) -- (QL);
\draw[-stealth, shorten >= 0pt, shorten <= 0pt] (QA) -- (QR);
\draw[-stealth, shorten >= 0pt, shorten <= 0pt] (PA) -- (SA);
\draw[-stealth, shorten >= 0pt, shorten <= 0pt] (PA) -- (QA);

\end{tikzpicture}
\end{center}

\medskip

\noindent The canonical projections are surjective Poisson maps between symplectic manifolds. The two spring--mass subsystems in the middle of the diagram have state space $T^\ast\mathds R^2$. The state space of the total system, describing three masses interacting in series, is the fibered product of two copies of $T^\ast\mathds R^2$ over $T^\ast\mathds R$. From the Lagrangian perspective, tangent bundles serve as the state spaces, and the maps are surjective Riemannian submersions.

The fibered product models the identification of the right mass of the left spring--mass system with the left mass of the right spring--mass system and yields a six-dimensional manifold rather than the eight-dimensional Cartesian product of the subsystem state spaces. The use of fibered products in this context goes back at least to Dazord \cite{Daz}, who constructed configuration and state spaces for certain geometrically constrained systems in this way. Marle subsequently described a broader class of constrained systems as submanifolds of unconstrained configuration or state spaces; in the holonomic case considered here, his results agree with those of Dazord \cite{Marle}.


Categories with pullbacks and terminal objects have finite limits. In such categories, the compositional structure of spans may be expanded beyond sequential composition to allow finite graph-shaped patterns of interconnection. Duality transfers the equivalence between hypergraph categories and cospan algebras into the corresponding span statement \cite[Theorem~4.13]{https://arxiv.org/pdf/1806.08304}. Since $\Diff$ and $\SurjSub$ do not have pullbacks, this result does not immediately apply. Further, existence of $\F$-pullbacks does not guarantee existence of $\F$-limits, as shown by Example~\ref{nonexample:Flimitb}, which would complicate composition in a hypergraph category of spans in $\SurjSub$. Thus, the earlier framework \cite{BWY,WY} does not immediately appear to accommodate systems where subsystems are not linearly ordered. The following system, in which three springs mediate interactions among three masses moving in $\mathds R^2$, exhibits feedback not captured by that framework:

\smallskip

\begin{center}
\begin{tikzpicture}[scale = .85]

%
\coordinate(r1) at (90:1.5);
\coordinate(r2) at (210:1.5);
\coordinate(r3) at (330:1.5);

\draw[decorate,decoration={coil,segment length=4pt, amplitude = 3pt},rotate=0] (r2)  -- (r1);
\draw[decorate,decoration={coil,segment length=4pt, amplitude = 3pt},rotate=0] (r3)  -- (r2);
\draw[decorate,decoration={coil,segment length=4pt, amplitude = 3pt},rotate=0] (r1)  -- (r3);

\draw[fill= black] (r1) circle (2.5pt);
\draw[fill= black] (r2) circle (2.5pt);
\draw[fill= black] (r3) circle (2.5pt);
\end{tikzpicture}
\end{center}

\smallskip
  
\noindent Although mild extensions of the earlier framework accommodate certain systems with feedback, the framework does not appear to directly describe more complicated systems by assembling local interactions.

The present work introduces actor-constraint mediated systems as a compositional framework for open kinematical systems in classical mechanics. The framework treats actors not merely as point particles, but as carriers of internal degrees of freedom that encode geometric constraints between actors. This added structure supports a local compositional study of lower kinematic pairs, their assembly into linkages, and their role in standard engineering classifications. Rather than beginning with a global configuration space, the framework asks when linkage mobility can be reconstructed from local actor data and pairwise assembly. To our knowledge, no existing work formulates this problem through generalized span composition in a setting that admits feedback.

The actor-index categories and ACM-diagrams defined here resemble the constraint graphs studied in constraint satisfaction problems, beginning with the work of Friedman and Leondes \cite{FriedmanLeondes1, FriedmanLeondes2}. A category of ACM-systems allows spans in a generalized span category to be arranged according to an undirected graph, analogous to the role of hypergraph categories for ordinary cospans. The construction does not require $\F$-limits, which a direct application of hypergraph categories to generalized spans might. Recent work models constraint hypergraphs of control systems categorically \cite{MorrisMockoWagner}. Unlike constraint hypergraphs of sets, ACM-diagrams encode the structure of the target category and satisfy more restrictive axioms. 

ACM-diagrams also resemble the construction of $D$-categories \cite{Ames}. The relationship between a constraint skeleton and its associated ACM-diagram resembles the functor $IC$ from $\mathsf{SimpGph}$ to $\mathsf{Quiv}$ \cite{Moeller}. A key distinction is that $D$-categories and the functor $IC$ provide structure for diagrams of actors and constraints, whereas the present framework also requires pairwise interactions between actors. As in the hypergraph category setting \cite{Fong-Spivak}, the assembly of interactions in an ACM-diagram resembles operadic composition in a multicategory \cite{Leinster}.  The notion of openness also resembles the study of open networks using props \cite{Rebro}. 

\subsection{Organization of the Paper}

Section~\ref{Sec:2} reviews earlier work on open systems \cite{BWY} and identifies its limitations for systems with feedback. It also presents the assumptions about kinematic systems in classical mechanics that motivate the introduction of a rigid inclusion category of \emph{ACM-systems}. Given categories $\CatC$ and $\CatC^\prime$ and a functor $\F\colon \CatC\to\CatC^\prime$, Section~\ref{Sec3} defines ACM-diagrams and formulates technical conditions on $\F$ related to the span-tightness condition \cite{WY}. It proves the results required for Section~\ref{Sec4}. Section~\ref{Sec4} introduces ACM-systems and their composition in a rigid inclusion category and proves that the conditions from Section~\ref{Sec3} yield an ACM-system unique up to isomorphism.

Take $\Diff$ to be the category of smooth manifolds with smooth maps, and $\SurjSub$ to be the category of smooth manifolds with surjective submersions. Section~\ref{Sec:CMK-systems-as-rigid-inclusions} shows that the forgetful functor $\F\colon \SurjSub\to\Diff$ satisfies the conditions introduced in Sections~\ref{Sec3} and \ref{Sec4}, yielding a rigid inclusion category over $\F$. It presents examples of open kinematic systems in the rigid inclusion framework and shows that certain linkages, although constructible from lower kinematic pairs, require more than two actors. In particular, it proves that the universal joint and the sliding hinge are not lower kinematic pairs because they require at least three distinct actors. The section also initiates a classification of lower kinematic pairs. This classification arises naturally within the abstract framework and illustrates its applicability to classical mechanics. The emergence of a rigid inclusion category in this setting indicates possible applications beyond classical mechanics.


\section{Composition of systems}\label{Sec:2}


An \emph{actor} in a classical mechanical system, henceforth a CM-system, is a point particle with positive mass. Simple models involve a finite collection of actors and their pairwise interactions. The state space of each actor is an object in a category $\CatC$ that Section~\ref{Sec:CMK-systems-as-rigid-inclusions} specifies. The development proceeds abstractly to construct a compositional framework for actor--constraint mediated systems (ACM-systems). A kinematic system in classical mechanics, henceforth a CMK system, is an ACM-system in which $\F\colon \SurjSub \to \Diff$ is the forgetful functor from the category of surjective submersions between smooth manifolds to the category of smooth functions between smooth manifolds (Definition~\ref{def:CMK-systeM}).

\subsection{Constraint skeletons of systems}

The diagrammatic structure of classical mechanics motivates the following notion. Informally, the \emph{constraint skeleton} of an ACM-system is a finite undirected graph (see Definition~\ref{Def:Skeleton} for the formal definition). As an example, consider actors $A$, $B$, $C$, and $D$ moving in three-dimensional Euclidean space. A massless rigid bar constrains $B$ to move on a sphere centered at the position of $A$. Actors $C$ and $D$ have no geometric constraints. Springs connect the pairs $(A,D)$, $(B,C)$, and $(C,A)$, each applying a force along the line joining the two actors. The following graph represents the interactions of this ACM-system:

\smallskip

\begin{center}
\begin{tikzpicture}

%
\coordinate(A) at (0in,0);
\coordinate(B) at (-.35in,.35in);
\coordinate(C) at (-.35in,-.35in);
\coordinate(D) at (.7in,0);

\draw[shorten >= 4pt, shorten <= 4pt] (A) -- (B);
\draw[shorten >= 4pt, shorten <= 4pt] (A) -- (D);
\draw[shorten >= 4pt, shorten <= 4pt] (B) -- (C);
\draw[shorten >= 4pt, shorten <= 4pt] (C) -- (A);

\draw[fill = black] (A) circle (2pt) node[anchor = north, font = \footnotesize, outer sep = 5pt] {$A$};
\draw[fill = black] (B) circle (2pt) node[anchor = east, font = \footnotesize, outer sep = 5pt] {$B$};
\draw[fill = black] (C) circle (2pt) node[anchor = east, font = \footnotesize, outer sep = 5pt] {$C$};
\draw[fill = black] (D) circle (2pt) node[anchor = west, font = \footnotesize, outer sep = 5pt] {$D$};
\end{tikzpicture}
\end{center}

\smallskip
 
\noindent The vertices correspond to actors; the edges represent \emph{elementary interactions}---those between exactly two actors. These may be geometric constraints or dynamical interactions (i.e. forces). In the kinematic description of a CM-system, henceforth a CMK system, only the geometric constraints appear. Thus, the above graph reduces to a constraint skeleton like this:
 
\smallskip

\begin{center}
\begin{tikzpicture}

%
\coordinate(A) at (0in,0);
\coordinate(B) at (-.35in,.35in);
\coordinate(C) at (-.35in,-.35in);
\coordinate(D) at (.7in,0);

\draw[shorten >= 4pt, shorten <= 4pt] (A) -- (B);
\draw[fill = black] (A) circle (2pt) node[anchor = north, font = \footnotesize, outer sep = 5pt] {$A$};
\draw[fill = black] (B) circle (2pt) node[anchor = east, font = \footnotesize, outer sep = 5pt] {$B$};
\draw[fill = black] (C) circle (2pt) node[anchor = east, font = \footnotesize, outer sep = 5pt] {$C$};
\draw[fill = black] (D) circle (2pt) node[anchor = west, font = \footnotesize, outer sep = 5pt] {$D$};
\end{tikzpicture}
\end{center}

\smallskip

\noindent The goal is to define an abstract open CMK system in a compositional framework that uniquely determines the state space from a collection of \emph{elementary interactions}---CMK systems with two actors. The general framework governs open ACM-systems. Compositionality requires that the model of the total system be constructed in finitely many steps by incorporating one elementary interaction at a time. Each step depends only on the data of the interacting pair of actors. The resulting model must be independent of the order in which the elementary interactions are included.

\subsection{Compositional description of linearly-ordered systems}

For any categories $\CatJ$ and $\CatC$, a \emph{diagram} of \emph{shape} $\CatJ$ in $\CatC$ is a functor $\D\colon \CatJ\to\CatC$. The diagram $\D$ is \emph{finite} if $\CatJ$ is finite.  The category $\CatJ$ is an \emph{index category} for diagrams in $\CatC$ with shape $\CatJ$.  This work restricts index categories to finite posets.  Consequently, $\CatJ$ is thin.  Any diagram $\D\colon \CatJ\to\CatC$ is therefore a commutative diagram in $\CatC$.

A \emph{span} $S$ and a \emph{cospan} $C$ in $\CatC$ are diagrams in $\CatC$ with shape determined by three distinct objects $\{A, L, R\}$. A span assigns morphisms from $A$ to $L$ and from $A$ to $R$, where a cospan assigns morphisms from $L$ to $A$ and from $R$ to $A$, like this:

\begin{center}
\begin{tikzcd}[ampersand replacement=\&,cramped,row sep=small]
	\& A \&\&\&\&\& L \&\& R \\
	\&\&\&\& {\text{and}} \\
	L \&\& R \&\&\&\&\& A
	\arrow[from=1-2, to=3-1]
	\arrow[from=1-2, to=3-3]
	\arrow[from=1-7, to=3-8]
	\arrow[from=1-9, to=3-8]
\end{tikzcd}
    
\end{center}

\noindent Denote by $\sL\colon \SA \to \SL$ and $\sR\colon \SA \to \SR$ the images of the left and right arrows, respectively, in $\CatC$. The objects $\SL$ and $\SR$ are, respectively, the \emph{left foot} and \emph{right foot} of $S$, and $\SA$ is the \emph{apex} of $S$. When more careful specification is needed, write \[S = \langle\sL,\sR\rangle.\] Similarly, define $\cL\colon \CL \to \CA$ and $\cR\colon \CR \to \CA$ to be the images of the arrows in $\CatC$, and write \[C = \; \rangle\cL,\cR\langle.\]

For any spans $S$ and $Q$ with the same feet, a \emph{span morphism} from $S$ to $Q$ is a morphism $\Phi$ in $\CatC$ from $\SA$ to $\QA$ with \[\sL = \qL \circ \Phi \quad \text{and} \quad \sR = \qR \circ \Phi,\] so that this diagram commutes:

\begin{center} 
\begin{tikzcd}[ampersand replacement=\&,cramped]
	\& {S_A} \\
	{S_L = Q_L} \&\& {S_R = Q_R} \\
	\& {Q_A}
	\arrow["{s_L}"', from=1-2, to=2-1]
	\arrow["{s_R}", from=1-2, to=2-3]
	\arrow["\Phi", from=1-2, to=3-2]
	\arrow["{q_L}", from=3-2, to=2-1]
	\arrow["{q_R}"', from=3-2, to=2-3]
\end{tikzcd}
\end{center} 

\noindent A span morphism $\Phi$ is a \emph{span isomorphism} if $\Phi$ is an isomorphism. Denote by $[S]$ the isomorphism class of spans isomorphic to $S$.

A span $Q$ is \emph{paired with} a cospan $C$ if \[\QL = \CL, \quad \QR = \CR, \quad \text{and} \quad \cL\circ\qL = \cR\circ\qR.\] It is a \emph{pullback} of $C$ if, in addition, the following universal property holds:  For any span $S$ paired with $C$, there exists a unique span morphism $\Phi$ in $\CatC$ from $S$ to $Q$. Commutativity of these diagrams expresses the pairing and universal properties of a pullback:

\smallskip

\begin{center}
\begin{tikzcd}[ampersand replacement=\&,column sep=small]
	\&\&\&\&\& {Q_A} \\
	\& {S_A} \&\&\&\& {S_A} \\
	{S_L} \&\& {S_R} \& {\text{and}} \& {S_L} \&\& {S_R} \\
	\& {C_A} \&\&\&\& {C_A}
	\arrow["{\exists !}", dashed, from=1-6, to=2-6]
	\arrow[bend left=-30, from=1-6, to=3-5]
	\arrow[bend right=-30, from=1-6, to=3-7]
	\arrow[from=2-2, to=3-1]
	\arrow[from=2-2, to=3-3]
	\arrow[from=2-2, to=4-2]
	\arrow[from=2-6, to=3-5]
	\arrow[from=2-6, to=3-7]
	\arrow[from=2-6, to=4-6]
	\arrow[from=3-1, to=4-2]
	\arrow[from=3-3, to=4-2]
	\arrow[from=3-5, to=4-6]
	\arrow[from=3-7, to=4-6]
\end{tikzcd}
\end{center}

\smallskip

For any spans $S$ and $Q$, if $\SR$ is equal to $\QL$ and $\PA$ is a pullback of the cospan $\rangle \sR, \qL\langle$, define \[S\circ_P Q = \langle \sL\circ\pL, \qR\circ\pR\rangle\] to be the span formed from morphisms in the left-hand diagram and equal to the span shown on the right:

\smallskip

\begin{center} 

\begin{tikzcd}[ampersand replacement=\&,column sep=small]
	\&\& {P_A} \\
	\& {S_A} \&\& {Q_A} \&\&\&\& {P_A} \\
	{S_L} \&\& {S_R = Q_L} \&\& {Q_R} \&\& {S_L} \&\& {Q_R}
	\arrow[from=1-3, to=2-2]
	\arrow[from=1-3, to=2-4]
	\arrow[from=2-2, to=3-1]
	\arrow[from=2-2, to=3-3]
	\arrow[from=2-4, to=3-3]
	\arrow[from=2-4, to=3-5]
	\arrow[from=2-8, to=3-7]
	\arrow[from=2-8, to=3-9]
\end{tikzcd}

\end{center} 

\smallskip

The uniqueness of pullbacks of cospans up to span isomorphism defines a procedure for composing spans.  Pullbacks of cospans, however, do not necessarily exist in a category---and fail to exist in the categories most relevant to the study of CM-systems.  This motivates a generalization of the pullback concept. Take $\CatC^\prime$ to be any category and $\F\colon \CatC\to\CatC^\prime$ to be a functor.  For any span $S$ and cospan $C$, the image $\F(S)$ is a span in $\CatC^\prime$, and $\F(C)$ is a cospan in $\CatC^\prime$.

 \begin{definition}
For any cospan $C$ in $\CatC$, a span $S$ in $\CatC$ is an \emph{$\F$-pullback} of $C$ if $\F(S)$ is a pullback of $\F(C)$.
 \end{definition}

\begin{definition}
The category $\CatC$ \emph{has $\F$-pullbacks} if every cospan in $\CatC$ admits an $\F$-pullback.
\end{definition}

This diagram illustrates the relationship between $S$, $C$, $\F(S)$, and $\F(C)$ when $S$ is an $\F$-pullback of $C$:

\smallskip

\begin{center}
\begin{tikzcd}[cramped,column sep=scriptsize]
	&&&&& {Q_A} \\
	\\
	& {S_A} &&&& {\F(S_A)} \\
	{S_L} && {S_R} && {\F(S_L)} && {\F(S_R)} \\
	& {C_A} &&&& {\F(C_A)}
	\arrow["{\exists !}", dashed, from=1-6, to=3-6]
	\arrow[bend left=-30, from=1-6, to=4-5]
	\arrow[bend right=-30, from=1-6, to=4-7]
	\arrow[from=3-2, to=4-1]
	\arrow[from=3-2, to=4-3]
	\arrow[from=3-2, to=5-2]
	\arrow[from=3-6, to=4-5]
	\arrow[from=3-6, to=4-7]
	\arrow[from=3-6, to=5-6]
	\arrow[from=4-1, to=5-2]
	\arrow["\F", maps to, from=4-3, to=4-5]
	\arrow[from=4-3, to=5-2]
	\arrow[from=4-5, to=5-6]
	\arrow[from=4-7, to=5-6]
\end{tikzcd}
\end{center}

\smallskip

\noindent The functor $\F$ maps each morphism $\alpha$ in the span $S$ to a morphism $\F(\alpha)$ in the span $\F(S)$.

A closely related concept is that of an $\F$-product and of $\CatC$ having $\F$-products. Define an \emph{$\F$-product} as an $\F$-pullback over a terminal object. Since the pullback of a cospan with morphisms into a terminal object is a product, if $\CatC$ has $\F$-pullbacks and a terminal object, and if $\F$ preserves terminal objects, then $\CatC$ has $\F$-products. If $\F$ is the identity functor and $S$ is an $\F$-pullback of a cospan $C$, then $S$ is a pullback of $C$. The notion of an $\F$-pullback thus generalizes the standard notion of a pullback.

For any objects $A$, $B$, and $C$ in $\CatC$ with morphisms $f\colon A\to C$ and $g\colon B\to C$, if $\CatC$ is the category $\Set$, then the fiber product $A\times_C B$ is the set
\[
A\times_C B = \big\{(a,b)\in A\times B\mid f(a) = g(b)\big\}.
\]
It is a pullback of $\rangle f, g\langle$, and if $\F$ is the identity functor on $\Set$, then it is also an $\F$-pullback of $\rangle f, g\langle$. Continue to write $A\times_C B$ for the apex of an $\F$-pullback of $\rangle f, g\langle$, and write $A\times B$ when $C$ is terminal (so $f$ and $g$ are unique). In the latter case, $A\times B$ is an $\F$-product and the morphisms from it to $A$ and $B$ are $\F$-projections.

Prior work restricted the notion of span-tightness to functors $\F$ whose source category admits $\F$-pullbacks \cite{WY}.  This condition can be dropped.  

\begin{definition}
A functor $\F\colon \CatC \to \CatC^\prime$ is \emph{span-tight} if for any cospan $C$ in $\CatC$, and for every pair of $\F$-pullbacks $S$ and $Q$ of $C$, the unique span morphism $\Phi$ from $\F(S)$ to $\F(Q)$ is of the form $\F(\Psi)$ for some $\Psi$ in $\CatC$.
\end{definition}

For any spans $S$ and $Q$ in $\CatC$ so that $\SR$ is equal to $\QL$ and $\rangle \sR, \qL\langle$ has an $\F$-pullback $P$, the composite $[S]\circ [Q]$ is the span $[\langle \sL\circ\pL, \qR\circ\pR\rangle]$.  Composition, when defined, depends neither on the choice of representatives $S$ and $Q$, nor on the choice of $\F$-pullback $P$.  For any object $X$ in $\CatC$, denote by $\Id_X$ the identity arrow with source and target equal to $X$.  Identify $\SR$ as the source of $[S]$ and $\SL$ as the target of $S$.  Theorem~\ref{GenSpanCat:Theorem} is a main result of the prior work \cite[Theorem~5.1]{WY}.

\begin{theorem}\label{GenSpanCat:Theorem}
If $\CatC$ has $\F$-pullbacks and is span-tight, then $\circ$ defines composition in a category $\Span(\F)$ whose objects are the objects of $\CatC$, whose morphisms are isomorphism classes of spans in $\CatC$, and whose identity morphism at any object $X$ is $[\langle\Id_X, \Id_X\rangle]$. 
\end{theorem}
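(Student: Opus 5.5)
The plan is to check the category axioms directly: composition is well defined, identities are neutral, and composition is associative. The one tool used throughout is a transfer principle. Take two $\F$-pullbacks $S$ and $Q$ of the same cospan. Since $\F(S)$ and $\F(Q)$ are both pullbacks of $\F(C)$, the unique span morphism $\F(S)\to\F(Q)$ is an isomorphism $\Phi$ in $\CatC^\prime$. Span-tightness gives $\Phi=\F(\Psi)$, and symmetrically gives $\Phi^{-1}=\F(\Psi')$. The step I expect to be the main obstacle is concluding that $\Psi$ is an isomorphism in $\CatC$ and is a span morphism in $\CatC$. Knowing only $\F(\Psi'\circ\Psi)=\Id$ does not give $\Psi'\circ\Psi=\Id$ unless $\F$ is faithful. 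I would therefore first record that faithfulness of $\F$ is implicit in the hypotheses as used in \cite{WY}, or at least that span-tightness there is read as lifting to span isomorphisms. In the forgetful case $\SurjSub\to\Diff$ this holds automatically. With faithfulness in hand, the identities $s_L=q_L\circ\Psi$ and $s_R=q_R\circ\Psi$ follow from their images under $\F$. So any two $\F$-pullbacks of a cospan are isomorphic as spans in $\CatC$.

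For well-definedness, let $\Phi_S\colon S\to S'$ and $\Phi_Q\colon Q\to Q'$ be span isomorphisms, and let $P$ and $P'$ be $\F$-pullbacks of $\rangle s_R,q_L\langle$ and $\rangle s'_R,q'_L\langle$. Composing the legs of $P$ with $\Phi_S$ and $\Phi_Q$ produces a span whose image under $\F$ is a pullback of $\F(\rangle s'_R,q'_L\langle)$. Applying the transfer principle gives a span isomorphism to $P'$ compatible with the outer legs, so $[S]\circ[Q]$ does not depend on the choices. Existence of the composite comes from the hypothesis that $\CatC$ has $\F$-pullbacks.

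For identities, $\langle \Id_X, s_L\rangle$ is an $\F$-pullback of $\rangle \Id_X, s_L\langle$, because pulling back along an identity yields a pullback in $\CatC^\prime$. Hence $[\langle\Id,\Id\rangle]\circ[S]=[S]$, and the other side is symmetric.

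For associativity, take composable spans $S$, $Q$, $R$. Form $P$, an $\F$-pullback for $S\circ Q$, and then $P_1$, an $\F$-pullback of $P$ with $R$. On the other side form $P^\prime$ for $Q\circ R$ and then $P_2$. Apply $\F$ and use the pasting lemma for pullbacks in $\CatC^\prime$. Both $\F(P_1)$ and $\F(P_2)$, with their legs to $S_A$, $Q_A$ and $R_A$, are limits of the same zigzag diagram $\F(S_A)\to\F(S_R)\leftarrow\F(Q_A)\to\F(Q_R)\leftarrow\F(R_A)$. This gives a canonical isomorphism $\F(P_{1,A})\to\F(P_{2,A})$ commuting with all the legs.

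It remains to lift this isomorphism to $\CatC$. I would realize both apexes as $\F$-pullbacks of one common cospan, so that the transfer principle applies. Consider $P_{2,A}$ with legs to $P_A$, the latter obtained via the universal property in $\CatC^\prime$ and lifted, and to $R_A$. I would show this makes $P_{2}$ an $\F$-pullback of $\rangle q_R\circ p_R, r_L\langle$, by pasting in $\CatC^\prime$. The needed morphism $P_{2,A}\to P_A$ is the lift of a morphism between $\F$-pullbacks of the cospan $\rangle s_R,q_L\langle$, which is again handled by span-tightness. Once $P_2$ is identified as an $\F$-pullback of the same cospan as $P_1$, the transfer principle gives a span isomorphism between the two triple composites.
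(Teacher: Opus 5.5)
The paper does not prove this statement; it imports it as \cite[Theorem~5.1]{WY}, so your proposal has to stand on its own. Your caveat about faithfulness is justified. As span-tightness is worded here, the lift $\Psi$ need not be a span morphism or invertible. For instance, for the functor from nonempty sets to the terminal category, $\F$ is span-tight, every span with the right feet is an $\F$-pullback, and composition is not well defined. With faithfulness assumed, your transfer principle, the well-definedness argument and the identity laws are fine. One slip: in the identity step the $\F$-pullback of $\rangle \Id_X, s_L\langle$ is $\langle s_L,\Id_{S_A}\rangle$, not $\langle\Id_X,s_L\rangle$.

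The gap is the lifting step in associativity. Write $\langle p_L,p_R\rangle$ and $\langle p'_L,p'_R\rangle$ for the $\F$-pullbacks defining $S\circ Q$ and $Q\circ R$, and $\langle c,d\rangle$ for $P_2$, an $\F$-pullback of $\rangle s_R,q_L\circ p'_L\langle$. The span $\langle c,\,p'_L\circ d\rangle$ is paired with $\rangle s_R,q_L\langle$ but is not an $\F$-pullback of it. $\F(P_{2,A})$ is the triple fibre product, and the induced map $\F(P_{2,A})\to\F(P_A)$ forgets the $R_A$-coordinate, so it is not invertible in general; for $\SurjSub$ it lowers dimension by $\dim R_A-\dim Q_R$. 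Span-tightness only lifts the comparison isomorphisms between two $\F$-pullbacks of one and the same cospan, so it says nothing about this map. Lifting it is exactly what span-extendability (with $f=\Id_{S_A}$, $g=p'_L$) would give, but that is not a hypothesis of the theorem. You also cannot build it from the canonical isomorphism $\F(P_{1,A})\cong\F(P_{2,A})$, because that would require already having lifted that isomorphism, which is the point at issue.

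The repair is to compare both triple composites with a common third $\F$-pullback instead of mapping into $P_A$.
\begin{itemize}
\item Take an $\F$-pullback $W$ of $\rangle p_R,p'_L\langle$ over the middle apex $Q_A$, with legs $w_1\colon W_A\to P_A$ and $w_2\colon W_A\to P'_A$.
\item Pasting pullback squares in $\CatC^\prime$ on either side shows that $\langle w_1,\,p'_R\circ w_2\rangle$ is an $\F$-pullback of $\rangle q_R\circ p_R,r_L\langle$, the cospan defining $P_1$.
\item The same pasting shows that $\langle p_L\circ w_1,\,w_2\rangle$ is an $\F$-pullback of $\rangle s_R,q_L\circ p'_L\langle$, the cospan defining $P_2$.
\item Your transfer principle then gives span isomorphisms $P_{1,A}\cong W_A\cong P_{2,A}$ over these cospans.
\item Composing with the outer legs exhibits both triple composites as isomorphic to $\langle s_L\circ p_L\circ w_1,\ r_R\circ p'_R\circ w_2\rangle$.
\end{itemize}
This needs only $\F$-pullbacks and span-tightness (with faithfulness). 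The isomorphism $(S\times Q)\times T\cong(S\times Q)\times_Q(Q\times T)$ of \cite[Lemma~5.4]{WY}, used in the proof of Lemma~\ref{lem:projectionscancel}, is an instance of exactly this comparison.
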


For appropriate choices of $\CatC$ and $\F$, the prior works identify open systems with morphisms in $\Span(\F)$ (formerly denoted $\Span(\C,\F)$) \cite{BWY,WY}.  Composition in $\Span(\F)$ directly treats open systems connected in an acyclic chain, like this:

\smallskip

\begin{center} 
\begin{tikzpicture}[scale = 1]
\begin{scope}[xshift = 0in]
\coordinate(A) at (0,0);
\coordinate(B) at (1,0);
\coordinate(C) at (2,0);
\coordinate(D) at (3,0);
\coordinate(E) at (4,0);

\draw[shorten >= 4pt, shorten <= 4pt] (A) -- (B);
\draw[shorten >= 4pt, shorten <= 4pt] (B) -- (C);
\draw[shorten >= 4pt, shorten <= 4pt] (D) -- (E);

\draw[fill = black] (A) circle (2pt);
\draw[fill = black] (B) circle (2pt);
\draw[fill = black] (C) circle (2pt);
\draw[fill = black] (D) circle (2pt);
\draw[fill = black] (E) circle (2pt);

\draw[fill = black] (2.375, 0) circle (.5pt);
\draw[fill = black] (2.5, 0) circle (.5pt);
\draw[fill = black] (2.625, 0) circle (.5pt);

\end{scope}
\end{tikzpicture}
\end{center} 

\smallskip

\noindent Systems that cannot be expressed as an acyclic chain as above appear to require a more flexible notion of span composition. Example~\ref{nonexample:Flimitb} demonstrates that spans of surjective submersions arranged in an undirected graph may be impossible to compose, so a framework for generalized span composition in an undirected graph must identify when composition is permissible.

A \emph{cone} $\Phi^X$ over a diagram $\D\colon \CatJ\to\CatC$ is an object $X$ in $\CatC$, the \emph{apex} of the cone, together with \emph{component morphisms} (\emph{legs}) $\Phi^X_Z\colon X\to \D(Z)$ for each object $Z$ in $\CatJ$ such that for any morphism $f\colon A\to B$ in $\CatJ$,
\[
\D(f)\circ \Phi^X_A = \Phi^X_B.
\]
A span paired with a cospan in $\CatC$ is a cone over that cospan. For any diagram $\D\colon \CatJ\to\CatC$ and any objects $X$ and $Y$ in $\CatC$, take $\Phi^X$ and $\Phi^Y$ to be cones over $\D$. A morphism $\Psi\colon X\to Y$ is a \emph{cone morphism} if for every object $A$ in $\CatJ$,
\[
\Phi^X_A = \Phi^Y_A\circ \Psi.
\]
A cone morphism $\Psi$ is a \emph{cone isomorphism} if $\Psi$ is an isomorphism in $\CatC$. Cones $\Phi^X$ and $\Phi^Y$ are \emph{cone isomorphic} if there exists a cone isomorphism from $Y$ to $X$.

A \emph{limit} $\Phi^X$ of a diagram $\D$ is a cone over $\D$ with the property that for any cone $\Phi^Y$ over $\D$, there exists a unique cone morphism $\Psi\colon \Phi^Y \to \Phi^X$. As in the case of spans, limits are too restrictive for the categories relevant here. This motivates the introduction of $\F$-limits, which generalize $\F$-pullbacks. Indeed, when $\F$ is the identity functor on $\CatC$, an $\F$-limit is a limit. If $\CatC$ is the category $\Set$, then the apex of a limit of a small diagram $\D\colon \CatJ \to \CatC$ is isomorphic to
\[
\left\{x \in \prod_{j \in \text{Ob}(\CatJ)}\D(j) \hspace{10pt} \Bigg| \hspace{10pt} \forall [f\colon i \rightarrow j] \in \text{Mor}(\CatJ), \D(f)(\pi_i(x)) = \pi_j(x) \right\}.
\]
See the textbook of Leinster for further discussion of this construction \cite{LeinsterBasic}.

\begin{definition}
    For any two categories $\CatC$ and $\CatC^\prime$ and functor $\F$ from $\CatC$ to $\CatC^\prime$, an $\F$-limit of a diagram $\D$ is a cone $\Phi$ over $\D$ so that $\F(\Phi)$ is a limit in $\CatC^\prime$ of $\F\circ\D$. 
\end{definition}

Given any index set $I$, category $\CatC$, objects $C_i$ in $\CatC$ for each $i$ in $I$, object $A$ in $\CatC$, and any set of morphisms \[F = \{f_i\colon A \rightarrow C_i\mid i \in I\}\] in $\CatC$, the set $F$ defines a cone over the discrete diagram formed by the collection $C_i$. If there is an $\F$-product $P$ of the collection $C_i$ (equivalently, an $\F$-limit of the discrete diagram $\D$ from $I$ to the collection $C_i$), then an $\F$-product morphism \[\bigtimes_{i \in I} f_i\colon A \rightarrow P\] is a cone morphism from $A$ to $P$. This makes precise what it means for $A$ to factor through a product when a product does not exist, but an $\F$-product does.

Technical issues arise in the construction of a category that has subsystem inclusions as its morphisms, namely the existence of certain $\F$-limits and uniqueness up to isomorphism of the source of such $\F$-limits.  These issues motivate some restriction on both $\F$ and on the system of actors to be studied.

\begin{definition}
A functor $\F\colon \CatC\to \CatC^\prime$ is \emph{cone-tight} if for every diagram $\D\colon \CatJ\to \CatC$ and every pair of $\F$-limits $\Phi^X$ and $\Phi^{X^\prime}$ of $\D$, the unique cone isomorphism $\Psi\colon \F(X)\to \F(X^\prime)$ from the cone $\F(\Phi^X)$ to the cone $\F(\Phi^{X^\prime})$ is $\F(\widetilde{\Psi})$ for some cone isomorphism $\widetilde{\Psi}\colon X\to X^\prime$.
\end{definition}

If $\F$ is cone-tight and preserves terminal objects which $\C$ and $\C'$ have, then $\F$ is span-tight. Any finite limit can be constructed by pullbacks \cite[Section~3.4]{Riehl}.  The question is whether this is also true for $\F$-limits with respect to $\F$-pullbacks. In general, the answer is no (see Example~\ref{nonexample:Flimitb}), which complicates the present work. The goal of the next section is to identify a structured class of diagrams for which $\F$-limits exist.


\section{ACM-diagrams and their $\F$-limits}\label{Sec3}


The central construction of this section is the \emph{ACM-diagram}. Lemma~\ref{lem:CanAlwaysWeldSimple} provides the principal technical ingredient in proving that $\F$-limits exist for ACM-diagrams that are \emph{reducible to decomposable} by \emph{welding} actors.

\subsection{ACM-diagrams and weldings}\label{ACM-diagrams-and-weldings}

Adding data that specifies how parts connect allows composites beyond those arising from CM-systems with acyclic skeletons. Section~\ref{Sec:CMK-systems-as-rigid-inclusions} justifies this requirement and explains its implications for classical kinematics. Fix categories $\CatC$ and $\CatC^\prime$ and a functor $\F\colon \CatC \to \CatC^\prime$ such that $\CatC$ admits $\F$-pullbacks in $\CatC^\prime$. Assume that $\CatC$ and $\CatC^\prime$ admit terminal objects and that $\F$ preserves them.

\begin{definition}
An \emph{actor index category} $\CatJ$ is a finite poset whose set of objects decompose as union \[\Ob(\CatJ) = \Ob_A(\CatJ) \cup \Ob_C(\CatJ) \cup \Ob_I(\CatJ),\] 
where $\Ob_A(\CatJ)$, $\Ob_C(\CatJ)$, and  $\Ob_I(\CatJ)$ are disjoint sets and
\begin{enumerate}
\item[(i)] for some $N$ in $\mathds N$, $\Ob_A(\CatJ)$ is a set of $N$ distinct elements called \emph{actor indices};
\item[(ii)] for each actor index $i$, there is an $M_i$ in $\mathds N$ and a set $C_i$ of \emph{constraint indices for $i$} with \[C_i=\{c^i_1,\dots,c^i_{M_i}\}\subseteq\Ob_C(\CatJ)\]
         that all contain a single distinguished object $\star$, and \[\Ob_C(\CatJ) = \bigcup_{i\in \Ob_A(\CatJ)} C_i;\]
\item[(iii)] the set of \emph{interaction indices} is the set \[\Ob_I(\CatJ)=\{\{i,j\}\mid i,j \in \Ob_A(\CatJ), i\ne j\}.\]
\end{enumerate} 
The order is generated by \[c\le i \quad \text{and} \quad  i\le\{i,j\}\] for every $c$ in $C_i$ and for every pair of distinct actor indices $i$ and $j$.
\end{definition}

\begin{definition}\label{defACMDiag}
An \emph{Actor-Constraint mediated diagram} $\D$ (henceforth an \emph{ACM-diagram}) is, for some actor index category $\CatJ$, a diagram of shape $\CatJ$ in $\CatC$ with the following specified properties: 
\begin{enumerate}
\item[(i)] $\D$ takes each actor index $i$ to an object $A_i$ in $\CatC$, called an \emph{actor}, and each constraint index $c$ in $C_i$ to an object in $\CatC$, called a \emph{constraint} for $A_i$.
\item[(ii)] $\D$ takes each morphism $c \leq i$ to a \emph{constraint morphism} $f_{i,c}\colon A_i \rightarrow \D(c)$.
\item[(iii)] $\D$ takes $\star$ to a terminal object in $\CatC$ and $\D$ takes no other object in $\CatJ$ to a terminal object in $\CatC$.
\item[(iv)]\label{ACMcond4} For any actors $A_i$ and $A_j$ with $i$ and $j$ distinct, there are $\F$-product morphisms \[\bigtimes_{c \in C_i \cap C_j} f_{i,c} \colon A_i \rightarrow \prod_{c \in C_i \cap C_j} \D(c) \quad \text{and} \quad \bigtimes_{c \in C_i \cap C_j} f_{j,c} \colon A_j \rightarrow \prod_{c \in C_i \cap C_j} \D(c).\]
\item[(v)] For any two distinct $i$ and $j$, the functor $\D$ takes the span with source $\{i, j\}$ and targets $i$ and $j$ to an $\F$-pullback $\KSpan(A_i,A_j)$ of the cospan given by the pair of morphisms \[\bigtimes_{c \in C_i \cap C_j} f_{i,c}: \D(i) \rightarrow \prod_{c \in C_i \cap C_j} \D(c) \quad \text{and} \quad \bigtimes_{c \in C_i \cap C_j} f_{j,c}\colon \D(j) \rightarrow \prod_{c \in C_i \cap C_j} \D(c).\]
\end{enumerate}
\end{definition}

Although the current work does not require condition~(v) in the definition of an ACM-diagram, it appears to be necessary for dynamics, where additional information such as a choice of potential depends on pairs of actors rather than on individual actors.

\begin{definition}[Constrained actor]
A \emph{constrained actor $A_i$ with constraint set $\Con(A_i)$}, given by a pair $(A_i,\Con(A_i))$, is an ACM-diagram $\A\colon \CatJ \to \CatC$ where $\CatJ$ is an actor-index category with a single actor index $i$. There exists a finite index set $K$ and objects $B_k$ of $\CatC$ for each $k \in K$ such that
\[
\Con(A_i) = \{ f_{i,k} \colon A_i \to B_k \mid k \in K \}
\]
is a non-empty set of morphisms. Refer to $\Con(A_i)$ as the \emph{constraint set of $A_i$}. Each $f_{i,k}$ is a \emph{constraint morphism}.
\end{definition}

Every actor considered in this work is a constrained actor. Accordingly, adopt the convention of writing $A_i$ for both the actor and the constrained actor $(A_i,\Con(A_i))$.  Define the set of constraints of a diagram $\D$ by 
\[
\Con(\D)=\big\{\D(c\le a)\ \bigm|\ a\in \Ob_A(\CatJ),\ c\in \Ob_C(\CatJ),\ c\le a\big\}.
\]

For any category $\CatC$, denote by $\Cat\downarrow \CatC$ the slice category whose objects are diagrams in $\CatC$ and whose morphisms are functors between index categories that commute with those diagrams \cite[Section~2.5, Ex.~12]{Goldblatt}, as illustrated here:
\[
\begin{tikzcd}
	\CatJ && {\CatJ'} \\
	\\
	& \CatC
	\arrow["\F", from=1-1, to=1-3]
	\arrow["\D"', from=1-1, to=3-2]
	\arrow["{\D'}", from=1-3, to=3-2]
\end{tikzcd}
\]
\begin{definition}
For any categories $\CatJ_1$ and $\CatJ_2$, an \emph{inclusion} $\iota$ from $\CatJ_1$ to $\CatJ_2$ is a faithful functor
\[
\iota\colon \CatJ_1\hookrightarrow \CatJ_2
\]
that is injective on $\Ob(\CatJ_1)$. In this case, $\CatJ_1$ is a \emph{subcategory} of $\CatJ_2$.

If $\CatJ_1$ and $\CatJ_2$ are actor index categories, then $\iota$ \emph{respects the ACM structure} if it sends actor indices to actor indices, constraint indices to constraint indices, interaction indices to interaction indices, and satisfies \[\iota(\star) = \star.\] For any diagram $\D_2\colon \CatJ_2\to \CatC$, define
\begin{equation}\label{def:eq:ACMSub}
\D_1=\D_2\circ \iota .
\end{equation}
Denote by $\iota^\ast\colon \D_1\to \D_2$ the induced morphism in $\Cat\downarrow \CatC$. Call $\iota^\ast$ an \emph{inclusion} of $\D_1$ into $\D_2$. If $\iota$ respects the ACM structure, then call $\iota^\ast$ an \emph{ACM subdiagram} of $\D_2$, and call $\D_1$ an \emph{ACM subdiagram} of $\D_2$.
\end{definition}

An inclusion $\iota^\ast$ of ACM-diagrams that respects the ACM structure sends actor, constraint, and interaction indices to actor, constraint, and interaction indices. Since the $\F$-product morphisms and interactions for the target diagram exist in $\CatC$, they also exist for the source diagram. A direct check of the definition of an ACM-diagram yields Lemma~\ref{lem:InclusionsAreSubdiagrams}.

\begin{lemma}\label{lem:InclusionsAreSubdiagrams}
If $\iota\colon \CatJ_1 \rightarrow \CatJ_2$ is an inclusion that respects the ACM structure of actor index categories $\CatJ_1$ and $\CatJ_2$ and $\D_2\colon \CatJ_2 \rightarrow \CatC$ is an ACM-diagram, then \[\D_1=\D_2\circ \iota\] is an ACM-diagram.
\end{lemma}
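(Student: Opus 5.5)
The plan is to verify directly that $\D_1=\D_2\circ\iota$ satisfies each clause of Definition~\ref{defACMDiag}, using that $\iota$ respects the ACM structure. One point needs care first: $\iota$ must be compatible with the generating order relations. Since $\iota$ is a faithful functor between posets, it preserves $\le$. Since it sends constraint indices to constraint indices and actor indices to actor indices, a relation $c\le i$ in $\CatJ_1$ maps to a relation $\iota(c)\le\iota(i)$ between a constraint index and an actor index of $\CatJ_2$. That relation can only hold in $\CatJ_2$ if $\iota(c)\in C_{\iota(i)}$, since the order in an actor index category is generated by $c\le i$ and $i\le\{i,j\}$ and no chain leads from a constraint index to a different actor. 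The same reasoning shows that an interaction index $\{i,j\}$ in $\CatJ_1$, which lies above both $i$ and $j$, is sent to an interaction index lying above $\iota(i)$ and $\iota(j)$. That index must therefore be $\{\iota(i),\iota(j)\}$, with $\iota(i)\ne\iota(j)$ by injectivity on objects.

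With this bookkeeping in place, clauses (i) and (ii) follow at once. Each actor index $i$ of $\CatJ_1$ goes to the actor $A_{\iota(i)}$, each constraint index $c$ to a constraint $\D_2(\iota(c))$ for it, and each $c\le i$ to the constraint morphism $f_{\iota(i),\iota(c)}$. For (iii), $\D_1(\star)=\D_2(\star)$ is terminal. If $x\neq\star$ in $\CatJ_1$, then $\iota(x)\ne\iota(\star)=\star$ by injectivity on objects, so $\D_1(x)=\D_2(\iota(x))$ is not terminal.

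The main obstacle is clauses (iv) and (v), because the index set $C_i\cap C_j$ computed in $\CatJ_1$ must be compared with $C_{\iota(i)}\cap C_{\iota(j)}$ computed in $\CatJ_2$. By the first paragraph, $\iota$ maps $C_i\cap C_j$ injectively into $C_{\iota(i)}\cap C_{\iota(j)}$, but a priori the image need not be all of it. The intended reading of an ACM subdiagram, and the remark preceding the lemma, indicate that the $\F$-product morphisms and the interactions are simply inherited from $\D_2$. I would therefore argue that the shared constraint indices of two actors in $\CatJ_1$ correspond exactly to the shared constraint indices of their images, so that $\iota$ restricts to a bijection $C_i\cap C_j\to C_{\iota(i)}\cap C_{\iota(j)}$. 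I would justify this from how the subdiagram is formed: the relevant constraint data are determined by the actor objects and are carried along together with the actors. If that identification holds, the $\F$-product $\prod_{c\in C_i\cap C_j}\D_1(c)$ and the morphisms $\bigtimes f_{i,c}$, $\bigtimes f_{j,c}$ are literally those provided by (iv) for $\D_2$, so (iv) holds for $\D_1$.

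For (v), the span $\D_1(\{i,j\}\to i)$, $\D_1(\{i,j\}\to j)$ is then exactly the span $\KSpan(A_{\iota(i)},A_{\iota(j)})$ of $\D_2$. By (v) for $\D_2$, that span is an $\F$-pullback of the same cospan, which completes the check. If the bijection on shared constraints fails in general, the fallback is to note that the statement implicitly requires it as part of ``respecting the ACM structure'', and to record that condition explicitly.
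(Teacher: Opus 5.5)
Your order bookkeeping ($\iota(C_i)\subseteq C_{\iota(i)}$ and $\iota(\{i,j\})=\{\iota(i),\iota(j)\}$) and your checks of (i)--(iii) are correct, and they are exactly the \emph{direct check} the paper has in mind. The gap is the step you flag yourself: the bijection $C_i\cap C_j\to C_{\iota(i)}\cap C_{\iota(j)}$ cannot be derived from the way the subdiagram is formed. Respecting the ACM structure only requires $\iota$ to preserve the three kinds of indices and $\star$, so $\iota$ may omit a constraint that two actors share. The paper deliberately allows this:
\begin{itemize}
\item it remarks that $C_{\CatJ'}$ need not equal $\iota(\Ob_C(\CatJ'))$;
\item the proof of Lemma~\ref{lem:SupersimpleSubdiags} uses only $\iota(C_i)\subseteq C_{\iota(i)}$;
\item Proposition~\ref{prop:prodUnionHasLimit}(2) imposes $\iota_k(C_a)=C_{\iota_k(a)}$ as an explicit extra hypothesis.
\end{itemize}
When a shared constraint is omitted, condition (v) fails. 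For example, in $\SurjSub$ let $\CatJ_2$ have actors $i,j$ with $C_i=C_j=\{\star,c\}$. Let $\D_2$ send $i$, $j$, $c$ to $\R$ and $\star$ to a point, with $f_{i,c}=f_{j,c}=\id_{\R}$, and send $\{i,j\}$ to the diagonal $\R\times_{\R}\R\cong\R$; this is an ACM-diagram. Let $\CatJ_1$ be the same category with $c$ deleted, and let $\iota$ be the evident inclusion. In $\CatJ_1$ one has $C_i\cap C_j=\{\star\}$. So (v) would require the one-dimensional manifold $\D_1(\{i,j\})$, whose legs are identity maps of $\R$, to be an $\F$-product of $\R$ with $\R$. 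That is impossible.

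So your fallback is a necessity, not a contingency: the lemma as stated is false. The paper's one-line argument, that the $\F$-product morphisms and interactions of the target also exist for the source, passes over precisely this point. The repair is to add a hypothesis such as $\iota(C_i\cap C_j)=C_{\iota(i)}\cap C_{\iota(j)}$ for all distinct actor indices $i,j$ of $\CatJ_1$ (implied by $\iota(C_i)=C_{\iota(i)}$). With that hypothesis your argument for (iv) and (v) goes through verbatim. Note that (iv) alone needs no extra hypothesis. You can compose the $\F$-product morphism supplied by $\D_2$ with the $\F$-product projection onto the sub-product indexed by $\iota(C_i\cap C_j)$, as in the proof of Lemma~\ref{lem:SupersimpleSubdiags}. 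Only the interaction condition (v) genuinely breaks.
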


The way in which $\D_1$ is a subdiagram of $\D_2$ depends on the specified inclusion. A subdiagram inclusion involves only an inclusion functor from one actor index category to another, and this inclusion need not be a set-inclusion. This has physical significance: there may be multiple ways in which a part could fit into a larger system; an inclusion is a choice of how that part fits into the whole.

An actor index category $\CatJ$ is a finite poset, and it makes sense to identify a subset $\CatJ'$ of $\CatJ$ with the induced partial order. Take $\CatJ'$ to be an actor index category such that %
\[
\Ob_A(\CatJ')\subseteq \Ob_A(\CatJ),\quad \Ob_C(\CatJ')\subseteq \Ob_C(\CatJ)\] and \[\Ob_I(\CatJ')=\{\{i,j\}\mid i,j\in \Ob_A(\CatJ'),\ i\ne j\}.
\]
Denote by $\iota$ the set-inclusion functor $\CatJ'\hookrightarrow \CatJ$ that preserves the partial order. The restriction \[\D'=\D|_{\CatJ'}\] is an ACM subdiagram of $\D$.

The category $\Cat$ of locally small categories, itself not locally small, has pullbacks \cite[Proposition~3.5.6]{Riehl}. Monomorphisms in $\Cat$ are the faithful functors that are injective on objects, hence they are the inclusions in $\Cat$. In any category, for any cospan $\rangle f,g\langle$ with pullback $\langle \rho_1,\rho_2\rangle$, if $f$ is a monomorphism then $\rho_2$ is a monomorphism \cite[Exercise~3.13.1]{Goldblatt}. A pullback of the inclusions \[\iota_1\colon \CatJ_1 \rightarrow \CatJ \quad \text{and} \quad \iota_2\colon \CatJ_2 \rightarrow \CatJ\] thus consists of a category $\CatJ_{12}$ and inclusions \[\pi_1\colon \CatJ_{12} \rightarrow \CatJ_1 \quad \text{and} \quad \pi_2\colon \CatJ_{12} \rightarrow \CatJ_2.\]

\begin{definition}
Take $\D_1$ and $\D_2$ to be subdiagrams of $\D$, where
\[
\D_1\colon \CatJ_1 \to \CatC,\qquad
\D_2\colon \CatJ_2 \to \CatC,\qquad
\D\colon \CatJ \to \CatC,
\]
and take
\[
\iota_1\colon \CatJ_1 \to \CatJ
\quad\text{and}\quad
\iota_2\colon \CatJ_2 \to \CatJ
\]
to be the corresponding inclusions. An \emph{intersection of subdiagrams} $\D_1$ and $\D_2$ in $\D$ consists of a pullback $\langle \pi_1,\pi_2\rangle$ with apex $\CatJ_{12}$ of the cospan $\rangle \iota_1,\iota_2\langle$ and a diagram $\D_{12}\colon \CatJ_{12}\to \CatC$ such that
\[
\D_{12}=\D_1\circ \pi_1=\D_2\circ \pi_2.
\]
Equivalently, $\langle \pi_1^\ast,\pi_2^\ast\rangle$ is an $\F$-pullback of $\iota_1^\ast$ and $\iota_2^\ast$ in $\Cat\downarrow \CatC$.

If $\D_1$ and $\D_2$ are ACM subdiagrams of $\D$, then $\langle \pi_1,\pi_2\rangle$ is an \emph{intersection of ACM subdiagrams} in $\D$. Call $\D_{12}$ an \emph{intersection of ACM subdiagrams}. If, in addition,
\[
\Ob_A(\CatJ)=\iota_1\big(\Ob_A(\CatJ_1)\big)\cup \iota_2\big(\Ob_A(\CatJ_2)\big)
\quad\text{and}\quad
\Ob_C(\CatJ)=\iota_1\big(\Ob_C(\CatJ_1)\big)\cup \iota_2\big(\Ob_C(\CatJ_2)\big),
\]
then $\rangle \iota_1,\iota_2\langle$ is a \emph{union of ACM subdiagrams} over $\D_{12}$, and $\D$ is a \emph{union of ACM subdiagrams} $\D_1$ and $\D_2$.
\end{definition}

In $\mathsf{Set}$, the union of two sets is a pushout over the inclusions of their intersection. A union of ACM subdiagrams is not always a pushout in $\mathsf{Cat}$; rather, it is obtained by forming a pushout on the subdiagrams consisting of constraints and actors and then constructing interactions for each pair of actors. If $\CatC$ has $\F$-pullbacks, then a pushout of two ACM-diagrams determines an ACM-diagram whenever the pushout satisfies condition~\ref{ACMcond4}~(iv), namely that for any two actors there exist $\F$-product morphisms into the $\F$-product of the constraints in the intersection of their constraint sets. Proposition~\ref{sec3:prop:ACMintersections} shows that the graph-theoretic intersection of $\iota_1(\CatJ_1)$ and $\iota_2(\CatJ_2)$ determines a pullback of ACM-diagrams.

\begin{proposition}\label{sec3:prop:ACMintersections}
For any ACM subdiagrams $\D_1$ and $\D_2$ of $\D$ with inclusions on shapes
\[
\iota_1\colon \CatJ_1 \rightarrow \CatJ
\quad\text{and}\quad
\iota_2\colon \CatJ_2 \rightarrow \CatJ,
\]
an intersection $\D_{12}$ of $\D_1$ and $\D_2$ exists.

If for every $c$ in $\Ob_C(\iota_1(\CatJ_1)\cap \iota_2(\CatJ_2))$ there exists an 
$a$ in $\Ob_A(\iota_1(\CatJ_1)\cap \iota_2(\CatJ_2))$ with $c\le a$, then $\D_{12}$ is an ACM-diagram, is isomorphic to $\iota_1(\CatJ_1)\cap \iota_2(\CatJ_2)$, and is an ACM subdiagram of both $\D_1$ and $\D_2$.
\end{proposition}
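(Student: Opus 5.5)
The plan is to construct the pullback in $\Cat$ explicitly as the image-intersection and then check the ACM axioms on it. First I establish existence. Since $\Cat$ has pullbacks, the cospan $\rangle \iota_1,\iota_2\langle$ has a pullback with apex $\CatJ_{12}$ and legs $\pi_1,\pi_2$, which are inclusions because $\iota_1,\iota_2$ are monomorphisms. Setting $\D_{12}=\D_1\circ\pi_1$, the identity $\iota_1\circ\pi_1=\iota_2\circ\pi_2$ gives
\[
\D_1\circ\pi_1=\D\circ\iota_1\circ\pi_1=\D\circ\iota_2\circ\pi_2=\D_2\circ\pi_2 .
\]
This already yields an intersection of subdiagrams, and the first assertion follows.

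For the second assertion I identify $\CatJ_{12}$ concretely. All categories involved are posets, so $\iota_1$ and $\iota_2$ are injective on objects and order preserving. I would verify that $\CatJ_{12}$ is isomorphic to the full subposet $\iota_1(\CatJ_1)\cap\iota_2(\CatJ_2)$ of $\CatJ$. The objects are pairs $(x_1,x_2)$ with $\iota_1(x_1)=\iota_2(x_2)$, which biject with the common image. The morphisms are pairs of arrows mapping to the same arrow of $\CatJ$. There is one subtlety here: the pullback in $\Cat$ carries the order relations present in both $\CatJ_1$ and $\CatJ_2$, and this may be smaller than the order induced from $\CatJ$. I would therefore take the intersection with the order it inherits from both images and check the pullback property directly in $\Cat$. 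This transports $\D_{12}$ to a diagram on $\iota_1(\CatJ_1)\cap\iota_2(\CatJ_2)$.

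Next I show this intersection is an actor index category. Both inclusions respect the ACM structure, so the intersection decomposes into actor, constraint and interaction indices and contains $\star$. Interaction indices $\{i,j\}$ lying in both images have $i$ and $j$ in both images. Conversely, if actors $i,j$ lie in both images, then since each $\CatJ_k$ contains all interaction indices of its actors, $\{i,j\}$ lies in both images. The orders $c\le i$ and $i\le\{i,j\}$ are preserved by both inclusions, so they appear in the intersection.

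The hypothesis that every common constraint index $c$ lies below some common actor index $a$ is needed to guarantee $c\in C_a$ for an actor of the intersection. This is what makes $\Ob_C$ equal to the union of the $C_i$, as axiom (ii) of an actor index category requires. I expect this to be the main point needing care, together with the order subtlety above. Once it holds, the ACM-diagram axioms (i)--(v) for $\D_{12}$ are inherited from $\D$. I would use the observation behind Lemma~\ref{lem:InclusionsAreSubdiagrams}, applied to the inclusion of the intersection into $\CatJ_1$: the $\F$-product morphisms and $\F$-pullback interactions required for actors $i,j$ are data of $\D$ depending only on $A_i$, $A_j$ and $C_i\cap C_j$. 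One must check that $C_i\cap C_j$ computed in the intersection agrees with that in $\CatJ_1$, or at least that the relevant $\F$-products still exist. Finally, $\pi_1$ and $\pi_2$ respect the ACM structure by construction, so $\D_{12}$ is an ACM subdiagram of both $\D_1$ and $\D_2$.
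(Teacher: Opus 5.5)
Your outline follows the paper's proof:
\begin{itemize}
\item existence from pullbacks in $\Cat$;
\item identification of the apex with the common image of $\iota_1$ and $\iota_2$, using injectivity on objects and faithfulness;
\item the constraint hypothesis to secure axiom~(ii), plus the interaction-index check;
\item Lemma~\ref{lem:InclusionsAreSubdiagrams} applied to $\pi_1$ and $\pi_2$, which is all the paper does about the $C_i\cap C_j$ point you raise.
\end{itemize}
The one divergence is the order on the apex. The paper takes the full subcategory of $\CatJ$ on the common objects. It obtains $\pi_k(f)$ by asserting that every arrow of $\CatJ$ between objects of $\iota_k(\CatJ_k)$ lies in $\iota_k(\CatJ_k)$, which tacitly assumes $\iota_k$ is full. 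Your observation that the pullback in $\Cat$ keeps only relations holding in both $\CatJ_1$ and $\CatJ_2$ is correct, and more accurate than the paper.

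Having made that choice, however, your check that the apex is an actor index category fails. The sentence ``the orders $c\le i$ and $i\le\{i,j\}$ are preserved by both inclusions, so they appear in the intersection'' runs the wrong way. With your order, $c\le i$ survives only if it holds in both $\CatJ_1$ and $\CatJ_2$, and the surviving relations need not generate the order.

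Here is a concrete case. Take $\CatJ$ with actor indices $i,j,k$ and one further constraint index $c\in C_i\cap C_j\cap C_k$. Let $\iota_1,\iota_2$ be bijective on objects, with $c$ a constraint index of $i$ and $k$ only in $\CatJ_1$, and of $j$ and $k$ only in $\CatJ_2$. In the pullback, $c\le k$ holds, so the hypothesis of the proposition is met. Also $c\le\{i,j\}$ holds, through $i$ in $\CatJ_1$ and through $j$ in $\CatJ_2$. Yet neither $c\le i$ nor $c\le j$ holds. So the order is not generated by relations $c\le a$ and $a\le\{a,b\}$, and the apex is not an actor index category.

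Nothing in the definition of an ACM subdiagram excludes this configuration. You therefore need an extra assumption that $\iota_1,\iota_2$ reflect relations $c\le a$ between common objects, for instance $\iota_k(C_a)=C_{\iota_k(a)}$ as in case~(2) of Proposition~\ref{prop:prodUnionHasLimit}. Under it, your poset coincides with the paper's full subcategory, and the hypothesis ``$c\le a$'' means the same in either order. The rest of your argument then goes through as in the paper.
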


\begin{proof}
Since $\Cat$ has pullbacks, the slice category $\Cat\downarrow \CatC$ has pullbacks \cite[Proposition~3.3.8]{Riehl}. Hence an intersection $\D_{12}$ of $\D_1$ and $\D_2$ exists.

Define $\CatJ_{12}$ to be the graph-theoretic intersection of $\iota_1(\CatJ_1)$ and $\iota_2(\CatJ_2)$ inside $\CatJ$, namely, take
\[
\Ob(\CatJ_{12})=\iota_1\bigl(\Ob(\CatJ_1)\bigr)\cap \iota_2\bigl(\Ob(\CatJ_2)\bigr),
\]
and take the morphisms of $\CatJ_{12}$ to be the morphisms of $\CatJ$ between objects of $\CatJ_{12}$. Composition and identities restrict from $\CatJ$, so $\CatJ_{12}$ is a subcategory of $\CatJ$. 

For $i$ in $\{1,2\}$ define $\pi_i\colon \CatJ_{12}\to \CatJ_i$ as follows. 
For each object $x$ of $\CatJ_{12}$, since $x$ is in $\iota_i\big(\Ob(\CatJ_i)\big)$ and $\iota_i$ is injective on objects, there is a unique object $\pi_i(x)$ of $\CatJ_i$ such that 
\begin{equation}\label{eq:iotacircpixisx}
\iota_i\big(\pi_i(x)\big)=x.
\end{equation}
For any morphism $f\colon x\to y$ in $\CatJ_{12}$, since $x$ and $y$ are in $\iota_i\big(\Ob(\CatJ_i)\big)$ and $\iota_i(\CatJ_i)$ is a subcategory of $\CatJ$, the morphism $f$ is in $\Hom_{\iota_i(\CatJ_i)}(x,y)$. Since $\iota_i$ is faithful, there exists a unique morphism $\pi_i(f)\colon \pi_i(x)\to \pi_i(y)$ in $\CatJ_i$ such that %
\[
\iota_i\big(\pi_i(f)\big)=f.
\]%
The functors $\pi_1$ and $\pi_2$ form a pullback of the cospan $\rangle \iota_1,\iota_2\langle$ in $\Cat$.

Given the hypothesis on constraints in $\iota_1(\CatJ_1)\cap \iota_2(\CatJ_2)$, the category $\CatJ_{12}$ is an actor index category: all constraint indices are associated with actors; for any two actor indices in the intersection, the interaction index between those two actors is also in the intersection; and it contains $\star$ because $\iota_1$ and $\iota_2$ preserve the distinguished constraint index $\star$.

Define the diagram
\[
\D_{12}\colon \CatJ_{12}\to \CatC
\quad\text{by}\quad
\D_{12}=\D\circ \jmath,
\]
where $\jmath\colon \CatJ_{12}\hookrightarrow \CatJ$ denotes the inclusion. Equivalently, \begin{equation}\label{Eq:D12intersect}\D_{12}=\D_1\circ \pi_1=\D_2\circ \pi_2.\end{equation}
Equation~\ref{eq:iotacircpixisx} implies that the inclusions $\pi_i$ for $i$ in $\{1, 2\}$ respect the ACM structure since each $\iota_i$ respects the ACM structure. Since each $\pi_i$ respects the ACM structure, Lemma~\ref{lem:InclusionsAreSubdiagrams} implies that $\D_{12}$ is an ACM-diagram, and \eqref{Eq:D12intersect} exhibits $\D_{12}$ as an ACM subdiagram of both $\D_1$ and $\D_2$.

\end{proof}

Commutativity of this diagram captures the data of Proposition \ref{prop:IsomorphicUnions}:

\smallskip

\begin{center}
\begin{tikzcd}[cramped,row sep=small]
	& {\CatJ_{12}} \\
	\\
	& {\CatJ_{12}'} \\
	{\CatJ_1} && {\CatJ_2} \\
	& \CatJ \\
	\\
	& {\CatJ'}
	\arrow["\phi"{description}, from=1-2, to=3-2]
	\arrow["{\pi_1}"', from=1-2, to=4-1]
	\arrow["{\pi_2}", from=1-2, to=4-3]
	\arrow["{\pi_1'}", from=3-2, to=4-1]
	\arrow["{\pi_2'}"', from=3-2, to=4-3]
	\arrow["{\iota_1}", from=4-1, to=5-2]
	\arrow["{\iota_1'}"', from=4-1, to=7-2]
	\arrow["{\iota_2}"', from=4-3, to=5-2]
	\arrow["{\iota_2'}", from=4-3, to=7-2]
	\arrow["\psi"{description}, from=5-2, to=7-2]
\end{tikzcd}
\end{center}

\smallskip

\begin{proposition}\label{prop:IsomorphicUnions}
For any ACM-diagrams
\[
\D_1\colon \CatJ_1 \to \CatC,\quad
\D_2\colon \CatJ_2 \to \CatC,\quad
\D\colon \CatJ \to \CatC,\quad \text{and} \quad
\D'\colon \CatJ' \to \CatC
\]
such that $\D$ and $\D'$ are each unions of $\D_1$ and $\D_2$ with inclusions on shapes
\[
\iota_1\colon \CatJ_1\to \CatJ,\quad
\iota_2\colon \CatJ_2\to \CatJ,\quad
\iota_1'\colon \CatJ_1\to \CatJ',\quad \text{and}\quad 
\iota_2'\colon \CatJ_2\to \CatJ',
\]
take $\langle \pi_1,\pi_2\rangle$ with apex $\CatJ_{12}$ to be an intersection (pullback in $\Cat$)
of the cospan $\rangle \iota_1,\iota_2\langle$, and $\langle \pi_1',\pi_2'\rangle$ with apex
$\CatJ_{12}'$ to be an intersection of $\rangle \iota_1',\iota_2'\langle$. If there exists an isomorphism of categories $\phi\colon \CatJ_{12}\to \CatJ_{12}'$
such that for $i$ in $\{1,2\}$,
\[
\pi_i'\circ \phi=\pi_i,
\]
then there exists an isomorphism of categories $\psi\colon \CatJ\to \CatJ'$ that respects the ACM
structure and a natural isomorphism
\[
\eta\colon \D \to \D'\circ \psi .
\]
\end{proposition}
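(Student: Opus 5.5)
We will build $\psi$ object by object from the two union decompositions and then read off $\eta$ from the fact that $\D$ and $\D'$ restrict to $\D_1$ and $\D_2$ along the inclusions. By the union hypothesis, every actor or constraint index of $\CatJ$ lies in $\iota_1(\Ob(\CatJ_1))\cup\iota_2(\Ob(\CatJ_2))$, and likewise for $\CatJ'$. For $x=\iota_k(y)$ we set $\psi(x)=\iota_k'(y)$.

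\textbf{Well-definedness.} The only ambiguity is when $x=\iota_1(y_1)=\iota_2(y_2)$. Then $x$ lies in $\iota_1(\CatJ_1)\cap\iota_2(\CatJ_2)$. By the construction in the proof of Proposition~\ref{sec3:prop:ACMintersections} and uniqueness of pullbacks in $\Cat$, there is a unique $z\in\CatJ_{12}$ with $\pi_1(z)=y_1$ and $\pi_2(z)=y_2$. Using $\pi_i'\circ\phi=\pi_i$ and commutativity of $\iota_1'\pi_1'=\iota_2'\pi_2'$, we get
\[
\iota_1'(y_1)=\iota_1'\pi_1'(\phi z)=\iota_2'\pi_2'(\phi z)=\iota_2'(y_2),
\]
so $\psi$ is well defined on actor and constraint indices.

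\textbf{Extension to the rest of $\CatJ$.} On interaction indices we put $\psi(\{i,j\})=\{\psi(i),\psi(j)\}$. We also need $\psi(\star)=\star$, which holds because all inclusions respect ACM structure. The same construction with the roles of primed and unprimed objects swapped, using $\phi^{-1}$, gives an inverse, so $\psi$ is a bijection on objects.

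\textbf{Order preservation.} Since $\CatJ$ is a poset generated by $c\le i$ and $i\le\{i,j\}$, it suffices to check generators. The second kind is automatic. For $c\le i$ with $c\in C_i$, the main obstacle is to show that this relation is witnessed inside a single $\iota_k(\CatJ_k)$. Otherwise a constraint coming from $\CatJ_1$ might be attached in $\CatJ$ to an actor coming only from $\CatJ_2$, and nothing forces the same attachment in $\CatJ'$.

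I would first try to argue from the definition of union: a union is obtained by pushout on the actor/constraint subdiagrams, so every generating relation $c\le i$ comes from some $\CatJ_k$. If the formal definition does not literally force this, I would need to read the definition of union as including that property, or use that $\iota_k$ is full on the induced poset. Granting it, $c\le i$ equals $\iota_k(c'\le i')$, so $\psi(c)\le\psi(i)$ in $\CatJ'$. Symmetrically $\psi^{-1}$ preserves order, so $\psi$ is an isomorphism of posets respecting the ACM structure, with $\psi\circ\iota_k=\iota_k'$.

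\textbf{The natural isomorphism $\eta$.} For $x=\iota_k(y)$ an actor or constraint index, we have
\[
\D(x)=\D_k(y)=\D'(\iota_k'y)=\D'(\psi x),
\]
so we take $\eta_x$ to be the identity. Naturality on constraint morphisms holds because both sides equal $\D_k(c'\le i')$.

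For an interaction index $\{i,j\}$, $\D(\{i,j\})$ and $\D'(\{\psi i,\psi j\})$ are both $\F$-pullbacks of the same cospan, since the constraint sets and morphisms agree by the previous step. The unique span morphism between their $\F$-images is an isomorphism. By span-tightness, which follows from cone-tightness together with $\F$ preserving terminal objects as assumed earlier, it lifts to an isomorphism $\eta_{\{i,j\}}$ in $\CatC$ commuting with the legs. This gives naturality on the morphisms $i\le\{i,j\}$. Since $\CatJ$ is thin, naturality on the generators suffices, and $\eta$ is a natural isomorphism $\D\to\D'\circ\psi$.
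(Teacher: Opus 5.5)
Your construction is the same as the paper's. You define $\psi$ as $\iota_k'\circ\iota_k^{-1}$ on the two images, check consistency on the overlap through $z\in\CatJ_{12}$ using $\pi_i'\circ\phi=\pi_i$ and $\iota_1'\circ\pi_1'=\iota_2'\circ\pi_2'$, and send $\{a,b\}$ to $\{\psi(a),\psi(b)\}$. Then $\eta$ is the identity on actor and constraint objects, with a comparison isomorphism on interaction objects.

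The step you single out is a genuine gap, and the paper does not close it. Its proof simply asserts that $\psi$ is order-preserving because the orders are generated by $c\le a$ and $a\le\{a,b\}$. That only covers generators lying inside a single $\iota_k(\CatJ_k)$. The gap cannot be closed from the formal definition of union either, since that definition constrains only $\Ob_A$ and $\Ob_C$. Here is a counterexample:
\begin{itemize}
\item Let $\CatJ_1$ have one actor $a$ with $C_a=\{\star,c\}$, and let $\CatJ_2$ have one actor $b$ with $C_b=\{\star\}$.
\item Let $\CatJ$ have $C_a=C_b=\{\star,c\}$, and let $\CatJ'$ have $C_a=\{\star,c\}$ and $C_b=\{\star\}$, with the evident inclusions.
\item In $\SurjSub$, send $a$, $b$, $c$ to $\R$, send $\star$ to a point, and send each constraint morphism into $c$ to $\id_{\R}$.
\end{itemize}
Both $\D$ and $\D'$ are ACM-diagrams and unions of $\D_1$ and $\D_2$. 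Both intersections are the one-object category on $\star$, so $\phi=\id$ works. Yet $c$ lies below two actor indices in $\CatJ$ and below one in $\CatJ'$; any ACM-respecting $\psi$ must fix $c$, so no such $\psi$ exists. Also $\D(\{a,b\})\cong\R$ while $\D'(\{a,b\})\cong\R^2$, so no $\eta$ exists either.

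The hypothesis you grant must therefore be added to the statement: every generating relation $c\le a$ of $\CatJ$ and of $\CatJ'$ lies in $\iota_1(\CatJ_1)$ or $\iota_2(\CatJ_2)$. This is what the paper's informal description of a union as a pushout on actors and constraints intends. Your fallback of assuming each $\iota_k$ is full would not suffice, since both inclusions in this example are full. With the added hypothesis, your order-preservation argument is correct.

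The interaction components need one more repair, which the paper also skips when it appeals to ``the universal property of pullbacks''. There are two problems:
\begin{itemize}
\item The comparison map a priori lives in $\CatC'$. Span-tightness only supplies some $\Psi$ in $\CatC$ with $\F(\Psi)$ equal to it. Without faithfulness of $\F$, that $\Psi$ need not commute with the legs or be invertible.
\item The $\F$-products of shared constraints used by $\D$ and $\D'$ are auxiliary choices. So the two interaction objects need not be $\F$-pullbacks of literally the same cospan.
\end{itemize}
Both problems disappear if you observe that $\D(\{a,b\})$ and $\D'(\{\psi(a),\psi(b)\})$ are both $\F$-limits of the diagram formed by $A_a$, $A_b$ and their shared constraints. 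Cone-tightness then yields a cone isomorphism in $\CatC$ directly. However, cone-tightness is not among the hypotheses in force at this point of the paper; it is imposed only from Lemma~\ref{lem:CanAlwaysWeldSimple} onward. You should state it as an added assumption rather than cite it as already assumed.
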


\begin{proof}
Define \[\psi \colon \CatJ\to \CatJ' \quad \text{and} \quad \psi'\colon \CatJ' \rightarrow \CatJ\] in the following way: for any $x$ in $\iota_1(\CatJ_1)$ and $y$ in $\iota_2(\CatJ_2)$, take \[\psi(x)=\iota_1'\big(\iota_1^{-1}(x)\big), \quad \psi(y)=\iota_2'\big(\iota_2^{-1}(y)\big), \quad \psi'(x)=\iota_1\big(\iota_1'^{-1}(x)\big), \quad \text{and} \quad \psi'(y)=\iota_2\big(\iota_2'^{-1}(y)\big).\]
If $x$ is in $\iota_1(\CatJ_1)\cap \iota_2(\CatJ_2)$, then 
\[
x=\iota_1(\pi_1(z))=\iota_2(\pi_2(z))
\]
for some $z$ in $\CatJ_{12}$. The equalities 
\[
\pi_i'\circ \phi=\pi_i \quad \text{and} \quad \iota_1'\circ \pi_1'=\iota_2'\circ \pi_2'
\] 
in $\CatJ_{12}'$ together imply that 
\[
\iota_1'\big(\iota_1^{-1}(x)\big)
=\iota_1'\big(\pi_1(z)\big)
=\iota_1'\big(\pi_1'(\phi(z))\big)
=\iota_2'\big(\pi_2'(\phi(z))\big)
=\iota_2'\big(\pi_2(z)\big)
=\iota_2'\big(\iota_2^{-1}(x)\big),
\]
so the definition of $\psi$ is consistent on the intersection $\iota_1(\CatJ_1)\cap \iota_2(\CatJ_2)$. An equivalent argument verifies that $\psi'$ is also consistent on this intersection.

Extend $\psi$ and $\psi'$ to interaction indices by
\[
\psi(\{a,b\})=\{\psi(a),\psi(b)\} \quad \text{and} \quad \psi'(\{a,b\})=\{\psi'(a),\psi'(b)\}.
\]
Since $\CatJ$ and $\CatJ'$ are posets whose orders are generated by relations $c\le a$ and $a\le \{a,b\}$, each of $\psi$ and $\psi'$ define an order-preserving map on objects and hence a functor.

For any $x$ in $\CatJ$, there is an $i$ in $\{1, 2\}$ such that $x$ is in $\iota_i(\CatJ)$, hence \[\psi' \circ \psi(x) = \iota_i(\iota_i^{\prime-1}(\iota_i'(\iota_i^{-1}(x)))) = x \quad  \text{and} \quad \psi \circ \psi'(x) = \iota_i'(\iota_i^{-1}(\iota_i(\iota_i^{\prime-1}(x)))) = x.\] The functors $\psi'$ and $\psi$ are inverses, hence $\psi$ is an isomorphism. The isomorphism $\psi$ respects actor, constraint, and interaction indices, and sends $\star$ to $\star$.

Define a natural transformation $\eta\colon \D\to \D'\circ \psi$ componentwise.
\begin{enumerate}
\item For any $x$ in $\iota_1(\CatJ_1)$,
\[
\D(x)=\D_1(\iota_1^{-1}(x))
\quad\text{and}\quad
\D'(\psi(x))=\D'\bigl(\iota_1'(\iota_1^{-1}(x))\bigr)=\D_1(\iota_1^{-1}(x)),
\]
so take \[\eta_x=\id_{\D(x)}.\] Define $\iota_2(\CatJ_2)$ similarly for any $x$ in $\iota_2(\CatJ_2)$.
\item If $x$ is an interaction index $\{a,b\}$ with $a$ in $\iota_1(\CatJ_1)$ and $b$ in $\iota_2(\CatJ_2)$, then $\D(x)$ and $\D'(\psi(x))$ are (by the ACM-diagram axioms) chosen $\F$-pullbacks of the same cospan in $\CatC$,
because $\eta$ is already the identity on the actor objects and on the shared-constraint products.
The universal property of pullbacks implies that there exists a unique isomorphism %
\[
\eta_x\colon \D(x)\xrightarrow{\cong}\D'(\psi(x))
\]
that commutes with the pullback projections.
\end{enumerate}
Because the index categories are thin, these choices automatically satisfy naturality once they commute with the structure maps (in particular, the projection maps into the actor objects and shared-constraint products). Thus $\eta$ is a natural isomorphism.

\end{proof}

Take $\CatJ$ to be any actor index category and $i$ to be in $\text{Ob}_A(\CatJ)$. The \emph{complementary constraint set} to $i$ is the set
\[C_{i,\CatJ}^\perp = \bigcup_{j \neq i} C_j.\]
This is the set of all constraint indices for actor indices in $\text{Ob}_A(\CatJ)$ other than $i$. The \emph{external constraint set} of $i$ is the set
\[\text{Ext}[i:\CatJ] = C_i \cap C_{i,\CatJ}^\perp,\]
which may be empty and consists of all constraint indices involving both $i$ and at least one actor index distinct from $i$. The set $\Ext[i:\CatJ]$ and the mappings to it capture the connection of the actor $A_i$ to the subsystem obtained by excluding $A_i$.

Take $\iota\colon \CatJ' \rightarrow \CatJ$ to be an inclusion that respects the ACM structure of another actor index category $\CatJ'$. The \emph{constraint set} of $\CatJ'$ is the set
\[C_{\CatJ'} = \bigcup_{j \in \iota(\text{Ob}_A(\CatJ'))} C_j.\]
This is the set of all constraint indices in $\CatJ$ involving at least one actor index in $\iota(\CatJ')$. The set $C_{\CatJ'}$ is not necessarily equal to $\iota(\text{Ob}_C(\CatJ'))$ because there may be constraints for actors in $\iota(J')$ that are not constraints in $\iota(J')$. The \emph{complementary constraint set} to $\CatJ'$ is the set
\[C_{\CatJ', \CatJ}^\perp = \bigcup_{a_j \notin \iota(\text{Ob}_A(\CatJ'))} C_j.\]
This is the set of all constraint indices for actor indices in $\CatJ$ not in $\iota(\CatJ')$. The set $C_{\CatJ', \CatJ}^\perp$ is not necessarily equal to $\text{Ob}_C(\CatJ) \setminus \iota(\text{Ob}_C(\CatJ'))$ because actors in $\text{Ob}_A(\CatJ) \setminus \iota(\text{Ob}_A(\CatJ'))$ may have constraints in $\iota(\CatJ')$. The \emph{external constraint set} of $\CatJ'$ is the set
\[\text{Ext}[\CatJ': \CatJ] = C_{\CatJ'} \cap C_{\CatJ', \CatJ}^\perp,\] whose elements are the constraint indices for constraints that are between at least one actor for an actor index in $\CatJ'$ and one actor for an actor index that is not in $\CatJ'$.

\begin{definition}\label{defSimple}
    An ACM-diagram $\D$ \emph{\simple} if for each actor $A$, there exists an $\F$-product morphism \[f_i := \bigtimes_{c \in \Ext[i\colon \CatJ]} f_{i,c}\colon A_i \rightarrow \prod_{c \in \Ext[i\colon \CatJ]} \D(c).\]
\end{definition}

There is an important difference between \ref{defACMDiag} (iv) and \ref{defSimple}. The former requires that for every pair of actors, each actor maps through a product of those two actors' shared constraints; the latter requires that every actor maps through an $\F$-product of all of its constraints that are shared with at least one other actor, which potentially is an $\F$-product of many more constraints.

\begin{definition}\label{defSuperSuperSimple}
An ACM-diagram $\D$ \emph{\elementary} if for each actor index $i$ with corresponding actor
\[
A_i:=\D(i),
\]
there is an $\F$-product morphism
\[
\bigtimes_{c\in C_i}\D(c\le i)\colon A_i\to \prod_{c\in C_i}\D(c)
\]
which is also an isomorphism.
\end{definition}

\begin{definition}
Take $A_i$ and $A_j$ to be any distinct actors in an ACM-diagram, and take
\[
C := \prod_{c \in C_i \cap C_j} \D(c)
\]
to be the $\F$-product of their shared constraints.  The \emph{welding} of $A_i$ and $A_j$ is the  actor $\KSpan(A_i,A_j)$ (see Definition~\ref{defACMDiag} (v)) with constraint set $C_i \cup C_j$. Refer to $\KSpan(A_i,A_j)$ as a \emph{welded actor}.
\end{definition}

Take $\D$ to be any ACM-diagram with actor-index category $\CatJ$ that contains distinct actor indices $i$ and $j$. Construct the actor-index category $W_{i,j}(\CatJ)$ by:
\begin{enumerate}
\item[$\bullet$] replacing $i$ and $j$ with a single index $ij$;
\item[$\bullet$] deleting the interaction index $\{i,j\}$ and its arrows;
\item[$\bullet$] assigning $ij$ the constraint set $C_{ij} =C_i\cup C_j$;
\item[$\bullet$] substituting $ij$ for every occurrence of $i$ or $j$ elsewhere.
\end{enumerate}
The actor index category $W_{i,j}(\CatJ)$ contains all actor indices of $\CatJ$ except $i$ and $j$, and instead contains the welded actor index $ij$.

\begin{definition}
For any ACM-diagram $\D$ with shape $\CatJ$ having actors $A_i$ and $A_j$, $\D$ is \emph{reducible by welding $A_i$ and $A_j$} if there is an ACM-diagram $\D^\prime$ on $W_{i,j}(\CatJ)$ so that $\D^\prime$ and $\D$ agree on every actor index, constraint index, and morphism common to $\CatJ$ and $W_{i,j}(\CatJ)$, and where $\D'$ takes: 
\begin{enumerate}
\item[$\bullet$] the welded actor $ij$ to  \[A_{ij} \coloneqq \D(\{i, j\}) = \KSpan(A_i, A_j);\]
\item[$\bullet$] any constraint morphism of the welded actor $c \leq ij$ to
\[
\D'(c \le ij)
\coloneqq
\begin{cases}
\D(c \leq i)\circ \D(i \leq \{i, j\}), & c\in C_i,\\[4pt]
\D(c \leq j)\circ \D(j \leq \{i, j\}), & c\in C_j;
\end{cases}
\]
\item[$\bullet$] any interaction index $\{ij,k\}$ to $\KSpan(A_{ij},A_k)$, an $\F$-pullback of $A_{ij}$ and $A_k$ over the $\F$-product $\prod_{c \in C_{ij} \cap C_k} \D(c)$,
\item[$\bullet$] any interaction morphism of the welded actor $ij \leq \{ij,k\}$ to the projection \[\pi_{ij}\colon \KSpan(A_{ij},A_k) \rightarrow A_{ij}.\]
\end{enumerate}

In this case write \[\D\xrightarrow{(i,j)}\D^\prime\] to mean that the diagram $\D$ reduces by welding $A_i$ and $A_j$ to the diagram $\D^\prime$.
\end{definition} 

Example~\ref{example:lock3Bar} shows that not every ACM-diagram is reducible by welding. Another actor $A_k$ may map through the product of the constraints it shares with $A_i$ and through the product of the constraints it shares with $A_j$, but not through the product of the constraints that it shares with the welded actor $\KSpan(A_i, A_j)$.  

\begin{definition}
    A \emph{chain of reductions of} an ACM-diagram $\D$ to an ACM-diagram $\D'$ is a finite sequence of ACM-diagrams $(\D_i)_{i \in \{1, \dots, n\}}$ such that \[\D = \D_1 \quad \text{and} \quad \D' = \D_n,\] and for each $i$ in $\{1, \dots, n-1\}$, the diagram $\D_{i+1}$ is a reduction of $\D_i$ by welding two actors of $\D_i$.
\end{definition}

\begin{definition}
An ACM-diagram $\D$ in $\CatC$ is \emph{reducible to a decomposable ACM-diagram} if it \simple\  or if there is a chain of reductions of $\D$ to $\D'$ such that $\D'$ \simple. In this case, the diagram $\D'$ is a \emph{decomposing reduction} of the diagram $\D$. 
\end{definition}

\subsection{$\F$-limits of ACM-diagrams}
Technical issues arise in constructing a category whose morphisms are subsystem inclusions, namely the existence of an $\F$-limit for a given ACM-diagram and uniqueness up to isomorphism of the apex of such an $\F$-limit. These issues motivate restricting both $\F$ and the class of ACM-diagrams.

\begin{definition}
For any diagram $\D\colon \CatJ\to \CatC$ and any cone $\Phi^S$ over $\D$ with apex $S$, define the \emph{cone diagram of $\D$ with apex $S$} to be the extension
\[
\D^S\colon \CatJ^S\to \CatC
\]
obtained as follows:
\begin{enumerate}
\item[$\bullet$] Take $\CatJ^S$ to have the same objects and morphisms as $\CatJ$, together with a new maximal object $s$ and a morphism $x\le s$ for each object $x$ of $\CatJ$.
\item[$\bullet$] Require that \[\D^S|_{\CatJ}=\D, \quad \D^S(s)=S,\] and for each object $x$ of $\CatJ$, take
\[
\D^S(x\le s)=\Phi^S_x\colon \D(x)\to S.
\]
\end{enumerate}
When $\Phi^X$ is an $\F$-limit cone over $\D$, refer to $\D^X$ as the \emph{$\F$-limit diagram of $\D$ with apex $X$}.
\end{definition}

The component morphisms of $\Phi^S$ form a cone over $\D$ if and only if the cone diagram $\D^S$ is commutative.

\begin{lemma}\label{lem:actorsonly}
    Take $\F$ to be a faithful functor and $\D$ to be an ACM-diagram. For any cone $\Phi^S$ over $\D$, the component morphisms into actors uniquely determine $\Phi^S$.
\end{lemma}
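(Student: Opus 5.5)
The plan is to go through the three kinds of object in the actor index category $\CatJ$ (constraint indices, actor indices, interaction indices) and show in each case that the component of $\Phi^S$ is forced by the components $\Phi^S_i\colon S\to A_i$ at actor indices. Concretely, suppose $\Phi^S$ and $\Psi^S$ are cones over $\D$ with the same apex $S$ and with $\Phi^S_i=\Psi^S_i$ for every $i$ in $\Ob_A(\CatJ)$. I will show $\Phi^S_x=\Psi^S_x$ for every object $x$ of $\CatJ$.

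\emph{Constraint indices.} By the definition of an actor index category, $\Ob_C(\CatJ)=\bigcup_i C_i$. So every constraint index $c$ satisfies $c\le i$ for some actor index $i$, and $\D$ sends this relation to the constraint morphism $f_{i,c}\colon A_i\to\D(c)$. The cone condition then gives
\[
\Phi^S_c=f_{i,c}\circ\Phi^S_i=f_{i,c}\circ\Psi^S_i=\Psi^S_c .
\]
This step does not use faithfulness of $\F$. For $c=\star$ it is even simpler: $\D(\star)$ is terminal, so there is only one morphism $S\to\D(\star)$ in any case.

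\emph{Interaction indices.} Take $x=\{i,j\}$. By Definition~\ref{defACMDiag}~(v), $\D(x)=\KSpan(A_i,A_j)$ is an $\F$-pullback whose legs are $p_i=\D(i\le\{i,j\})$ and $p_j=\D(j\le\{i,j\})$. The cone condition says
\[
p_i\circ\Phi^S_x=\Phi^S_i=p_i\circ\Psi^S_x
\quad\text{and}\quad
p_j\circ\Phi^S_x=\Phi^S_j=p_j\circ\Psi^S_x .
\]
I would apply $\F$ to these equations. Since $\F(\KSpan(A_i,A_j))$ is a pullback in $\CatC'$, its legs $\F(p_i)$ and $\F(p_j)$ are jointly monic: this is the uniqueness half of the universal property, applied to the span $\langle \F(\Phi^S_i),\F(\Phi^S_j)\rangle$. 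That span is paired with the image cospan, since both composites equal $\F$ applied to the common composite into $\prod_{c\in C_i\cap C_j}\D(c)$. Joint monicity gives $\F(\Phi^S_x)=\F(\Psi^S_x)$, and faithfulness of $\F$ then gives $\Phi^S_x=\Psi^S_x$.

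The main point needing care is this interaction step. It is the only place where $\F$-pullbacks rather than genuine pullbacks matter, and hence the only place where faithfulness is used. One must check that uniqueness in $\CatC'$ transfers back to $\CatC$. That transfer needs only that $\F$ is faithful on $\Hom(S,\KSpan(A_i,A_j))$; no existence statement in $\CatC$ is required, because both morphisms are already given as components of the cones.
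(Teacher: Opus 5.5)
Your proposal is correct and follows essentially the same route as the paper. The paper also pins down each interaction leg by the uniqueness clause of the pullback $\F(\KSpan(A_i,A_j))$ in $\CatC'$ followed by faithfulness of $\F$, and forces each constraint leg as $f_{i,c}\circ\Phi^S_i$; your comparison of two cones $\Phi^S$ and $\Psi^S$ with equal actor legs is just a more explicit rendering of the paper's phrase ``the unique morphism in $\C$ that $\F$ sends to $\F(\Phi^S_{ij})$.''
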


\begin{proof}
     For any distinct actor indices $i$ and $j$ in $\CatJ$, denote by $\Phi_{i}^S\colon S \rightarrow A_i$ the component morphism of the cone $\Phi^S$ into the actor $A_i$ and by $\Phi_{ij}^S$ the component morphism from $S$ to $\KSpan(A_i, A_j)$. This diagram represents the relevant morphisms and objects:
     

\smallskip

\begin{center}
\begin{tikzcd}
	{\D^S} & S && {\F \circ \D^S} & {\F(S)} \\
	& {\KSpan(A_i, A_j)} &&& {\F(\KSpan(A_i, A_j))} \\
	{A_i} && {A_j} & {\F(A_i)} && {\F(A_j)} \\
	& {\prod_{c \in C_i \cap C_j} \D(c)} &&& {\F(\prod_{c \in C_i \cap C_j} \D(c))}
	\arrow["{\Phi_{ij}^S}"', dashed, from=1-2, to=2-2]
	\arrow["{\Phi_i^S}"', bend left=-30, from=1-2, to=3-1]
	\arrow["{\Phi_j^S}", bend right=-30, from=1-2, to=3-3]
	\arrow["{\F(\Phi_{ij}^S)}"', dashed, from=1-5, to=2-5]
	\arrow["{\F(\Phi_i^S)}"', bend left=-30, from=1-5, to=3-4]
	\arrow["{\F(\Phi_j^S)}", bend right=-30, from=1-5, to=3-6]
	\arrow["{\pi_i}"', from=2-2, to=3-1]
	\arrow["{\pi_j}", from=2-2, to=3-3]
	\arrow["{\F(\pi_i)}"', from=2-5, to=3-4]
	\arrow["{\F(\pi_j)}", from=2-5, to=3-6]
	\arrow["{f_i}"', from=3-1, to=4-2]
	\arrow["\F", maps to, from=3-3, to=3-4]
	\arrow["{f_j}", from=3-3, to=4-2]
	\arrow["{\F(f_i)}"', from=3-4, to=4-5]
	\arrow["{\F(f_j)}", from=3-6, to=4-5]
\end{tikzcd}
\end{center}

\smallskip

The diagram $\D^S$ is commutative so
    \[\Phi_i^S = \pi_i \circ \Phi_{ij}^S \quad \text{and} \quad \Phi_j^S = \pi_j \circ \Phi_{ij}^S.\]
Functoriality of $\F$ implies that
    \begin{equation}\label{eq:faithful:lem}\F(\Phi_{i}^S) = \F(\pi_i) \circ \F(\Phi^S_{ij}) \quad \text{and} \quad \F(\Phi_{j}^S) = \F(\pi_j) \circ \F(\Phi_{ij}^S).\end{equation}
    The span $\langle \pi_i, \pi_j\rangle$ is an $\F$-pullback, so there is a unique cone morphism from $\F(S)$ to $ \F(\KSpan(A_i, A_j))$ that satisfies \eqref{eq:faithful:lem}, hence $\F(\Phi_{ij}^S)$ is the unique morphism that satisfies \eqref{eq:faithful:lem}. Faithfulness of $\F$ implies that $\Phi_{ij}^S$ is the unique morphism in $\C$ that $\F$ sends to $\F(\Phi_{ij}^S)$, so component morphisms into interactions are determined by the morphisms into actors. The diagram $\D^S$ commutes, so the component morphism $\Phi^S_c$ into any constraint $\D(c)$ satisfies
\[
\Phi^S_c = f_{i,c}\circ \Phi^S_i
\]
for any actor index $i$ such that $c$ is in $C_i$.
\end{proof}

\begin{definition}
    A functor $\F$ is \emph{span-extendable} if whenever the cospan $\rangle c_L, c_R\langle$ has an $\F$-pullback $\langle p_L, p_R\rangle$ and the cospan $\rangle c_L \circ f, c_R \circ g\langle$ has an $\F$-pullback $\langle q_L, q_R\rangle$, the unique span morphism $\Phi$ from $\langle \F(f\circ q_L), \F(g\circ q_R)\rangle$ to $\langle \F(p_L), \F(p_R)\rangle$ is $\F(\Psi)$, where \[\Phi\colon \F(\QA) \rightarrow \F(\PA) \quad \text{and} \quad \Psi\colon \QA \rightarrow \PA,\] $\PA$ is the apex of the $\F$-pullback $\langle p_L, p_R\rangle$, $\QA$ is the apex of the $\F$-pullback $\langle q_L, q_R\rangle$, and $\Psi$ is a span morphism. 
\end{definition}

This diagram captures the meaning of the functor $\F$ being \emph{span-extendable}:


\smallskip

\begin{center}
\begin{tikzcd}[cramped,column sep=tiny,row sep=small]
	&& {\hspace{8pt} Q_A \hspace{8pt}} &&&&& {\F(Q_A)} \\
	\\
	\\
	\\
	{\hspace{8pt} Q_L \hspace{8pt}} && {\hspace{8pt} P_A \hspace{8pt}} && {\hspace{8pt} Q_R \hspace{8pt}} & {\F(Q_L)} && {\F(P_A)} && {\F(Q_R)} \\
	& {\hspace{8pt} C_L \hspace{8pt}} && {\hspace{8pt} C_R \hspace{8pt}} &&& {\F(C_L)} && {\F(C_R)} \\
	&& {\hspace{8pt} C_A \hspace{8pt}} &&&&& {\F(C_A)}
	\arrow["{q_L}"', from=1-3, to=5-1]
	\arrow[""{name=0, anchor=center, inner sep=0}, "\Psi"', from=1-3, to=5-3]
	\arrow["{q_R}", from=1-3, to=5-5]
	\arrow["{\F(q_L)}"', from=1-8, to=5-6]
	\arrow[""{name=1, anchor=center, inner sep=0}, "{\exists ! \Phi}", from=1-8, to=5-8]
	\arrow["{\F(q_R)}", from=1-8, to=5-10]
	\arrow["f"', from=5-1, to=6-2]
	\arrow["{p_L}"', from=5-3, to=6-2]
	\arrow["{p_R}", from=5-3, to=6-4]
	\arrow["g", from=5-5, to=6-4]
	\arrow["{\F(f)}"', from=5-6, to=6-7]
	\arrow["{\F(p_L)}"', from=5-8, to=6-7]
	\arrow["{\F(p_R)}", from=5-8, to=6-9]
	\arrow["{\F(g)}", from=5-10, to=6-9]
	\arrow["{c_L}"', from=6-2, to=7-3]
	\arrow["{c_R}", from=6-4, to=7-3]
	\arrow["{\F(c_L)}"', from=6-7, to=7-8]
	\arrow["{\F(c_R)}", from=6-9, to=7-8]
	\arrow["\F", rightarrow, maps to, from=0, to=1]
\end{tikzcd}
\end{center}

\smallskip

If $\C$ and $\C'$ both have terminal objects and $\F\colon \C \rightarrow \C'$ preserves terminal objects, then span-extendability of $\F$ implies Lemma~\ref{lem:productmorphisms} by taking $C_A$ to be a terminal object of $\C$.

\begin{lemma}\label{lem:productmorphisms} If $\F$ is span-extendable, $f\colon A \rightarrow B$ and $g\colon A' \rightarrow B'$ are morphisms in $\CatC$, and $\langle \pi_A, \pi_{A'} \rangle$ with apex $A \times A'$ and $\langle \pi_B, \pi_{B'} \rangle$ with apex $B \times B$ are $\F$-products, then there is a product morphism $f \times g\colon A \times A' \rightarrow B \times B'$ in $\CatC$.
\end{lemma}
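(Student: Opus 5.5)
The plan is to apply the definition of span-extendability with $C_A$ taken to be a terminal object $T$ of $\CatC$, exactly as the sentence preceding the lemma suggests. Set $C_L = B$ and $C_R = B'$. Let $c_L\colon B \to T$ and $c_R\colon B' \to T$ be the unique morphisms. Since $\F$ preserves terminal objects, $\F(T)$ is terminal in $\CatC'$. Hence a pullback of $\rangle \F(c_L),\F(c_R)\langle$ is the same thing as a product of $\F(B)$ and $\F(B')$. The hypothesis that $\langle \pi_B,\pi_{B'}\rangle$ is an $\F$-product therefore says precisely that it is an $\F$-pullback $\langle p_L,p_R\rangle$ of $\rangle c_L,c_R\langle$, with apex $P_A = B\times B'$.

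Next I take the morphisms $f\colon A\to B$ and $g\colon A'\to B'$ in the role of the maps $f,g$ of the definition. The composites $c_L\circ f$ and $c_R\circ g$ are the unique morphisms $A\to T$ and $A'\to T$. By the same reasoning as above, the $\F$-product $\langle \pi_A,\pi_{A'}\rangle$ is an $\F$-pullback $\langle q_L,q_R\rangle$ of $\rangle c_L\circ f, c_R\circ g\langle$, with apex $Q_A = A\times A'$.

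With both $\F$-pullbacks in place, I consider the span $\langle \F(f\circ\pi_A), \F(g\circ\pi_{A'})\rangle$ in $\CatC'$. It is paired with the cospan $\rangle \F(c_L),\F(c_R)\langle$; this holds trivially, because both composites land in the terminal object $\F(T)$. Since $\F(B\times B')$ is a product in $\CatC'$, there is a unique span morphism $\Phi\colon \F(A\times A')\to \F(B\times B')$ satisfying
\[
\F(\pi_B)\circ\Phi = \F(f)\circ\F(\pi_A) \quad \text{and} \quad \F(\pi_{B'})\circ\Phi = \F(g)\circ\F(\pi_{A'}).
\]
Span-extendability then supplies a span morphism $\Psi\colon A\times A'\to B\times B'$ in $\CatC$ with $\F(\Psi)=\Phi$.

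The fact that $\Psi$ is a span morphism means $\pi_B\circ\Psi = f\circ\pi_A$ and $\pi_{B'}\circ\Psi = g\circ\pi_{A'}$. This is exactly the defining property of a product morphism, so I define $f\times g := \Psi$.

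The only point needing care is the identification of $\F$-products with $\F$-pullbacks over a terminal object. This requires checking that pairing with a cospan into the terminal object $\F(T)$ imposes no condition. It also requires that the universal property of a product in $\CatC'$ coincides with that of a pullback over $\F(T)$; this is the standard fact recalled earlier in the paper. Everything else is a direct instantiation of the definition, so I do not anticipate a genuine obstacle. I would also remark that the statement's ``$B\times B$'' should read $B\times B'$.
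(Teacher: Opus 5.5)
Your proposal is correct and is exactly the paper's argument. The paper obtains this lemma in one line, by instantiating span-extendability with $C_A$ a terminal object of $\CatC$ (using the standing assumption that $\F$ preserves terminal objects); you have filled in the details of that instantiation, and your correction of $B\times B$ to $B\times B'$ is right.
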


Lemma \ref{lem:productmorphisms} establishes that a middle arrow in the following commutative diagram exists whenever $f$ and $g$ exist:

\smallskip

%
\begin{center}
    \begin{tikzcd}[cramped]
	A & {A \times A'} & {A'} \\
	B & {B \times B'} & {B'} \\
	& 1
	\arrow["f"', from=1-1, to=2-1]
	\arrow["{\pi_A}"', from=1-2, to=1-1]
	\arrow["{\pi_{A'}}", from=1-2, to=1-3]
	\arrow["{f \times g}"', from=1-2, to=2-2]
	\arrow["g", from=1-3, to=2-3]
	\arrow[from=2-1, to=3-2]
	\arrow["{\pi_B}"', from=2-2, to=2-1]
	\arrow["{\pi_{B'}}", from=2-2, to=2-3]
	\arrow[from=2-3, to=3-2]
\end{tikzcd}
\end{center}

\smallskip

While $\F(f) \times \F(g)$ is the unique morphism that satisfies an equivalent diagram to the above in $\CatC'$, the morphism $f \times g$ is not necessarily unique in $\CatC$ without additionally assuming that $\F$ is faithful.

\begin{lemma}\label{lem:projectionscancel}
Suppose that $\CatC$ and $\CatC'$ have terminal objects that $\F$ preserves.  Take $S$, $Q$, and $T$ to be objects in $\CatC$ with 
$\langle p_L^1,p_R^1\rangle$ an $\F$-product of $S$ and $Q$ with apex $S\times Q$,
$\langle p_L^2,p_R^2\rangle$ an $\F$-product of $S\times Q$ and $T$ with apex $(S\times Q)\times T$,
and $\langle p_L^3,p_R^3\rangle$ an $\F$-product of $Q$ and $T$ with apex $Q\times T$.
If $\langle p_L,p_R\rangle$ is an $\F$-pullback of the cospan $\rangle p_R^1,p_L^3\langle$
with apex $(S\times Q)\times_Q (Q\times T)$, then the following cones over the discrete diagram on $\{S,Q,T\}$ are isomorphic:
\[
(S\times Q)\times T
\quad\text{with legs}\quad
\big(p_L^1\circ p_L^2,\; p_R^1\circ p_L^2,\; p_R^2\big),
\]
\[
(S\times Q)\times_Q (Q\times T)
\quad\text{with legs}\quad
\big(p_L^1\circ p_L,\; p_R^1\circ p_L(=p_L^3\circ p_R),\; p_R^3\circ p_R\big).\]
\end{lemma}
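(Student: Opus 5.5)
The plan is to show that, after applying $\F$, both cones are limit cones over the discrete diagram $\F(S),\F(Q),\F(T)$ in $\CatC'$. That makes them canonically isomorphic in $\CatC'$, and the remaining task is to lift this isomorphism to a cone isomorphism in $\CatC$. In other words, I will show that both cones are $\F$-limits of the discrete diagram on $\{S,Q,T\}$ and then argue that they are isomorphic as cones in $\CatC$.

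\emph{Step 1 (the iterated $\F$-product).} Since $\F$ preserves terminal objects, $\F$ sends the $\F$-products $\langle p_L^1,p_R^1\rangle$ and $\langle p_L^2,p_R^2\rangle$ to genuine product cones in $\CatC'$. The usual associativity argument then shows that $\F((S\times Q)\times T)$ is a limit of the discrete diagram $\F(S),\F(Q),\F(T)$, with legs $\F(p_L^1\circ p_L^2)$, $\F(p_R^1\circ p_L^2)$ and $\F(p_R^2)$. Concretely, given three maps out of an object $Y$ of $\CatC'$:
\begin{itemize}
\item first factor the maps to $\F(S)$ and $\F(Q)$ uniquely through $\F(S\times Q)$;
\item then factor the resulting map together with the map to $\F(T)$ uniquely through $\F((S\times Q)\times T)$.
\end{itemize}
Uniqueness at each stage gives uniqueness of the combined factorization.

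\emph{Step 2 (the fibered $\F$-pullback).} Here $\F(p_L)$ and $\F(p_R)$ form a pullback of $\F(p_R^1)$ and $\F(p_L^3)$, and these are product projections onto $\F(Q)$. Given maps $y_S,y_Q,y_T$ from $Y$ into $\F(S),\F(Q),\F(T)$:
\begin{itemize}
\item build $u\colon Y\to\F(S\times Q)$ from $(y_S,y_Q)$ and $v\colon Y\to\F(Q\times T)$ from $(y_Q,y_T)$;
\item observe that $\F(p_R^1)\circ u=y_Q=\F(p_L^3)\circ v$, so $(u,v)$ factors uniquely through $\F\big((S\times Q)\times_Q(Q\times T)\big)$.
\end{itemize}
For uniqueness, any factorization determines $u$ and $v$ by the product property, and hence is itself unique. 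The identity $p_R^1\circ p_L=p_L^3\circ p_R$ is exactly the commutativity of the pullback square in $\CatC$, so the three listed legs do form a cone. Thus both cones are $\F$-limits of the same discrete diagram.

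\emph{Step 3 (lifting the isomorphism), the main obstacle.} Steps 1 and 2 yield a unique cone isomorphism $\Phi$ between the $\F$-images in $\CatC'$; the difficulty is producing it as $\F(\Psi)$ for a cone isomorphism $\Psi$ in $\CatC$.

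I would first invoke cone-tightness of $\F$, taking it as part of the standing hypotheses on $\F$ in this section. Cone-tightness gives the required $\Psi$ directly, because both cones are $\F$-limits of the same diagram.

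If only span-extendability is available, I would instead build the two directions separately and then check they are inverse:
\begin{itemize}
\item \emph{From the pullback apex to $(S\times Q)\times T$:} Lemma~\ref{lem:productmorphisms} lifts the map $\F(p_L)$ to $\F(S\times Q)$ paired with $\F(p_R^3\circ p_R)$ to $\F(T)$.
\item \emph{Back:} lift the morphisms into $S\times Q$ and $Q\times T$, then apply span-extendability to the cospan $\rangle p_R^1,p_L^3\langle$ to obtain a span morphism into the $\F$-pullback.
\end{itemize}
Checking that these lifts are mutually inverse in $\CatC$ needs faithfulness of $\F$. This reduction to cone-tightness, or to faithfulness together with span-extendability, is where I expect the real work to lie.
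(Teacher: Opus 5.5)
Your argument is correct, but it is organized differently from the paper's. The paper never shows that either cone is an $\F$-limit. Instead, it imports from Lemma~5.4 of the earlier Weisbart--Yassine work an isomorphism $\Phi\colon (S\times Q)\times T\to (S\times Q)\times_Q(Q\times T)$ in $\CatC$ that is already a span morphism $\langle p_L^2,p_R^2\rangle\to\langle p_L,p_R^3\circ p_R\rangle$. It then checks only the $Q$-leg, by composing $p_L^2=p_L\circ\Phi$ with $p_R^1$ and using $p_R^1\circ p_L=p_L^3\circ p_R$; the inverse is handled the same way.

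You instead prove directly (Steps 1 and 2) that both cones become product cones on $\F(S),\F(Q),\F(T)$ in $\CatC'$. You then descend the unique comparison isomorphism in one step by cone-tightness, which by definition yields a cone isomorphism in $\CatC$. This completes the proof, and your span-extendability fallback is not needed.

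The paper's route is shorter and records the extra fact $p_L^2=p_L\circ\Phi$. Yours is self-contained and isolates exactly which property of $\F$ lets an isomorphism in $\CatC'$ descend to $\CatC$. It also extends verbatim to any finite family of factors, which is the form actually used in \eqref{eq:lem:projectionscancel} and in Lemma~\ref{lem:CanAlwaysWeldSimple}.

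One bookkeeping point: cone-tightness is not yet a standing hypothesis where the lemma appears. It is imposed only after the definition of nondegenerate $\F$-pullbacks, so you should list it as an added hypothesis rather than cite it as standing. This is not a weakness relative to the paper, because some tightness hypothesis is genuinely needed. Suppose $\CatC$ has two objects with the same $\F$-image that can both serve as the apices in question but admit no morphism from one to the other. Then the two cones are not isomorphic in $\CatC$, even though $\F$ preserves terminal objects and $\F$-pullbacks exist. The paper's proof likewise gets its isomorphism in $\CatC$ from the cited prior lemma, not from the hypotheses listed in the statement. Every subsequent application of the lemma in a proof occurs after cone-tightness has been assumed.
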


The objects and morphisms involved in Lemma \ref{lem:projectionscancel} are arranged like this:

\smallskip

\begin{center}
\begin{tikzcd}[cramped, column sep=tiny]
	& \scalebox{0.9}{$(S \times Q) \times T$} &&& \scalebox{0.9}{$(S \times Q) \times_Q (Q \times T)$} \\
	\scalebox{0.9}{$S \times Q$} &&& \scalebox{0.9}{$S \times Q$} && \scalebox{0.9}{$Q \times T$} \\
	\scalebox{0.9}{$S$} & \scalebox{0.9}{$Q$} & \scalebox{0.9}{$T$} & \scalebox{0.9}{$S$} & \scalebox{0.9}{$Q$} & \scalebox{0.9}{$T$} \\
	\scalebox{0.9}{$S_R = Q_L \cong 1$} && \scalebox{0.9}{$Q_R = T_L \cong 1$} & \scalebox{0.9}{$S_R = Q_L \cong 1$} && \scalebox{0.9}{$Q_R = T_L \cong 1$}
	\arrow["\Phi", from=1-2, to=1-5]
	\arrow["{p^2_L}"', from=1-2, to=2-1]
	\arrow["{p^2_R}"{description}, from=1-2, to=3-3]
	\arrow["{p_L}"', from=1-5, to=2-4]
	\arrow["{p_R}", from=1-5, to=2-6]
	\arrow["{p_L^1}"', from=2-1, to=3-1]
	\arrow["{p_R^1}", from=2-1, to=3-2]
	\arrow["{p_L^1}"', from=2-4, to=3-4]
	\arrow["{p_R^1}", from=2-4, to=3-5]
	\arrow["{p_L^3}"', from=2-6, to=3-5]
	\arrow["{p_R^3}", from=2-6, to=3-6]
	\arrow["{s_R}", from=3-1, to=4-1]
	\arrow["{q_L}"', from=3-2, to=4-1]
	\arrow["{q_R}", from=3-2, to=4-3]
	\arrow["{t_L}"', from=3-3, to=4-3]
	\arrow["{s_R}", from=3-4, to=4-4]
	\arrow["{q_L}"', from=3-5, to=4-4]
	\arrow["{q_R}", from=3-5, to=4-6]
	\arrow["{t_L}"', from=3-6, to=4-6]
\end{tikzcd}
\end{center}

\smallskip

\begin{proof}

Under the standing hypothesis that $\CatC$ and $\CatC'$ have terminal objects that $\F$ preserves, prior work
\cite[Lemma~5.4]{WY} constructs a span isomorphism
\[
\Phi\colon (S\times Q)\times T \;\to\; (S\times Q)\times_Q (Q\times T)
\]
between the spans
\[
\langle p_L^1\circ p_L^2,\; p_R^2\rangle
\qquad\text{and}\qquad
\langle p_L^1\circ p_L,\; p_R^3\circ p_R\rangle.
\]
Moreover, $\Phi$ is constructed as a span morphism
\[
\langle p_L^2,\; p_R^2\rangle \;\to\; \langle p_L,\; p_R^3\circ p_R\rangle,
\]
so in particular
\begin{equation}\label{eq:Lem:Assoc}
p_L^2 = p_L\circ \Phi
\qquad\text{and}\qquad
p_R^2 = (p_R^3\circ p_R)\circ \Phi.
\end{equation}
It remains to show that $\Phi$ respects the $Q$-leg.  Use the left equality of \eqref{eq:Lem:Assoc} to obtain the equality
\begin{equation}\label{eq:Lem:AssocB}
p_R^1\circ p_L^2
= (p_R^1\circ p_L)\circ \Phi.
\end{equation}
Since $\langle p_L,p_R\rangle$ is an $\F$-pullback of the cospan $\rangle p_R^1,p_L^3\langle$,
\begin{equation}\label{eq:Lem:AssocC}
p_R^1\circ p_L = p_L^3\circ p_R.
\end{equation}
Equations \eqref{eq:Lem:AssocB} and \eqref{eq:Lem:AssocC} together imply that
\[
p_R^1\circ p_L^2
= (p_L^3\circ p_R)\circ \Phi,
\]
so $\Phi$ is a cone morphism over $\{S,Q,T\}$.
Since $\Phi$ is a span isomorphism  (hence an isomorphism in $\CatC$), it has an inverse
$\Phi^{-1}$, which by a similar argument is also a cone morphism, so the two cones are isomorphic.
\end{proof}

Apply Lemma~\ref{lem:projectionscancel} with the roles of $S$ and $T$ permuted to obtain a cone isomorphism
\[
(S\times Q)\times T \cong S\times (Q\times T)
\]
over the discrete diagram on $\{S,Q,T\}$.
Thus $\F$-products are associative up to cone isomorphism, so write $S\times Q\times T$.

\begin{remark}
The conclusion of Lemma~\ref{lem:projectionscancel} does not require terminal objects.  The assumptions that $\CatC$ and $\CatC'$ have terminal objects that $\F$ preserves allows the use of the prior work to provide an explicit span isomorphism
\[
\Phi\colon (S\times Q)\times T \;\to\; (S\times Q)\times_Q (Q\times T)
\]
that is already a cone isomorphism on the $S$- and $T$-legs. The proof above only checks that $\Phi$ also respects the $Q$-leg.
\end{remark}

If $\mathcal{A}$ and $\mathcal{B}$ are two sets of objects in $\C$, then Lemma~\ref{lem:projectionscancel} implies that there is a cone isomorphism \begin{equation}\label{eq:lem:projectionscancel}\prod_{X \in \mathcal{A} \cup \mathcal{B}} X \cong \prod_{A \in \mathcal{A}} A\times_{\prod_{Y \in \mathcal{A} \cap \mathcal{B}} Y} \prod_{B \in \mathcal{B}} B.\end{equation}

\begin{definition}
       A category $\C$ \emph{has nondegenerate} $\F$-pullbacks if it has $\F$-pullbacks, $\F$ preserves terminal objects, and if for any $\F$-pullback $P$ whose feet are not terminal objects, the apex of $P$ is also not a terminal object. 
\end{definition}

Proposition~\ref{prop:redtosimpthenFlimit} is a statement about the existence of $\F$-limits, namely, it is the formal statement that if a diagram is reducible to decomposable, then that diagram has an $\F$-limit. Proving Lemmata~\ref{lem:CanAlwaysWeldSimple} and \ref{lem:WeldedLimitsAreLimits} are critical steps in proving Proposition~\ref{prop:redtosimpthenFlimit}. For the remainder of this section, assume $\C$ has non-degenerate $\F$-pullbacks and $\F$ is span-extendable and cone-tight.

\begin{lemma}\label{lem:CanAlwaysWeldSimple}
    For any ACM-diagram $\D\colon \CatJ \to \C$ with distinct actor indices $i_0$ and $j_0$,
if $\D$ \simple, then it has a reduction $\D'\colon W_{i_0,j_0}(\CatJ)\to \C$ obtained by welding $A_{i_0}$ and $A_{j_0}$.
Moreover, $\D'$ is an ACM-diagram that \simple.
\end{lemma}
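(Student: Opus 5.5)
The plan is to build $\D'$ on $W_{i_0,j_0}(\CatJ)$ explicitly and then check the ACM axioms and the decomposition property, with the decomposition property doing most of the work.

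\emph{Construction.} Set $A_{i_0j_0}=\KSpan(A_{i_0},A_{j_0})=\D(\{i_0,j_0\})$ with projections $\pi_{i_0},\pi_{j_0}$. Its constraint morphisms are $f_{i_0,c}\circ\pi_{i_0}$ for $c\in C_{i_0}$ and $f_{j_0,c}\circ\pi_{j_0}$ for $c\in C_{j_0}$. These agree on $C_{i_0}\cap C_{j_0}$ because $\KSpan(A_{i_0},A_{j_0})$ is paired with the cospan of $\F$-product morphisms in Definition~\ref{defACMDiag}(v). On all indices not involving $i_0$ or $j_0$, $\D'$ agrees with $\D$. Conditions (i)--(iii) of Definition~\ref{defACMDiag} then hold, with (iii) following from nondegeneracy of $\F$-pullbacks. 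What remains is condition (iv) for pairs $(i_0j_0,k)$, together with $\F$-pullbacks $\KSpan(A_{i_0j_0},A_k)$ for (v). Those pullbacks exist because $\C$ has $\F$-pullbacks once (iv) is available.

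\emph{Key step.} The main step is to produce, for the welded actor, an $\F$-product morphism $A_{i_0j_0}\to\prod_{c\in\Ext[i_0j_0:W]}\D(c)$, where
\[
\Ext[i_0j_0:W]=(C_{i_0}\cup C_{j_0})\cap C^\perp_{i_0j_0},
\]
and $C^\perp_{i_0j_0}$ is the union of the $C_k$ for $k\neq i_0,j_0$.
\begin{enumerate}
\item Write $E_i=\Ext[i_0:\CatJ]$ and $E_j=\Ext[j_0:\CatJ]$. Since $C_{i_0}\cap C_{j_0}\subseteq E_i\cap E_j$, both $E_i$ and $E_j$ contain $\Ext[i_0j_0:W]\cap C_{i_0}$ and $\Ext[i_0j_0:W]\cap C_{j_0}$ respectively.
\item The hypothesis gives $\F$-product morphisms $f_{i_0}\colon A_{i_0}\to\prod_{E_i}$ and $f_{j_0}\colon A_{j_0}\to\prod_{E_j}$.
\item Lemma~\ref{lem:productmorphisms} gives $f_{i_0}\times f_{j_0}\colon A_{i_0}\times A_{j_0}\to\prod_{E_i}\times\prod_{E_j}$.
\item Span-extendability of $\F$, applied to the cospan into $\prod_{C_{i_0}\cap C_{j_0}}$, gives a morphism $A_{i_0j_0}\to\prod_{E_i}\times_{\prod_{E_i\cap E_j}}\prod_{E_j}$. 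By \eqref{eq:lem:projectionscancel}, the target is cone isomorphic to $\prod_{E_i\cup E_j}$.
\item Compose with the $\F$-projection $\prod_{E_i\cup E_j}\to\prod_{\Ext[i_0j_0:W]}$. This projection exists because the larger product splits as an $\F$-product of the smaller one with the remaining factors, by the associativity consequence of Lemma~\ref{lem:projectionscancel}. The composite is the required $\F$-product morphism, and its legs are the welded constraint morphisms.
\end{enumerate}
This step is the main obstacle. The legs must match, which needs cone-tightness to identify the transported cone isomorphisms. There is also a subtlety when $E_i\cap E_j$ is smaller than $C_{i_0}\cap C_{j_0}$. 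In that case I would first project $\prod_{E_i}$ and $\prod_{E_j}$ onto their parts lying in $C_{i_0}\cap C_{j_0}$, so that the relevant cospan is exactly the one defining $\KSpan$.

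\emph{Remaining actors and axioms.} For the other actors $k$, note that $\Ext[k:W]=\Ext[k:\CatJ]$: a constraint of $A_k$ shared with $A_{i_0}$ or $A_{j_0}$ is now shared with $A_{i_0j_0}$. So the old $\F$-product morphisms $f_k$ still serve, and $\D'$ decomposes external constraints. Condition (iv) for a pair $(i_0j_0,k)$ then follows by composing each actor's external $\F$-product morphism with the $\F$-projection onto the subset $(C_{i_0}\cup C_{j_0})\cap C_k\subseteq\Ext$; the same holds for pairs $(k,l)$, and in fact (iv) follows from decomposition of external constraints in general. Finally, choose the $\F$-pullbacks $\KSpan(A_{i_0j_0},A_k)$ with their projections. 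Commutativity of $\D'$ is immediate because $\CatJ$ is thin and every new arrow was defined as a composite of existing arrows in $\D$.
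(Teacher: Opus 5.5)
Your proposal is correct and follows essentially the same route as the paper. Both arguments take $\KSpan(A_{i_0},A_{j_0})$ as the welded actor and use span-extendability to map it into $\prod_{E_i}\times_{\prod_{C_{i_0}\cap C_{j_0}}}\prod_{E_j}\cong\prod_{E_i\cup E_j}$. They then project onto $\prod_{\Ext[i_0j_0:W_{i_0,j_0}(\CatJ)]}$, reuse the $f_k$ via $\Ext[k:W_{i_0,j_0}(\CatJ)]=\Ext[k:\CatJ]$, and obtain condition (iv) and the new interactions by further projections.

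A few of your remarks are unnecessary. The ``subtlety'' you flag cannot occur, since $E_i\cap E_j=C_{i_0}\cap C_{j_0}$ always holds (the paper checks this first). The product morphism from Lemma~\ref{lem:productmorphisms} in your step 3 is never used. Cone-tightness is not needed, because Lemma~\ref{lem:projectionscancel} already supplies a cone isomorphism in $\C$.
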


Preservation of the ACM-diagram structure under welding requires the additional assumption of non-degeneracy of $\F$-pullbacks, hence the introduction of the new definition.

\begin{proof}

Decompose unions and intersections to obtain the equalities
\begin{align*}
\Ext[i_0:\CatJ] \cap \Ext[j_0:\CatJ]
&= \Big(C_{i_0} \cap \bigcup_{j \neq i_0} C_j\Big) \cap \Big(C_{j_0} \cap \bigcup_{j \neq j_0} C_j\Big) \\ 
&= \Big(C_{i_0} \cap \Big(C_{j_0} \cup \bigcup_{j \neq i_0, j_0} C_j\Big)\Big) \cap \Big(C_{j_0} \cap \Big(C_{i_0} \cup \bigcup_{j \neq i_0, j_0} C_j\Big)\Big) \\ 
&= \Big(\Big(C_{i_0} \cap C_{j_0}\Big) \cup \Big(C_{i_0} \cap \bigcup_{j \neq i_0, j_0} C_j\Big)\Big) \\ & \hspace{2.2in} \cap \Big(\Big(C_{i_0} \cap C_{j_0}\Big) \cup \Big(C_{j_0} \cap \bigcup_{j \neq i_0, j_0} C_j\Big)\Big) \\ 
&= \Big(C_{i_0} \cap C_{j_0}\Big) \cup \Big(\Big(C_{i_0} \cap \bigcup_{j \neq i_0, j_0} C_j\Big) \cap \Big(C_{j_0} \cap \bigcup_{j \neq i_0, j_0} C_j\Big)\Big) \\ 
&= \Big(C_{i_0} \cap C_{j_0}\Big) \cup \Big(C_{i_0} \cap C_{j_0} \cap  \bigcup_{j \neq i_0, j_0} C_j\Big) = C_{i_0} \cap C_{j_0},
\end{align*}
hence
\begin{equation}\label{eq:extconprodA}
\Ext[i_0:\CatJ] \cap \Ext[j_0:\CatJ]= C_{i_0} \cap C_{j_0}.
\end{equation}

Similarly,
\begin{align}\label{eq:Lem:weldunioncon}
\Ext[i_0:\CatJ] \cup \Ext[j_0:\CatJ]
&= \Big(C_{i_0} \cap \bigcup_{j \neq i_0} C_j\Big) \cup \Big(C_{j_0} \cap \bigcup_{j \neq j_0} C_j\Big) \notag\\
&= \Big(C_{i_0} \cap C_{j_0}\Big) \cup \Big((C_{i_0}\cup C_{j_0})\cap \bigcup_{b\neq i_0,j_0} C_b\Big) \notag\\
&= (C_{i_0}\cap C_{j_0}) \cup \Ext[i_0j_0:\CatJ'].
\end{align}
Equation~\eqref{eq:lem:projectionscancel} together with \eqref{eq:extconprodA} and~\eqref{eq:Lem:weldunioncon} implies that there is an $\F$-product isomorphism
\begin{equation}\label{eq:welded-prod-iso}
\prod_{c\in\Ext[i_0:\CatJ]}\D(c) \times_{\prod_{c\in C_{i_0}\cap C_{j_0}}\D(c)}
\prod_{c\in\Ext[j_0:\CatJ]}\D(c)
\isoarrow
\prod_{c\in \Ext[i_0:\CatJ]\cup \Ext[j_0:\CatJ]}\D(c).
\end{equation}
\\
Define $A_{i_0j_0}$ to be the apex of the interaction $\KSpan(A_{i_0}, A_{j_0})$. Since $\F$ is span-extendable, there is an induced span morphism
\[
\psi\colon A_{i_0j_0}\to 
\prod_{c\in \Ext[i_0:\CatJ]}\D(c)\times_{\prod_{c\in C_{i_0}\cap C_{j_0}}\D(c)}
\prod_{c\in \Ext[j_0:\CatJ]}\D(c).
\]
Compose $\psi$ with the isomorphism in \eqref{eq:welded-prod-iso} to obtain an $\F$-product morphism
\[
\tilde f_{i_0j_0}\colon A_{i_0j_0}\to \prod_{c\in \Ext[i_0:\CatJ]\cup \Ext[j_0:\CatJ]}\D(c).
\]
\\
Denote by $\CatJ'$ the welded index category $ W_{i_0,j_0}(\CatJ)$. Equation~\eqref{eq:Lem:weldunioncon} implies that
\[
\Ext[i_0j_0:\CatJ']
=(C_{i_0}\cup C_{j_0})\cap \bigcup_{b\neq i_0,j_0} C_b
\subseteq \Ext[i_0:\CatJ]\cup \Ext[j_0:\CatJ],
\]
and associativity of $\F$-products as a consequence of Lemma~\ref{lem:projectionscancel} determines an $\F$-product morphism
\[
\pi_\cup\colon 
\prod_{c\in \Ext[i_0:\CatJ]\cup \Ext[j_0:\CatJ]}\D(c)
\to
\prod_{c\in \Ext[i_0j_0:\CatJ']}\D(c).
\]
Define
\begin{equation}\label{eq:MorphismWeldedActorExternalExists}
f_{i_0j_0}:=\pi_\cup\circ \tilde f_{i_0j_0}\colon 
A_{i_0j_0}\to \prod_{c\in \Ext[i_0j_0:\CatJ']}\D(c).
\end{equation}
This is the required $\F$-product morphism for the welded actor. The equality \[\Ext[b:\CatJ']=\Ext[b:\CatJ]\] together with the assumption that $\D$ decomposes external constraints determines an $\F$-product morphism
\begin{equation}\label{eq:Otheractorsdecomposeexternal}
f_b\colon A_b\to \prod_{c\in \Ext[b:\CatJ']}\D(c).
\end{equation}

Finally, construct the interaction between $A_{i_0j_0}$ and $A_b$.
Since $C_{i_0j_0}\cap C_b$ is a subset of $\Ext[i_0j_0:\CatJ']$, associativity of $\F$-products as a consequence of Lemma~\ref{lem:projectionscancel} determines an $\F$-product projection
\[
\prod_{c\in \Ext[i_0j_0:\CatJ']}\D(c)\to \prod_{c\in C_{i_0j_0}\cap C_b}\D(c),
\]
so $f_{i_0j_0}$ induces a morphism
\begin{equation}\label{eq:WeldedInteractionsExist1}
A_{i_0j_0}\to \prod_{c\in C_{i_0j_0}\cap C_b}\D(c).
\end{equation}
Similarly, since \[\Ext[b:\CatJ']=\Ext[b:\CatJ]\supseteq C_b\cap(C_{i_0}\cup C_{j_0})=C_b\cap C_{i_0j_0},\]
associativity of $\F$-products as a consequence of Lemma~\ref{lem:projectionscancel} determines an $\F$-product projection
\[
\prod_{c\in \Ext[b:\CatJ']}\D(c)\to \prod_{c\in C_{i_0j_0}\cap C_b}\D(c),
\]
so $f_b$ induces a morphism
\begin{equation}\label{eq:WeldedInteractionsExist2}
A_b\to \prod_{c\in C_{i_0j_0}\cap C_b}\D(c).
\end{equation}
The interaction $\KSpan(A_{i_0j_0},A_b)$ exists as an $\F$-pullback of the cospan determined by the morphisms \eqref{eq:WeldedInteractionsExist1} and \eqref{eq:WeldedInteractionsExist2}.

Define $\D'\colon \CatJ' \rightarrow \C$ to be a reduction of $\D$ by welding $A_{i_0}$ and $A_{j_0}$. Every arrow of $\D'$ is either an arrow of $\D$ or a composite of such arrows with projections from an $\F$-pullback, so $\D'$ respects identities and composition. Since every interaction $\KSpan(A_{i_0 j_0}, A_b)$ exists, $\D'$ is a well-defined functor.

Verify the ACM conditions to show that $\D'$ is an ACM-diagram.  Conditions (i) and (ii) of Definition~\ref{defACMDiag} hold for $\D'$ because they hold for $\D$ and the welding does not alter any part of the diagram away from the new object $i_0j_0$. Condition (v) holds by construction of the new interactions as $\F$-pullbacks. Since $\C$ has nondegenerate $\F$-pullbacks and $\D$ sends no object other than $\star$ to a terminal object, the new apices introduced by welding and by the new interactions are not terminal, so (iii) holds for $\D'$.
Condition (iv) follows from items~\eqref{eq:WeldedInteractionsExist1} and~\eqref{eq:WeldedInteractionsExist2}. Thus $\D'$ is an ACM-diagram.

The construction of the morphisms~\eqref{eq:MorphismWeldedActorExternalExists} and~\eqref{eq:Otheractorsdecomposeexternal} demonstrates that every actor of $\D'$ admits an $\F$-product morphism to the product of its external constraints, so $\D'$ decomposes external constraints.
   
\end{proof}

Commutativity of this diagram captures the data of Lemma \ref{lem:WeldedLimitsAreLimits}:

\smallskip

\begin{center}
\begin{tikzcd}
	&& S \\
	&& L \\
	&& {A_{ijk}} \\
	& {A_{ij}} && {A_{jk}} \\
	{A_i} && {A_j} && {A_k}
	\arrow["\Phi"', dashed, from=1-3, to=2-3]
	\arrow["{\Phi^S_{ijk}}", bend left=30, from=1-3, to=3-3]
	\arrow["{\Phi^S_{ij}}"', bend right=20, from=1-3, to=4-2]
	\arrow["{\Phi^S_k}", bend left=30, from=1-3, to=5-5]
	\arrow["{\Phi^L_{ijk}}"', from=2-3, to=3-3]
	\arrow["{\rho_{ij}}"',{pos=0.3}, from=3-3, to=4-2]
	\arrow["{\phi_{jk}}", dashed, from=3-3, to=4-4]
	\arrow["{\rho_k}"{pos=0.3}, bend left=30, from=3-3, to=5-5]
	\arrow[from=4-2, to=5-1]
	\arrow[from=4-2, to=5-3]
	\arrow[from=4-4, to=5-3]
	\arrow[from=4-4, to=5-5]
\end{tikzcd}
\end{center}

\smallskip

\begin{lemma}\label{lem:WeldedLimitsAreLimits}
Take $\D$ to be an ACM-diagram that \simple\ of shape $\CatJ$ in $\C$ with actors $A_i$ and $A_j$ and take $\D'$ to be a reduction of $\D$ by welding $A_i$ and $A_j$. If $\D'$ has an $\F$-limit $\Phi^L$, then $\D$ has an $\F$-limit $\Psi^L$ with identical apex $L$ so that for any $x$ in both $\CatJ$ and $W_{i,j}(\CatJ)$, 
\[
\Psi^L_x = \Phi^L_x.
\] 
Furthermore, \[\Psi^L_i = \rho_i\circ \Phi^L_{ij} \quad \text{and} \quad \Psi^L_j = \rho_j \circ \Phi^L_{ij} \] where $\langle \rho_i, \rho_j \rangle$ is the span $\KSpan(A_i, A_j)$.
\end{lemma}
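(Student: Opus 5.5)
We first build a cone $\Psi^L$ over $\D$ with apex $L$. For every object $x$ common to $\CatJ$ and $W_{i,j}(\CatJ)$ (actors $A_b$ with $b\ne i,j$, all constraints, and interactions $\{b,b'\}$ with $b,b'\notin\{i,j\}$) we set $\Psi^L_x=\Phi^L_x$. We also set
\[
\Psi^L_i=\rho_i\circ\Phi^L_{ij},\qquad \Psi^L_j=\rho_j\circ\Phi^L_{ij},\qquad \Psi^L_{\{i,j\}}=\Phi^L_{ij}.
\]
The interactions $\{i,b\}$ and $\{j,b\}$ with $b\ne i,j$ are not objects of $W_{i,j}(\CatJ)$. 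For these, we obtain the leg from the $\F$-pullback property of $\KSpan(A_i,A_b)$. The pair $\F(\Psi^L_i),\F(\Phi^L_b)$ forms a cone over $\F$ of the cospan into $\prod_{c\in C_i\cap C_b}\D(c)$, because both composites equal the product of the constraint legs $\F(\Phi^L_c)$. Hence there is a unique morphism $\F(L)\to\F(\KSpan(A_i,A_b))$ through which this cone factors. Span-extendability (with $f=\Phi^L_{ij}$-type maps) or cone-tightness should lift it to a morphism in $\C$.

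Once the lift is available, the cone identities are routine checks on generating relations. For $c\le i$ we have $f_{i,c}\circ\rho_i\circ\Phi^L_{ij}=\D'(c\le ij)\circ\Phi^L_{ij}=\Phi^L_c$ by the definition of the welded constraint morphisms. The relations $i\le\{i,j\}$ and $i\le\{i,b\}$ hold by construction.

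The main obstacle is the universal property. Let $\Theta^S$ be any cone over $\D$. We produce a cone over $\D'$ by keeping the common legs and taking $\Theta^S_{\{i,j\}}$ as the leg into $A_{ij}$. For an interaction $\{ij,b\}$ we need a leg into $\KSpan(A_{ij},A_b)$. Its $\F$-image comes from the pullback property of $\F(\KSpan(A_{ij},A_b))$, applied to the legs $\F(\Theta^S_{\{i,j\}})$ and $\F(\Theta^S_b)$. These agree on $\prod_{c\in C_{ij}\cap C_b}\D(c)$ because every constraint leg factors compatibly. Here we use that $\D$ \simple, together with the associativity of $\F$-products coming from Lemma~\ref{lem:projectionscancel}, so that a morphism into the product is determined by its components. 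The $\F$-morphism is then lifted to $\C$ as above.

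Next we apply $\F$ throughout. Since $\F(\Phi^L)$ is a limit of $\F\circ\D'$, there is a unique cone morphism $\F(S)\to\F(L)$. It is also a cone morphism for $\F(\Psi^L)$ over $\F\circ\D$: the components into $A_i$ and $A_j$ agree because $\F(\Theta^S_i)=\F(\rho_i)\circ\F(\Theta^S_{\{i,j\}})$. Uniqueness transfers as well. Any cone morphism into $\F(\Psi^L)$ is a cone morphism into $\F(\Phi^L)$, because the extra legs of $\F\circ\D'$ into $\F(\KSpan(A_{ij},A_b))$ are determined by the actor legs, by the same argument as in Lemma~\ref{lem:actorsonly} applied in $\C'$. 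Therefore $\F(\Psi^L)$ is a limit of $\F\circ\D$, and so $\Psi^L$ is an $\F$-limit of $\D$ with the stated components. The delicate points are the lifts of the interaction legs to $\C$ and the compatibility check on shared constraint products, and we expect these to consume most of the effort.
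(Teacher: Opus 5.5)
Your overall plan matches the paper's. You define $\Psi^L$ from $\Phi^L$; your explicit choice $\Psi^L_{\{i,j\}}=\Phi^L_{ij}$ is a detail the paper leaves implicit. You turn a cone over $\D$ into a cone over $\D'$ using the pullback properties of the interactions, and you transfer existence and uniqueness of the mediating morphism. The universal-property step, however, is carried out in the wrong category. By definition, $\Psi^L$ is an $\F$-limit when $\F(\Psi^L)$ is a limit of $\F\circ\D$ in $\C'$. So you must test it against an arbitrary cone $\Theta^S$ over $\F\circ\D$, whose apex $S$ is any object of $\C'$ and whose legs are any morphisms of $\C'$. You instead take $\Theta^S$ to be a cone over $\D$ in $\C$ and only produce mediating morphisms out of $\F(S)$. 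This verifies universality only for cones in the image of $\F$, which is not enough. For $\F\colon\SurjSub\to\Diff$, a one-point manifold with suitable constant legs is a cone over $\F\circ\D$ in $\Diff$, and it is the image of no cone in $\SurjSub$ unless every object of $\D$ is a point. Working in $\C$ also forces the lifting step (``the $\F$-morphism is then lifted to $\C$''), which you do not justify and which is unnecessary. The paper works entirely in $\C'$: it obtains $\Phi^S_{ij}\colon S\to\F(A_{ij})$ and $\Phi^S_{ijk}\colon S\to\F(A_{ijk})$ directly from the pullback universal properties of $\F(\KSpan(A_i,A_j))$ and $\F(\KSpan(A_{ij},A_k))$, with no lift. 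Your argument goes through once it is run in $\C'$ this way; your uniqueness paragraph is already essentially in that form.

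The second gap is the construction of the legs $\Psi^L_{\{a,k\}}\colon L\to\KSpan(A_a,A_k)$ for $a$ in $\{i,j\}$ and $k$ not in $\{i,j\}$. These must be morphisms of $\C$, because $\Psi^L$ is a cone over $\D$. You obtain a morphism $\F(L)\to\F(\KSpan(A_a,A_k))$ in $\C'$ and say that span-extendability or cone-tightness ``should'' lift it, but neither applies:
\begin{itemize}
\item cone-tightness only concerns the comparison isomorphism between two $\F$-limits of the same diagram;
\item span-extendability only lifts comparison morphisms between apices of $\F$-pullbacks, whereas $L$ is the apex of an $\F$-limit of $\D'$.
\end{itemize}
The missing idea is to factor through a leg of $\Phi^L$ that already lies in $\C$, namely $\Phi^L_{ijk}\colon L\to A_{ijk}$, the apex of $\KSpan(A_{ij},A_k)$. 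The paper applies span-extendability to the morphism of shared-constraint cospans induced by $\rho_a$ and $\id_{A_k}$. This gives a span morphism $\phi_{ak}\colon A_{ijk}\to A_{ak}$ in $\C$, and the paper sets $\Psi^L_{ak}:=\phi_{ak}\circ\Phi^L_{ijk}$. The cone identities for $a\le\{a,k\}$ and $k\le\{a,k\}$ then follow from $\phi_{ak}$ being a span morphism, together with $\rho_{ij}\circ\Phi^L_{ijk}=\Phi^L_{ij}$ and $\rho_k\circ\Phi^L_{ijk}=\Phi^L_k$.
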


\begin{proof}

Take $\langle \rho_i,\rho_j\rangle$ to be the span $\KSpan(A_i,A_j)$ with apex $A_{ij}$. For each actor index $k$ that is not in $\{i,j\}$ and each $a$ in $\{i,j\}$, take
\[
\langle \rho_{ij},\rho_k\rangle := \KSpan(A_{ij},A_k)
\qquad\text{and}\qquad
\langle \rho_a,\rho_k\rangle := \KSpan(A_a,A_k),
\]
with respective apices $A_{ijk}$ and $A_{ak}$. The morphism $\rho_a\colon A_{ij}\to A_a$ and the identity on $A_k$ determine a morphism of cospans from the shared-constraint cospan of $(A_{ij},A_k)$ to the shared-constraint cospan of $(A_a,A_k)$. Since $\F$ is span-extendable, this morphism of cospans induces a unique span morphism
\[
\phi_{ak}\colon A_{ijk}\to A_{ak}
\]
satisfying commutativity with the pullback projections.

Construct a cone $\Psi^L$ over $\D$ with apex $L$ as follows.
\begin{enumerate}
\item If $x$ is an object of $\CatJ$ that is also an object of $W_{i,j}(\CatJ)$, set \[\Psi^L_x:=\Phi^L_x.\]
\item Define the actor legs
\[
\Psi^L_i:=\rho_i\circ \Phi^L_{ij}
\quad \text{and} \quad
\Psi^L_j:=\rho_j\circ \Phi^L_{ij}.
\]
\item For each $k$ not in $\{i,j\}$ and each $a$ in $\{i,j\}$, define the interaction leg
\[
\Psi^L_{ak}:=\phi_{ak}\circ \Phi^L_{ijk}\colon L\to A_{ak}.
\]
\end{enumerate}

Denote by $\CatJ'$ the category $W_{i,j}(\CatJ)$. For any cone $\Psi^S$ over $\F\circ \D$ with apex $S$, construct a cone $\Phi^S$ over $\F\circ \D'$ with the same apex $S$ as follows.
\begin{enumerate}
\item If $x$ is an object of $\CatJ'$ that is also an object of $\CatJ$, set \[\Phi^S_x := \Psi^S_x.\]
\item Since $\F(\langle \rho_i,\rho_j\rangle)$ is a pullback span and $(\Psi^S_i,\Psi^S_j)$ is paired with the shared-constraint cospan of $(A_i,A_j)$, the pullback universal property gives a unique morphism
\[
\Phi^S_{ij}\colon S\to \F(A_{ij})\quad \text{such that} \quad \F(\rho_i)\circ \Phi^S_{ij}=\Psi^S_i \quad \text{and} \quad \F(\rho_j)\circ \Phi^S_{ij}=\Psi^S_j.\]
\item For each $k$ that is not in $\{i,j\}$, the pair $(\Phi^S_{ij},\Psi^S_k)$ is paired with the shared-constraint cospan of $(A_{ij},A_k)$, hence the pullback universal property of $\F(\langle \rho_{ij},\rho_k\rangle)$ gives a unique morphism
\[
\Phi^S_{ijk}\colon S\to \F(A_{ijk})
\]
whose composites with the two pullback projections are $\Phi^S_{ij}$ and $\Psi^S_k$, respectively.
\end{enumerate}

To check that $\Phi^S$ is a cone over $\F\circ \D'$, take $\alpha$ to be any morphism $\alpha\colon x\to y$ in $\CatJ'$. The cone condition is the equality
\[
\F(\D'(\alpha))\circ \Phi^S_x=\Phi^S_y.
\]
If $\alpha$ lies in the part of $\CatJ'$ inherited from $\CatJ$ away from the welded objects, then
\[
\D'(\alpha)=\D(\alpha), \quad \Phi^S_x=\Psi^S_x, \quad \text{and} \quad \Phi^S_y=\Psi^S_y,
\] 
so the equality holds because $\Psi^S$ is a cone over $\F\circ \D$. If $\alpha$ is one of the projection morphisms out of a pullback span defining $A_{ij}$ or $A_{ijk}$, then the equality holds by the defining identities for $\Phi^S_{ij}$ and $\Phi^S_{ijk}$ obtained from the corresponding pullback universal properties. Therefore $\Phi^S$ is a cone over $\F\circ \D'$.

Since $\F(\Phi^L)$ is a limit cone of $\F\circ \D'$, there is a unique cone morphism
\[
\Phi\colon S\to \F(L)
\]
from $\Phi^S$ to $\F(\Phi^L)$. The definitions of $\Psi^L$ in terms of $\Phi^L$ imply that the equalities
\[
\Phi^S_x=\F(\Phi^L_x)\circ \Phi
\] 
for all objects $x$ of $\CatJ'$ determine
\[
\Psi^S_y=\F(\Psi^L_y)\circ \Phi
\]
for every object $y$ of $\CatJ$. Hence $\Phi$ is a cone morphism $\Psi^S\to \F(\Psi^L)$ over $\F\circ \D$.

If $\Phi'\colon S\to \F(L)$ is any cone morphism $\Psi^S\to \F(\Psi^L)$ over $\F\circ \D$, then the same verification on the generating morphisms of $\CatJ'$ shows that $\Phi'$ is also a cone morphism $\Phi^S\to \F(\Phi^L)$ over $\F\circ \D'$. Uniqueness of $\Phi$ over $\F\circ \D'$ implies $\Phi'$ is equal to $\Phi$. Therefore $\F(\Psi^L)$ is a limit of $\F\circ \D$, and so $\Psi^L$ is an $\F$-limit of $\D$.
 
\end{proof}

\begin{proposition}\label{prop:redtosimpthenFlimit}
    If $\D$ is a reducible to decomposable ACM-diagram, then there is a chain of reductions of $\D$ to a diagram $\D'$ where $\D'$ has one actor. Furthermore, any ACM-diagram that is reducible to a decomposable ACM-diagram admits an $\F$-limit in $\CatC$ which has an apex that isomorphic to the actor in $\D'$.
\end{proposition}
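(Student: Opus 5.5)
The plan is to argue by induction on the number $N$ of actors, in two stages: first reduce $\D$ to a diagram with a single actor, then read off the $\F$-limit from that one actor.

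\emph{Reduction to one actor.} By hypothesis there is a chain of reductions $\D=\D_1\to\cdots\to\D_n$ in which $\D_n$ decomposes external constraints (possibly with $n=1$). Each welding lowers the number of actors by exactly one, since $W_{i,j}(\CatJ)$ replaces $i$ and $j$ by the single index $ij$. While $\D_n$ still has at least two actors, pick any two distinct actor indices $i_0,j_0$ of $\D_n$. Lemma~\ref{lem:CanAlwaysWeldSimple} then supplies a reduction $\D_n\xrightarrow{(i_0,j_0)}\D_{n+1}$, and this $\D_{n+1}$ is again an ACM-diagram that decomposes external constraints. Iterating, after finitely many steps we reach an ACM-diagram $\D'$ with exactly one actor. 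Appending these weldings to the original chain gives the first claim.

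\emph{The one-actor case.} Suppose $\D'$ has a single actor $A$ with constraints $C_A$. Then $\D'$ has no interaction indices, and its shape is the poset $\{c\le a\}\cup\{a\}$, in which $a$ is the unique maximal element. Every other object of the shape lies below $a$, so the cone with apex $A$, leg $\Id_A$ at $a$, and legs $f_{a,c}$ at each $c$ is a limit of $\D'$. Indeed, any cone is determined by its leg to $a$: this follows from commutativity of the cone diagram, or from Lemma~\ref{lem:actorsonly} once $\F$ is faithful. The same argument applies verbatim to $\F\circ\D'$ in $\CatC'$, so the cone $\F(\Id_A)$ is a limit there. Hence the cone with apex $A$ is an $\F$-limit of $\D'$.

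\emph{Lifting the limit back along the chain.} Apply Lemma~\ref{lem:WeldedLimitsAreLimits} backward along the full chain $\D=\D_1\to\cdots\to\D'$. If $\D_{k+1}$ has an $\F$-limit with apex $L$, then $\D_k$ has an $\F$-limit with the same apex $L$. By downward induction, $\D$ has an $\F$-limit whose apex is $A$, the single actor of $\D'$. Any other $\F$-limit apex is isomorphic to $A$ by cone-tightness, which gives the isomorphism asserted in the statement.

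The main obstacle is a hypothesis mismatch in Lemma~\ref{lem:WeldedLimitsAreLimits}: it assumes that $\D_k$ decomposes external constraints, whereas the early diagrams of the original chain (before $\D_n$) need not satisfy this. For the steps after $\D_n$ this is harmless, since Lemma~\ref{lem:CanAlwaysWeldSimple} preserves the property. For the early steps I would reread the proof of Lemma~\ref{lem:WeldedLimitsAreLimits}. It appears to use only the existence of the interactions $\KSpan(A_{ij},A_k)$, span-extendability (to produce $\phi_{ak}$), and the universal properties of $\F$-pullbacks. All of these are available in any reduction by welding, because $\D_{k+1}$ is required to be an ACM-diagram. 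I would therefore state and use that lemma for an arbitrary welding reduction. A secondary point is to confirm that the one-actor $\F$-limit is compatible with $\F$ preserving the terminal object at $\star$; this holds because $\star\le a$ and the leg to $\star$ is forced.
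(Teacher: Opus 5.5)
Your route is the paper's own. You extend the chain to a one-actor diagram with Lemma~\ref{lem:CanAlwaysWeldSimple}, take the single actor with identity leg as the $\F$-limit, and transport back with Lemma~\ref{lem:WeldedLimitsAreLimits}. Your direct one-actor argument, which avoids faithfulness, is fine. On the part of the chain where the diagrams decompose external constraints, this is exactly the paper's argument.

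The gap is your patch for the initial segment of the chain. The hypothesis of Lemma~\ref{lem:WeldedLimitsAreLimits} is not idle. Span-extendability, as defined, compares two cospans with the \emph{same} apex. The shared-constraint cospan of $(A_{ij},A_k)$ lives over $\prod_{c\in C_{ij}\cap C_k}\D(c)$, which is in general strictly larger than $\prod_{c\in C_a\cap C_k}\D(c)$. So span-extendability does not produce $\phi_{ak}\colon A_{ijk}\to A_{ak}$ as a morphism of $\CatC$. In $\SurjSub$, the map $\phi_{ik}$ is onto only if every compatible pair $(a_i,a_k)$ extends to a point of $A_j$. This essentially requires $A_j$ to map onto $\prod_{c\in C_j\cap(C_i\cup C_k)}\D(c)$. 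Since $C_j\cap(C_i\cup C_k)\subseteq\Ext[j:\CatJ]$, that is exactly what decomposing external constraints supplies.

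The failure is genuine, not an artifact of the proof. Work in $\SurjSub$ with the following data.
\begin{itemize}
\item Actors $A_1=A_3=\R^2$ and $A_2=\R$.
\item Constraint sets $C_1=\{\star,c_{12},c_{13}\}$, $C_2=\{\star,c_{12},c_{23}\}$, $C_3=\{\star,c_{13},c_{23}\}$, with every nontrivial constraint equal to $\R$.
\item Constraint morphisms $f_{1,c_{12}}(x,y)=x$, $f_{1,c_{13}}(x,y)=y$, $f_{3,c_{13}}(x,y)=x$, $f_{3,c_{23}}(x,y)=y$, and $f_{2,c_{12}}=f_{2,c_{23}}=\id_{\R}$.
\end{itemize}
Each pair of actors shares one nontrivial constraint, so this is an ACM-diagram $\D$. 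It does not decompose external constraints, because $t\mapsto(t,t)$ is not surjective.

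Weld $A_1$ and $A_2$. The welded actor $A_{12}=\{((t,y),t)\}\cong\R^2$ maps diffeomorphically onto $\D(c_{13})\times\D(c_{23})$ by $((t,y),t)\mapsto(y,t)$, and $A_3$ maps there by the identity. So the welded diagram is an ACM-diagram that decomposes external constraints, and $\D$ is reducible to decomposable.

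However, the limit of $\F\circ\D$ in $\Diff$ is the surface $\{((t,y),t,(y,t))\}\cong\R^2$. The interaction $\D(\{1,3\})=A_1\times_{\R}A_3$ is three-dimensional. Hence the leg to $\D(\{1,3\})$ cannot be a surjective submersion, and $\D$ has no $\F$-limit. Concretely, $\phi_{13}\colon A_{123}\to A_{13}$ is an embedding $\R^2\to\R^3$.

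So no rereading of the lemma can close the gap. Your argument proves the statement for diagrams that decompose external constraints, or for chains in which every diagram does, but the claim for arbitrary reducible-to-decomposable diagrams is false. The paper's proof has the same hole: its closing sentence applies Lemma~\ref{lem:WeldedLimitsAreLimits} along $\D=\D_0\to\cdots\to\D_m$ outside that lemma's hypothesis. You located the weak point correctly, but the right repair is to restrict the statement, not to drop the hypothesis from the lemma.
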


\begin{proof}
For any ACM-diagram $\mathcal H$ that \simple\ with exactly one actor index $a$ and corresponding actor $A$, Lemma~\ref{lem:actorsonly} implies that cones over $\F\circ \mathcal H$ with apex $S$ are in bijection with morphisms $S\to \F(A)$. Therefore the identity $\id_{\F(A)}\colon \F(A)\to \F(A)$ determines a limit of $\F\circ \mathcal H$, hence $\mathcal H$ admits an $\F$-limit with apex $A$.

Take $n$ to be any natural number. Assume that every ACM-diagram that decomposes external constraints with $n$ actor indices admits an $\F$-limit. Take any ACM-diagram $\mathcal G_{n+1}$ that decomposes external constraints with $n+1$ actor indices. Choose distinct actor indices $i$ and $j$ of $\mathcal G_{n+1}$ and weld them to obtain a reduction
\[
\mathcal G_{n+1} \to \mathcal G_n
\]
with $n$ actor indices. Lemma~\ref{lem:CanAlwaysWeldSimple} implies that $\mathcal G_n$ decomposes external constraints, so the inductive hypothesis gives an $\F$-limit of $\mathcal G_n$. Lemma~\ref{lem:WeldedLimitsAreLimits} then gives an $\F$-limit of $\mathcal G_{n+1}$ with the same apex. This completes the induction, so every ACM-diagram that decomposes external constraints admits an $\F$-limit.

For any ACM-diagram $\D$ that is reducible to decomposable, fix a chain of decomposing reductions
\[
\D=\D_0 \;\to\; \D_1 \;\to\; \cdots \;\to\; \D_m=: \mathcal E_N,
\]
where $\mathcal E_N$ \simple\ and has $N$ actor indices. Repeatedly weld actors to obtain a chain of decomposing reductions
\[
\mathcal E_N \;\to\; \mathcal E_{N-1} \;\to\; \cdots \;\to\; \mathcal E_1,
\]
where $\mathcal E_1$ has exactly one actor index and corresponding actor $L$.

Since $\mathcal E_1$ has only one actor index, it admits an $\F$-limit with apex $L$. The chain of weldings
\[
\mathcal E_N \to \mathcal E_{N-1} \to \cdots \to \mathcal E_1
\]
is finite, and Lemma~\ref{lem:WeldedLimitsAreLimits} transports an $\F$-limit along the chain in the direction opposite to the arrows. Therefore $\mathcal E_N$ admits an $\F$-limit with apex $L$. The same reasoning applied to the chain
\[
\D=\D_0 \to \D_1 \to \cdots \to \D_m=\mathcal E_N
\]
yields an $\F$-limit of $\D$ with apex $L$.

\end{proof}

Proposition~\ref{prop:prodUnionHasLimit} provides the technical foundation for Theorem~\ref{thm:UnionIsSystem}, which shows how the framework of ACM categories permits the construction of larger systems from subsystems. In general, the union of ACM-diagrams may fail to exist because a larger collection of actors could potentially involve additional constraints for each actor, and require mapping actors into too many constraints. Even if a union of two ACM-diagrams exists, the question remains: when does that ACM-diagram have an $\F$-limit? Both $\D_1$ and $\D_2$ decomposing external constraints is not enough, as Example~\ref{nonexample:Flimitb} shows.

\begin{proposition}\label{prop:prodUnionHasLimit}
Suppose that $\C$ has $\F$-pullbacks and take $\F$ to be a cone-tight, faithful, span-extendable functor that preserves terminal objects. If $\D_1$ and $\D_2$ are ACM subdiagrams of an ACM-diagram $\D$ that decomposes into constraints so that their intersection $\D_{12}$ is either:
\begin{enumerate}
	\item A diagram with no actors, or
        \item An ACM subdiagram of $\D_1$ and $\D_2$ such that for every actor $a$ in $\D_{12}$, \[\iota_1(C_a) = C_{\iota_1(a)} \quad \text{and} \quad \iota_2(C_a) = C_{\iota_2(a)},\]
\end{enumerate} and if $\D$ is their union over $\D_{12}$, then there are $\F$-limits $\Phi_1^{X_1}$, $\Phi_2^{X_2}$, and $\Phi_{12}^{X_{12}}$ of $\D_1$, $\D_2$, and $\D_{12}$ respectively. Furthermore, there are $\F$-product morphisms from $X_1$ to $X_{12}$ and from $X_2$ to $X_{12}$ with an $\F$-pullback which defines an $\F$-limit of $\D$. 
\end{proposition}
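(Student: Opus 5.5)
The plan is to use the hypothesis that $\D$ decomposes into constraints to build every $\F$-limit explicitly as an $\F$-product of constraint objects, and then to obtain the gluing from the cone isomorphism \eqref{eq:lem:projectionscancel}.

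\textbf{Step 1: explicit limits for the three subdiagrams.} For each actor $A_a$ of $\D$ there is an $\F$-product isomorphism $A_a\cong\prod_{c\in C_a}\D(c)$. Consequently every pair of actors maps through the $\F$-product of any subset of their constraints, using $\F$-product projections. Take $\mathcal E$ to be any of $\D_1$, $\D_2$, $\D_{12}$, and let $\Ob_C(\mathcal E)$ denote its set of constraint indices. I will show that
\[
X_{\mathcal E}:=\prod_{c\in \Ob_C(\mathcal E)}\D(c)
\]
is the apex of an $\F$-limit. The legs into constraints are the $\F$-product projections. The leg into an actor $a$ is the composite of the projection onto $\prod_{c\in C_a}\D(c)$ with the inverse of the isomorphism above. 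The legs into interactions are induced by span-extendability, and faithfulness makes them unique by Lemma~\ref{lem:actorsonly}.

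To verify the universal property I use Lemma~\ref{lem:actorsonly}: a cone over $\F\circ\mathcal E$ is determined by its legs into actors. Because the actors are themselves $\F$-products of constraints, such a cone is the same data as a compatible family of maps into the $\F(\D(c))$. That family factors uniquely through $\F(X_{\mathcal E})$.

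There is one caveat. A constraint index of $\mathcal E$ must lie below some actor of $\mathcal E$, and the actor's constraint set inside $\mathcal E$ must be all of $C_a$. This is exactly where hypotheses (1) and (2) enter. If $\D_{12}$ has no actors, its $\F$-limit is just an $\F$-product of its constraints, and that product is terminal when only $\star$ is present. Otherwise, hypothesis (2) guarantees that $C_a$ is computed identically in $\D_{12}$, $\D_i$, and $\D$.

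As an alternative route, each of these diagrams decomposes external constraints, so Proposition~\ref{prop:redtosimpthenFlimit} already gives $\F$-limits. Cone-tightness then identifies their apices with the explicit products $X_{\mathcal E}$.

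\textbf{Step 2: the $\F$-product morphisms $X_i\to X_{12}$.} I expect $\Ob_C(\D_{12})\subseteq \Ob_C(\D_i)$ via $\pi_i$. Associativity of $\F$-products (Lemma~\ref{lem:projectionscancel}) then provides an $\F$-product projection $X_i\to X_{12}$. I will check that this projection is a cone morphism after restricting the $\F$-limit cone of $\D_i$ along $\pi_i$.

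\textbf{Step 3: the $\F$-pullback and the gluing.} Take the $\F$-pullback $P$ of the cospan $X_1\to X_{12}\leftarrow X_2$. By the union condition, $\Ob_C(\CatJ)=\iota_1(\Ob_C(\CatJ_1))\cup\iota_2(\Ob_C(\CatJ_2))$. Hypotheses (1) and (2) ensure that $\Ob_C(\CatJ_{12})$ is precisely the intersection of the two constraint sets. Applying \eqref{eq:lem:projectionscancel} gives a cone isomorphism
\[
P\cong\prod_{c\in\Ob_C(\CatJ)}\D(c).
\]
By the Step 1 argument applied to $\D$ itself, which decomposes into constraints, this product is the apex of an $\F$-limit of $\D$. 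Transporting the cone along the isomorphism yields the required $\F$-limit with apex $P$.

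\textbf{Main obstacle.} I expect the hardest step to be justifying that $\Ob_C(\CatJ_{12})$ equals $\iota_1(\Ob_C(\CatJ_1))\cap\iota_2(\Ob_C(\CatJ_2))$. This requires an intersection constraint to be a constraint of a shared actor, or to be $\star$ in case (1). It also requires checking that the legs of the transported cone into interactions of $\D$ that join an actor of $\D_1$ to an actor of $\D_2$ commute. Those interactions lie in neither subdiagram, so their legs must be constructed directly via span-extendability. Faithfulness of $\F$ then forces commutativity, because the corresponding equalities hold after applying $\F$.
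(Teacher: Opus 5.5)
Your proposal is correct in outline, modulo a tacit assumption that the paper shares (discussed below). It departs from the paper at the gluing step.

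Your Steps 1 and 2 coincide with the paper's. In both, $X_1$, $X_2$, $X_{12}$ are $\F$-products of constraint objects, and the maps $X_i\to X_{12}$ are $\F$-product projections.

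The paper then builds the cone on the pullback $X$ by hand. It sets $\Phi^X_x=\Phi^{X_i}_x\circ\rho_i$ on $\D_i$ and checks agreement on $\D_{12}$. It constructs the legs into the cross interactions $\{a,b\}$ separately, by identifying $\D(\{a,b\})$ with $\prod_{c\in C_a\cup C_b}\D(c)$. Finally it proves the universal property: an arbitrary cone over $\F\circ\D$ is restricted to $\F\circ\D_1$, $\F\circ\D_2$, $\F\circ\D_{12}$, factored through the three limits, and the pullback property of $\F(X)$ is applied.

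You instead recognize the pullback as $\prod_{c\in\Ob_C(\CatJ)}\D(c)$ via \eqref{eq:lem:projectionscancel} and the union condition, and apply Step 1 to $\D$ itself. This is shorter. It removes both the universal-property check and the cross-interaction legs, since Step 1 for $\D$ already supplies them. So the obstacle in your last paragraph does not actually arise on your route.

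The price is that the subdiagram limits do no work in showing $P$ is a limit. You are really observing that $\D$ has an $\F$-limit outright and that $P$ presents it. If you want the legs on $\D_i$ to be $\Phi^{X_i}\circ\rho_i$, add a short check. The isomorphism is a cone isomorphism over the constraints, and actor and interaction legs are determined by constraint legs via faithfulness. The paper's argument, by contrast, makes explicit the compositional fact that the limit of a union is the pullback of the limits over the limit of the intersection.

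There are two corrections.

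First, the equality $\Ob_C(\CatJ_{12})=\iota_1(\Ob_C(\CatJ_1))\cap\iota_2(\Ob_C(\CatJ_2))$ is automatic at the level of objects. The objects of a pullback in $\Cat$ of two functors injective on objects are exactly the common objects, so neither (1) nor (2) is needed. Case (1) also does not force the common constraints to be $\star$: in the discussion after Example~\ref{nonexample:Flimitb} the subdiagrams share $C_4$ and $C_5$. Hypotheses (1) and (2) serve only to compute $X_{12}$.

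Second, and more seriously, (2) compares constraint sets in $\CatJ_{12}$ with those in $\CatJ_i$. It does not say that an actor of $\D_i$ carries in $\CatJ_i$ all the constraints it has in $\CatJ$. It says nothing at all about actors of $\D_i$ outside $\D_{12}$. Your claim that $C_a$ is ``computed identically in $\D$'' is therefore unsupported, yet Step 1 for $\mathcal E=\D_i$ needs it. The paper needs it too, since it simply sets $X_i=\prod_{c\in\Ob_C(\CatJ_i)}\D(c)$.

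This is a genuine extra hypothesis, as a small example shows. Let $\D$ have actors $A_a=A_b=\R$ with $C_a=C_b=\{c,\star\}$, $\D(c)=\R$, and identity constraint morphisms. Let $\D_1$ be $A_a$ with $\{c,\star\}$, and let $\D_2$ be $A_b$ with only $\star$. Then the $\F$-limit of $\D_2$ is $\R$, not a point. The pullback over the one-point $X_{12}$ is $\R^2$, whereas the $\F$-limit of $\D$ is $\R$.
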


\begin{proof}
Lemma~\ref{lem:projectionscancel} implies that any ACM-diagram $\D'\colon\CatJ'\to\CatC$ that decomposes into constraints has an $\F$-limit that is cone isomorphic to the $\F$-product of its actors.  For each $i$ in $\{1,2\}$, take $\Phi_i^{X_i}$ to be the $\F$-limit of $\D_i$ with apex
\[
X_i=\prod_{c\in\Ob_C(\CatJ_i)}\D(c),
\]
whose legs are the unique morphisms that $\F$ takes to the canonical projections to the actors and constraint of $\F\circ\D_i$, together with the induced legs to the interactions of $\F \circ \D_i$.  Since each interaction is an $\F$-pullback and $\F$ is span-extendable, the projections to its actors determine a unique leg from $X_i$.

\smallskip

In case~(1), the diagram $\D_{12}$ has no actors, so it is discrete and admits an
$\F$-limit, namely the $\F$-product
\[
X_{12}=\prod_{c\in\Ob_C(\CatJ_{12})}\D(c),
\]
with the unique morphisms in $\CatC$ that $\F$ takes to the canonical projections. Henceforth abuse terminology by referring to morphisms appearing in this way as canonical projections.

In case~(2), the diagram $\D_{12}$ is an ACM subdiagram of $\D_1$ and $\D_2$ and the hypotheses
\[
\iota_1(C_a)=C_{\iota_1(a)}
\quad\text{and}\quad
\iota_2(C_a)=C_{\iota_2(a)}
\]
ensure that every actor of $\D_{12}$ is isomorphic to the same $\F$-product of constraints as in $\D_1$ and $\D_2$.  Therefore $\D_{12}$ \elementary.  Take $\Phi_{12}^{X_{12}}$ as an $\F$-limit of $\D_{12}$ with apex the $\F$-product
\[
X_{12}=\prod_{c\in\Ob_C(\CatJ_{12})}\D(c),
\]
with the canonical projections to its actor and constraint objects and induced legs to its interaction objects.

Since the construction of an $\F$-limit $\Phi^{X_{12}}_{12}$ of $\D_{12}$ in case (1) is identical to the construction in case (2), proceed in either case as follows.

The containment 
\[
\Ob_C(\CatJ_{12})\subseteq \Ob_C(\CatJ_i)
\] 
together with associativity of $\F$-products as a consequence of Lemma~\ref{lem:projectionscancel} determine $\F$-product projections
\[
\Psi_1\colon X_1\to X_{12} \quad \text{and} \quad \Psi_2\colon X_2\to X_{12}.
\]
Take $\langle\rho_1,\rho_2\rangle$ to be an $\F$-pullback of the cospan $\rangle\Psi_1,\Psi_2\langle$ with apex $X$.

Define a cone $\Phi^X$ over $\D$ on the actors, constraints, and interactions of either $\D_1$ or $\D_2$ as follows. For any object $x$ of $\CatJ$, if $x$ lies in $\CatJ_i$, then define
\[
\Phi^{X}_x:=\Phi^{X_i}_x\circ\rho_i\colon X\to \D(x).
\]
Take $y$ to be any object in $\CatJ_{12}$. Since $\langle\rho_1,\rho_2\rangle$ is a pullback,
\[
\Psi_1\circ\rho_1=\Psi_2\circ\rho_2.
\]
Compose each of these morphisms with the leg $X_{12}\to \D(y)$ to obtain
\[
\Phi^{X_{12}}_y\circ \Psi_1 \circ \rho_1=\Phi^{X_{12}}_y\circ \Psi_2 \circ \rho_2,
\]
Each $\Psi_i$ for $i$ in $\{1,2\}$ is an $\F$-product projection, so each are cone morphisms over $\D_{12}$. Thus,
\[
\Phi^{X_1}_y = \Phi^{X_{12}}_y\circ \Psi_1 \quad \text{and} \quad \Phi^{X_2}_y = \Phi^{X_{12}}_y\circ \Psi_2
\]
and so
\[
\Phi^{X_1}_y\circ\rho_1=\Phi^{X_2}_y\circ\rho_2,
\]
which shows that $\Phi^X_y$ does not depend on the choice of $i$.

Take $a$ and $b$ to be actor indices in $\CatJ$. Since $\D$ is an ACM-diagram, $\star$ is in $C_a$ for every actor index $a$ in $\CatJ$, hence $\star$ is in $C_{a}\cap C_{b}$.

If $\{a,b\}$ lies in $\CatJ_1$, define 
\[
\Phi_{\{a,b\}}:=\Phi^{X_1}_{\{a,b\}}\circ\rho_1\colon X\to \D(\{a,b\}).
\]
If $\{a,b\}$ lies in $\CatJ_2$, define 
\[
\Phi_{\{a,b\}}:=\Phi^{X_2}_{\{a,b\}}\circ\rho_2\colon X\to \D({\{a,b\}}).
\] 
Otherwise, one of $a$ and $b$ lies in $\CatJ_1$ and the other lies in $\CatJ_2$.  In this case, define $\Phi_{\{a,b\}}$ directly from the already-defined legs to the two actors and to the shared constraints. Lemma~\ref{lem:projectionscancel} together with $\D$ decomposing into constraints implies that $\D(\{a,b\})$ is an $\F$-product $\prod_{c \in C_a \cup C_b} \D(c)$. Since $C_a \cup C_b$ is a subset of $\text{Ob}_C(\CatJ)$, associativity of $\F$-products as a consequence of Lemma~\ref{lem:projectionscancel} implies there is an $\F$-product projection 
\[
\Phi^X_{\{a,b\}}\colon X\to \D(\{a,b\})
\]
such that \[p_a\circ \Phi^{X_i}_{\{a,b\}}=\Phi^X_{\{a,b\}} \quad \text{and} \quad p_b\circ \Phi^X_{\{a,b\}}=\Phi^{X_j}_{b},\] where $a$ is an object of $\CatJ_i$ and $b$ is an object of $\CatJ_j$. Since $\Phi^X_{\{a,b\}}$ is an $\F$-product projection, it is a cone morphism over the shared constraints $\D(c)$ for constraint indices $c$ in $C_a \cup C_b$.

To demonstrate the universal property of the limit after applying $\F$, take any cone $\Phi^S$ over $\F\circ\D$ with apex $S$.  Restrict $\Phi^S$ to
$\F\circ\D_i$.  The cone $\F(\Phi_i^{X_i})$ is a limit of $\F\circ\D_i$, so there are unique cone morphisms
\[
\alpha_i\colon S\to \F(X_i)\quad (i\in\{1,2\}).
\]
Restrict $\Phi^S$ to $\F\circ\D_{12}$.  The cone $\F(\Phi_{12}^{X_{12}})$ is a limit of $\F\circ\D_{12}$, so there is a unique cone morphism
\[
\alpha_{12}\colon S\to \F(X_{12}).
\]
Since each $\Psi_i$ is a cone morphism $\Phi_i^{X_i}\to \Phi_{12}^{X_{12}}$, functoriality gives
\[
\F(\Psi_i)\circ \alpha_i=\alpha_{12}\quad (i\in\{1,2\}),\quad \text{hence} \quad \F(\Psi_1)\circ\alpha_1=\F(\Psi_2)\circ\alpha_2.
\]
Since $\F(\langle\rho_1,\rho_2\rangle)$ is a pullback of the cospan $\rangle \F(\Psi_1),\F(\Psi_2)\langle$, there is a unique morphism
\[
\alpha\colon S\to \F(X)
\]
such that \[\F(\rho_i)\circ\alpha=\alpha_i.\]

To verify that $\alpha$ is a cone morphism $S\to \F(X)$, take $x$ to be any object of $\CatJ$.  If $x$ lies in $\CatJ_1$, then
\[
\F(\Phi^X_x)\circ \alpha
=\F(\Phi^{X_1}_x\circ\rho_1)\circ\alpha
=\F(\Phi^{X_1}_x)\circ \F(\rho_1)\circ\alpha
=\F(\Phi^{X_1}_x)\circ \alpha_1,
\]
which is equal to the $x$-leg of $\Phi^S$ since $\alpha_1$ is a cone morphism.  A similar calculation holds for $x$ in $\CatJ_2$.  For interaction objects, the equality follows from functoriality and the defining equalities of the legs $\Phi_{\{a,b\}}$ through the corresponding $\F$-pullbacks. Therefore, the morphism $\alpha$ defines a cone morphism.

To prove uniqueness, take any $\alpha'\colon S\to\F(X)$ that defines a cone morphism.  Each $\F(\rho_i)\circ \alpha'$ defines a cone morphism $S\to\F(X_i)$, hence equals $\alpha_i$ by uniqueness of the factorization through the limit $\F(\Phi_i^{X_i})$. The pullback universal property implies
\[
\alpha'=\alpha.
\]

Therefore $\F(\Phi^X)$ is a limit of $\F\circ\D$, so $\Phi^X$ is an $\F$-limit of $\D$.
\end{proof}

\begin{lemma}\label{lem:SupersimpleSubdiags}
    If $\D_1\colon \CatJ_1 \rightarrow \C$ \simple\ ACM-diagram and $\D_2\colon \CatJ_2 \rightarrow \C$ is an ACM subdiagram of $\D_1$, then $\D_2$ \simple. 
\end{lemma}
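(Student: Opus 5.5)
The plan is to show that every actor of $\D_2$ admits an $\F$-product morphism to the product of its external constraints computed in $\CatJ_2$. We obtain it by composing the corresponding morphism for $\D_1$ with an $\F$-product projection. Write $\iota\colon \CatJ_2\to\CatJ_1$ for the inclusion respecting the ACM structure, so that $\D_2=\D_1\circ\iota$. Fix an actor index $a$ of $\CatJ_2$ and write $a'=\iota(a)$.

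The first step is a purely set-theoretic comparison of external constraint sets. We show that
\[
\iota\big(\Ext[a:\CatJ_2]\big)\subseteq \Ext[a':\CatJ_1].
\]
Take $c$ in $\Ext[a:\CatJ_2]$. Then $c\le a$ and $c\le b$ for some actor index $b\neq a$ of $\CatJ_2$. Since $\iota$ is a functor that is injective on objects and respects actor and constraint indices, we get $\iota(c)\le a'$ and $\iota(c)\le\iota(b)$ with $\iota(b)\ne a'$. Hence $\iota(c)$ lies in $C_{a'}\cap C^\perp_{a',\CatJ_1}$. Injectivity of $\iota$ on objects makes the indexing of the factors match: $\D_2(c)=\D_1(\iota(c))$, and the constraint morphisms agree, $\D_2(c\le a)=\D_1(\iota(c)\le a')$.

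The second step produces the required morphism. Since $\D_1$ decomposes external constraints, there is an $\F$-product morphism
\[
f_{a'}\colon A_{a'}\to\prod_{c\in\Ext[a':\CatJ_1]}\D_1(c).
\]
Using the containment above and associativity of $\F$-products (Lemma~\ref{lem:projectionscancel} and the cone isomorphism \eqref{eq:lem:projectionscancel}, applied with $\mathcal A=\iota(\Ext[a:\CatJ_2])$ and $\mathcal B$ its complement in $\Ext[a':\CatJ_1]$), we obtain an $\F$-product projection
\[
\pi\colon\prod_{c\in\Ext[a':\CatJ_1]}\D_1(c)\to\prod_{c\in\Ext[a:\CatJ_2]}\D_2(c).
\]
This is a cone morphism over the constraints indexed by $\iota(\Ext[a:\CatJ_2])$. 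Set $f_a:=\pi\circ f_{a'}$. Its composite with each projection to $\D_2(c)$ equals $f_{a',\iota(c)}=\D_2(c\le a)$. So $f_a$ is an $\F$-product morphism $\bigtimes_{c\in\Ext[a:\CatJ_2]}f_{a,c}$, as Definition~\ref{defSimple} requires. Doing this for every actor index $a$ shows that $\D_2$ decomposes external constraints. Lemma~\ref{lem:InclusionsAreSubdiagrams} already guarantees that $\D_2$ is an ACM-diagram.

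The main obstacle is making the projection $\pi$ rigorous. The needed $\F$-products exist because $\C$ has $\F$-pullbacks and $\F$ preserves terminal objects, but the projection must be an actual morphism in $\C$, not just in $\C'$. The way to get this is to realize the larger product, up to cone isomorphism, as an $\F$-pullback of the two sub-products over the product of their intersection, which here is the terminal object, via \eqref{eq:lem:projectionscancel}, and then take the pullback leg. A degenerate case also needs care: if $\Ext[a:\CatJ_2]$ is empty or only contains $\star$, the target is terminal and $f_a$ is the unique map to it.
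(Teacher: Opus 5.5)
Your proposal is correct and follows essentially the same route as the paper. You show $\iota(\Ext[a:\CatJ_2])\subseteq\Ext[\iota(a):\CatJ_1]$, then compose the $\F$-product morphism supplied by $\D_1$ with the $\F$-product projection that associativity of $\F$-products (Lemma~\ref{lem:projectionscancel}) provides, and your elementwise check of the containment and your care that the projection be a morphism of $\C$ rather than only of $\C'$ are slightly more careful than the paper's write-up.
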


\begin{proof}
Take $a$ to be any actor index of $\CatJ_2$ and $\iota$ to be the inclusion for $\D_2$. Since $\D_2$ is an ACM subdiagram of $\D_1$, for any actor index $i$, 
\begin{equation}\label{eq:subdiagsubconstraints}
    \iota(C_i) \subseteq C_{\iota(i)} \quad \text{and consequently} \quad \iota(\text{Ob}_A(\CatJ_2) )\subseteq \text{Ob}_A(\CatJ_1).
\end{equation}
Injectivity of $\iota$ on objects implies that  
\[
\iota(\Ext[a:\CatJ_2]) = \iota(C_a \cap C_a^\perp) = \iota(C_a) \cap \iota(C_a^\perp).
\]
Write $\iota(C_a^\perp)$ as a union to obtain the equalities
\[
\iota(\Ext[a:\CatJ_2]) = \iota(C_a) \cap \iota\Big(\bigcup_{i \in \text{Ob}_A(\CatJ_2) \setminus \{a\}} C_i\Big) = \iota(C_a) \cap \bigcup_{i \in \text{Ob}_A(\CatJ_2) \setminus \{a\}} \iota(C_i).
\]
The containments~\eqref{eq:subdiagsubconstraints} therefore imply that
\begin{align*}
    \iota(\Ext[a:\CatJ_2])&\subseteq C_{\iota(a)} \cap \bigcup_{i \in \text{Ob}_A(\CatJ_2) \setminus \{a\}} C_{\iota(i)}\\
    &\subseteq C_{\iota(a)} \cap \bigcup_{i \in \text{Ob}_A(\CatJ_1) \setminus \{\iota(a)\}} C_{i} = \Ext[\iota(a): \CatJ_1].
\end{align*}

Associativity of $\F$-products as a consequence of Lemma~\ref{lem:projectionscancel} determines an $\F$-product morphism 
\[
\pi\colon \prod_{c \in \Ext[\iota(a): \CatJ_1]} \D_1(c) \rightarrow \prod_{c \in \iota(\Ext[a:\CatJ_2])} \D_1(\iota(c)). 
\] 
Since $\D_2$ is an ACM subdiagram of $\D_1$, 
\[
\prod_{c \in \iota(C_a)} \D_1(\iota(c)) = \prod_{c \in C_a} \D_2(c) \quad \text{and} \quad \D_1(\iota(a)) = \D_2(a).
\] 

The diagram $\D_1$ \simple, so there is an $\F$-product morphism 
\[
f_a\colon \D_1(\iota(a)) \rightarrow \prod_{c \in \Ext[\iota(a):\CatJ_1]} \D_1(c),
\]
and so $\pi \circ f_a$ is an $\F$-product morphism from $\D_2(a)$ to $\prod_{c \in \Ext[a:\CatJ_2]} \D_2(c)$.

\end{proof}


\section{The rigid inclusion category}\label{Sec4}


Different observers of a system may choose different models for the same physical system. The source of an $\F$-limit of an ACM-diagram is the configuration space of the corresponding ACM-system, and non-uniqueness of $\F$-limits reflects that observers may obtain different, but compatible, models of the total system. Cone isomorphisms express compatibility at the level of configuration space. However, configuration-space compatibility alone does not determine the compositional structure. Proposition~\ref{lem:IsoDiagIsoDiagX} shows that natural isomorphisms of ACM-diagrams induce natural isomorphisms of the corresponding ACM-systems, so the compositional structure depends only on the natural isomorphism class of the diagram.

\begin{proposition}\label{lem:IsoDiagIsoDiagX}
For any naturally isomorphic diagrams $\D_1$ and $\D_2$ in $\C$ with $\F$-limits $\Phi^{X_1}$ and $\Phi^{X_2}$, respectively, if $\F$ is cone-tight, then $\D_1^{X_1}$ and $\D_2^{X_2}$ are naturally isomorphic.
\end{proposition}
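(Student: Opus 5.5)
The approach is to transport the $\F$-limit of $\D_2$ back along the natural isomorphism and then use cone-tightness to compare it with the given $\F$-limit of $\D_1$. Let $\eta\colon \D_1\to\D_2$ be the natural isomorphism; both diagrams have the same shape $\CatJ$, possibly after composing with the isomorphism of index categories implicit in the statement. Define a cone $\widetilde\Phi^{X_2}$ over $\D_1$ with apex $X_2$ by
\[
\widetilde\Phi^{X_2}_x := \eta_x^{-1}\circ \Phi^{X_2}_x .
\]
Naturality of $\eta^{-1}$ shows that these legs satisfy the cone condition over $\D_1$.

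The first step is to show that $\widetilde\Phi^{X_2}$ is an $\F$-limit of $\D_1$. Applying $\F$ gives a natural isomorphism $\F(\eta)\colon \F\circ\D_1\to\F\circ\D_2$. A cone over $\F\circ\D_1$ with apex $S$ becomes a cone over $\F\circ\D_2$ after postcomposing its legs with $\F(\eta_x)$, and this correspondence is bijective. It preserves cone morphisms in both directions. Hence the limit property of $\F(\Phi^{X_2})$ over $\F\circ\D_2$ passes to $\F(\widetilde\Phi^{X_2})$ over $\F\circ\D_1$; this is the standard fact that limits are invariant under natural isomorphism of diagrams.

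The second step uses cone-tightness. Now $\Phi^{X_1}$ and $\widetilde\Phi^{X_2}$ are both $\F$-limits of $\D_1$. The unique cone isomorphism $\F(X_1)\to\F(X_2)$ between their images is therefore $\F(\widetilde\Psi)$ for some cone isomorphism $\widetilde\Psi\colon X_1\to X_2$ in $\C$. Concretely, $\widetilde\Psi$ satisfies $\Phi^{X_1}_x=\eta_x^{-1}\circ\Phi^{X_2}_x\circ\widetilde\Psi$, or equivalently
\[
\eta_x\circ\Phi^{X_1}_x=\Phi^{X_2}_x\circ\widetilde\Psi \qquad\text{for all } x\in\CatJ.
\]

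Finally, extend $\eta$ to $\eta^X\colon \D_1^{X_1}\to\D_2^{X_2}$ on $\CatJ^X$ by setting $\eta^X_x=\eta_x$ for $x$ in $\CatJ$ and $\eta^X_s=\widetilde\Psi$ at the new maximal object $s$. Each component is an isomorphism. Naturality on morphisms inside $\CatJ$ is inherited from $\eta$, and naturality on the new morphisms is exactly the displayed equation. Because $\CatJ^X$ is thin, this suffices (the paper's convention writes the cone morphisms $x\le s$ as legs out of the apex; the square is the same either way). The only delicate point is the first step: since $\F$ need not be full, we must check that the transported cone really lies in $\C$, which it does because $\eta^{-1}$ has components in $\C$. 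Cone-tightness is precisely what lifts the comparison isomorphism from $\C'$ back to $\C$.
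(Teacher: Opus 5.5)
Your proof is correct and follows essentially the paper's route: produce the comparison isomorphism $\F(X_1)\to\F(X_2)$ intertwining the legs with $\F(\eta)$, lift it to $\C$ by cone-tightness, and extend $\eta$ by that lift at the apex. Your preliminary step of transporting $\Phi^{X_2}$ along $\eta^{-1}$ to a second $\F$-limit of $\D_1$ is a worthwhile refinement. It puts you exactly in the situation the definition of cone-tightness covers, namely two $\F$-limits of the same diagram, whereas the paper invokes cone-tightness without spelling out that hypothesis.
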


\begin{proof}
Take $\CatJ$ to be the shape of both $\D_1$ and $\D_2$.  For each object $i$ of $\CatJ$, denote by $\Phi^1_i$ and $\Phi^2_i$ the component morphisms \[\Phi^1_i\colon X_1\to \D_1(i) \quad \text{and} \quad \Phi^2_i\colon X_2\to \D_2(i).\] For $j$ in $\{1,2\}$, since $\Phi^{X_j}$ is an $\F$-limit, $\F(\Phi^{X_j})$ is a limiting cone of $\F\circ\D_j$.  

For any natural isomorphism $\eta$ from $\D_1$ to $\D_2$, $\F(\eta)$ is a natural isomorphism from $\F\circ\D_1$ to $\F\circ\D_2$. For each $i$ in $\CatJ$, denote by $\eta_i$ the component of $\eta$ that takes $\D_1(i)$ to $\D_2(i)$.  Since $\F(\eta)\circ \F(\Phi^{X_1})$ is a cone over $\F\circ\D_2$, the universal property of $\F(\Phi^{X_2})$ guarantees that there is a unique morphism $u^\prime$ so that for all $i$ in $\CatJ$, 
\begin{equation}\label{lem:isolimdiag}
u^\prime\colon\F(X_1)\to \F(X_2)\quad\text{with}\quad \F(\Phi^2_i)\circ u^\prime=\F(\eta_i)\circ \F(\Phi^1_i).
\end{equation}
Repeating the procedure with $\eta^{-1}$ instead yields a unique morphism \[v^\prime\colon\F(X_2)\to\F(X_1),\] and uniqueness implies that \[u^\prime\circ v^\prime=\id \quad\text{and}\quad v^\prime \circ u^\prime=\id,\] so $u^\prime$ is an isomorphism.

Cone-tightness guarantees that $u^\prime$ lifts to a cone isomorphism $u$ from $X_1$ to $X_2$ over $\D_1$ in $\C$, that is, $\F(u)$ is equal to $u^\prime$ and for any $i$ in $\CatJ$, 
\[
\mu^2_i\circ u=\eta_i\circ \mu^1_i.
\]

Define a natural isomorphism \[\eta'\colon\D_1^{X_1}\to \D_2^{X_2}\] by its components
\[
\eta_i':=\eta_i\quad \text{and} \quad \eta_x:=u
\]%
where $x$ is the cone apex index in $\CatJ^X$, the index category for $\D_1^{X_1}$ and $\D_2^{X_2}$, depicted here:

\smallskip

\begin{center}
\begin{tikzcd}
	{X_1} && {X_2} \\
	\\
	{\D_1(i)} && {\D_2(i)}
	\arrow["{\eta'_x = u}", from=1-1, to=1-3]
	\arrow["{\Phi^1_i}"', from=1-1, to=3-1]
	\arrow["{\Phi^2_i}", from=1-3, to=3-3]
	\arrow["{\eta'_i = \eta_i}"', from=3-1, to=3-3]
\end{tikzcd}
\end{center}

\smallskip

For each leg $x \to i$ in $\CatJ^X$,
\begin{align}\label{lem:nat}
\D_2^{X_2}(x\to i)\circ \eta_x' & =\Phi^2_i\circ u \notag\\ & =\eta_i \circ \Phi^1_i\notag\\ & =\eta_i'\circ \D_1^{X_1}(x\to i).
\end{align}
Equation~\eqref{lem:nat} and the naturality of the restriction of $\eta$ to the objects and morphisms of $\D_1$ together imply that $\eta'$ is a natural transformation. Each component morphism of $\eta$ and $u$ are each isomorphisms, so $\eta'$ is a natural isomorphism. Thus, $\D_1^{X_1}$ and $\D_2^{X_2}$ are naturally isomorphic.
\end{proof}

\begin{definition}\label{def:ACMDX}
Given any ACM-diagram $\D$ of shape $\CatJ$, take $[\D]$ to be the natural isomorphism class of diagrams of shape $\CatJ$ in $\CatC$ that contains $\D$.  The \emph{actor-constraint mediated system} for $[\D]$ is $[\D^X]$, where $\Phi^X$ is an $\F$-limit over $\D$.
\end{definition}

Proposition~\ref{lem:IsoDiagIsoDiagX} guarantees that $[\D^X]$ depends only on $[\D]$.  For any two representatives $\D_1$ and $\D_2$ of $[\D]$, and any $\F$-limits $\Phi^{X_1}$ and $\Phi^{X_2}$ of $\D_1$ and $\D_2$, respectively, \[[\D_1^{X_1}] = [\D_2^{X_2}],\] making the association between $[\D]$ and $[\D^X]$ in Definition~\ref{def:ACMDX} well-defined.  

The phrase \emph{open system} should conjure the image of a system that interacts with a larger, unseen world. Sometimes that larger world supplies additional actors; other times it imposes extra geometric constraints on the same actors. In either case, an observer who only sees the subsystem may detect fewer degrees of freedom than a naive count predicts, indicating something hidden with observable effect. 

As a motivating example, consider two point particles $A_1$ and $A_2$ moving freely in the plane. Example~\ref{ex:RigidBar} revisits this system in a more precise way using the ACM framework, but it is helpful to begin with a few informal observations. For now, take the configuration manifold to be  $\mathds R^2 \times \mathds R^2$; a path in this space is simply a pair of particle paths. If a rigid bar of fixed length $d$ joins the particles, the admissible configurations are restricted to
\[
\big\{(x_1,x_2)\in\mathds R^2\times\mathds R^2 \mid \|x_1-x_2\|=d\big\}.
\]
An observer initially unaware of the constraint would discover that one relative degree of freedom is fixed. What appears to be an uncoupled two--body system is in fact \emph{open}: its apparent motions factor through a hidden constraint supplied by the environment. Inclusion of one system within another must therefore account for both additional constraints and additional actors. As this example suggests, any ACM-system should, up to isomorphism, appear as a subset of the $\F$-product of its actors. The goal is to realize this subset as an $\F$-limit.

The present framework studies open kinematic systems compositionally. It builds configuration spaces from elementary actor--constraint components, composing them so that admissible paths arise exactly from actor paths that agree wherever they meet a shared constraint. A forthcoming work develops the dynamical analogue, in which unseen forces arise from both unseen interactions and unseen geometric constraints.

Lemma~\ref{lem:natisosacrossinclusions} is critical for formalizing the above notion of subsystem inclusions as morphisms. It shows that natural isomorphisms, and hence natural isomorphism classes, restrict and extend along inclusion functors. In particular, it shows that openness of a system is independent of the model of that system.

\begin{lemma}\label{lem:natisosacrossinclusions}
 Take $\iota^*\colon \D_1 \rightarrow \D_2$ to be an ACM subdiagram determined by an inclusion $\iota\colon \CatJ_1 \rightarrow \CatJ_2$. If $\beta^{2,a}\colon \D_2 \Rightarrow \D_{2,a}$ is a natural isomorphism, then there is an ACM-diagram $\D_{1,a}$ defined by 
 \[
 \D_{1,a} := \D_{2,a} \circ \iota
 \] 
 and a collection of isomorphisms given by \[\beta^{1,a}_x := \beta^{2,a}_{\iota(x)}\] for any $x$ in $\CatJ_1$ which form a natural isomorphism $\beta^{1,a}\colon \D_1 \Rightarrow \D_{1,a}$. If $\beta^{1,b}\colon \D_1 \Rightarrow \D_{1,b}$ is a natural isomorphism, then there is an ACM-diagram $\D_{2,b}$ and there is a natural isomorphism $\beta^{2,b}\colon\D_2 \Rightarrow \D_{2,b}$ such that 
 \[
 \D_{1,b} = \D_{2,b} \circ \iota \quad \text{and} \quad \beta^{2,b}_{\iota(x)} = \beta^{1,b}_x
 \] 
 for any $x$ in $\CatJ_1$. 
 \end{lemma}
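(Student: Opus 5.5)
The plan has two halves: restriction is formal, and extension is a transport-of-structure construction.

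\emph{Restriction.} Define $\D_{1,a}:=\D_{2,a}\circ\iota$ and $\beta^{1,a}_x:=\beta^{2,a}_{\iota(x)}$. Naturality follows by whiskering. For a morphism $f\colon x\to y$ in $\CatJ_1$, naturality of $\beta^{2,a}$ at $\iota(f)$ gives
\[
\D_{2,a}(\iota(f))\circ\beta^{2,a}_{\iota(x)}=\beta^{2,a}_{\iota(y)}\circ\D_2(\iota(f)),
\]
and this is exactly naturality of $\beta^{1,a}$ at $f$, since $\D_1=\D_2\circ\iota$. Each component is an isomorphism, so $\beta^{1,a}$ is a natural isomorphism.

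To see that $\D_{1,a}$ is an ACM-diagram, it suffices by Lemma~\ref{lem:InclusionsAreSubdiagrams} to show that $\D_{2,a}$ is one. I will check that the ACM axioms are invariant under natural isomorphism. Conditions (i) and (ii) are shape data. For (iii), an object isomorphic to a terminal object is terminal, and an object isomorphic to a nonterminal one is not. For (iv) and (v), suppose $\langle p_L,p_R\rangle$ is an $\F$-pullback of a cospan. Conjugating the cospan legs and the span legs by the component isomorphisms of $\beta^{2,a}$ yields an $\F$-pullback of the transported cospan, because $\F$ sends isomorphisms to isomorphisms and pullbacks are stable under isomorphism of cospans. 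The same argument shows that $\F$-products of the shared constraints, together with the $\F$-product morphisms, transport along the components.

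\emph{Extension.} Given $\beta^{1,b}\colon\D_1\Rightarrow\D_{1,b}$, I will define $\D_{2,b}$ objectwise. For $z$ in the image of $\iota$, write $z=\iota(x)$ (unique, since $\iota$ is injective on objects) and set $\D_{2,b}(z):=\D_{1,b}(x)$ and $\beta^{2,b}_z:=\beta^{1,b}_x$. For $z$ outside the image, set $\D_{2,b}(z):=\D_2(z)$ and $\beta^{2,b}_z:=\id$. On morphisms $g\colon z\to w$ in the thin category $\CatJ_2$, define
\[
\D_{2,b}(g):=\beta^{2,b}_w\circ\D_2(g)\circ(\beta^{2,b}_z)^{-1}.
\]
Functoriality is immediate because the conjugation telescopes, and naturality of $\beta^{2,b}$ holds by construction.

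It remains to check that $\D_{2,b}\circ\iota=\D_{1,b}$ strictly, including on morphisms. For $g=\iota(f)$ with $f\colon x\to y$,
\[
\beta^{1,b}_y\circ\D_1(f)\circ(\beta^{1,b}_x)^{-1}=\D_{1,b}(f)
\]
by naturality of $\beta^{1,b}$. I also need every morphism of $\CatJ_2$ between objects in the image of $\iota$ to be of the form $\iota(f)$. For ACM inclusions I expect this to hold: the order relations $c\le a$ and $a\le\{a,b\}$ between image objects should come from $\CatJ_1$, since $\iota$ respects the ACM structure and $\CatJ_1$ contains all interaction indices of its actors. If $\iota$ were not full there, the conjugation formula would still be well defined, but agreement on morphisms could fail. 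So I will argue fullness from the generating relations, noting that constraint sets of $\CatJ_1$ map into $C_{\iota(a)}$. Where the image is not full, I would instead read the equality $\D_{1,b}=\D_{2,b}\circ\iota$ on objects and generating arrows of $\CatJ_1$ only. Finally, $\D_{2,b}$ is an ACM-diagram by the isomorphism-invariance argument from the first half.

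I expect the main obstacle to be that isomorphism-invariance argument, specifically condition (iv). There one must produce $\F$-product morphisms for $\D_{2,b}$ by composing the original product morphism with the inverse actor isomorphism and with the induced product isomorphism. That product isomorphism is obtained from Lemma~\ref{lem:productmorphisms}, and cone-tightness is needed to lift $\F$ of it to an honest isomorphism in $\C$.
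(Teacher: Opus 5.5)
Your proposal is correct and follows the paper's proof. Restriction is the same whiskering computation. In the extension, the paper's four cases are exactly your conjugation formula $\beta^{2,b}_w\circ\D_2(g)\circ(\beta^{2,b}_z)^{-1}$ evaluated case by case, with identity components off the image:
\begin{itemize}
\item $\D_{1,b}(x)$ on the image;
\item $\D_2(y)$ for arrows with both ends off the image;
\item $\D_2(y)\circ(\beta^{1,b}_x)^{-1}$ for arrows leaving the image;
\item $\beta^{1,b}_x\circ\D_2(y)$ for arrows entering it.
\end{itemize}
Your uniform formula is slightly better. It also covers arrows of $\CatJ_2$ between objects of $\iota(\CatJ_1)$ that are not of the form $\iota(f)$, which the paper's case split omits. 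Your isomorphism-invariance check of the ACM axioms also supplies a verification the paper never gives.

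Two adjustments. First, do not try to prove that $\iota$ is full. Respecting the ACM structure only yields $\iota(C_a)\subseteq C_{\iota(a)}$; compare the extra hypothesis $\iota_1(C_a)=C_{\iota_1(a)}$ in Proposition~\ref{prop:prodUnionHasLimit}. So the argument from generating relations does not go through. Fullness is also unnecessary: $\D_{2,b}\circ\iota$ only evaluates $\D_{2,b}$ at arrows $\iota(f)$, where your naturality computation already returns $\D_{1,b}(f)$, so agreement on morphisms cannot fail.

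Second, condition (iv) is not the obstacle you expect. It needs neither Lemma~\ref{lem:productmorphisms} nor cone-tightness, and the lemma assumes neither. Let $\phi_i\colon A_i\to P$ be the original $\F$-product morphism with legs $p_c$. Keep the apex $P$ with new legs $\beta_c\circ p_c$, which $\F$ still sends to a product, and use $\phi_i\circ\beta_i^{-1}$ as the product morphism. The interaction spans in (v) then transport by conjugation in the same way.
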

    
For any $i$ is in $\{a, b\}$, Lemma \ref{lem:natisosacrossinclusions} establishes that the following diagrams commute:

\smallskip

\begin{center}
\begin{tikzcd}
	{\CatJ_1} && {\CatJ_2} \\
	\\
	& \CatC
	\arrow["\iota", from=1-1, to=1-3]
	\arrow[""{name=0, anchor=center, inner sep=0}, "{\D_1}"', shift right=3, bend left=-30, from=1-1, to=3-2]
	\arrow[""{name=1, anchor=center, inner sep=0}, "{\D_{1,i}}", from=1-1, to=3-2]
	\arrow[""{name=2, anchor=center, inner sep=0}, "{\D_{2,i}}"', from=1-3, to=3-2]
	\arrow[""{name=3, anchor=center, inner sep=0}, "{\D_2}", shift left=3, bend left=30, from=1-3, to=3-2]
	\arrow["{\beta^{1,i}}"', Rightarrow, from=0, to=1]
	\arrow["{\beta^{2,i}}", Rightarrow, from=3, to=2]
\end{tikzcd}
\end{center}

\smallskip
 
\begin{center}
\begin{tikzcd}[cramped]
	{\D_1(x)} & {\D_1(y)} && {\D_2(\iota(x))} & {\D_2(\iota(y))} \\
	{\D_{1,i}(x)} & {\D_{1,i}(y)} && {\D_{2,i}(\iota(x))} & {\D_{2,i}(\iota(y))}
	\arrow["{\D_1(\leq)}", from=1-1, to=1-2]
	\arrow["{\beta^{1,i}_x}"', from=1-1, to=2-1]
	\arrow[""{name=0, anchor=center, inner sep=0}, "{\beta^{1,i}_y}", from=1-2, to=2-2]
	\arrow["{\D_2(\iota(\leq))}", from=1-4, to=1-5]
	\arrow[""{name=1, anchor=center, inner sep=0}, "{\beta^{2,i}_{\iota(x)}}"', from=1-4, to=2-4]
	\arrow["{\beta^{2,i}_{\iota(y)}}", from=1-5, to=2-5]
	\arrow["{\D_{1,i}(\leq)}"', from=2-1, to=2-2]
	\arrow["{\D_{2,i}(\iota(\leq))}"', from=2-4, to=2-5]
	\arrow["\iota", rightarrow, maps to, shorten <= 20pt, shorten >= 25pt, from=0, to=1]
\end{tikzcd}
\end{center}

\smallskip

\begin{proof}
Take $y \leq x$ to be a morphism in $\CatJ_1$. Naturality of $\beta^{2,a}$ implies that
\begin{equation}\label{eq:Lem:naturalitybeta}
\beta^{2,a}_{\iota(y)} \circ \D_2(\iota(\leq)) = \D_{2,a}(\iota(\leq)) \circ \beta^{2,a}_{\iota(x)}.
\end{equation} 
Since $\D_{1,a}$ is $\D_{2,a} \circ \iota$ and $\beta^{1,a}$ is $\beta^{2,a}_{\iota(-)}$, Equation \eqref{eq:Lem:naturalitybeta} implies that 
\[
\beta^{1,a}_y \circ \D_1(\leq) = \D_{1,a}(\leq) \circ \beta^{1,a}_x.
\]
Each $\beta^{1,a}_j$ is an isomorphism for $j$ in $\{x, y\}$, so  $\beta^{1,a}$ is a natural isomorphism.

For any $\D_{1,b}$ that is naturally isomorphic to $\D_1$ and any $y$ in $\CatJ_2$, the index $y$ will have one of four properties: 
\begin{enumerate}
    \item there is a unique $x$ in $\CatJ_1$ so that \[y = \iota(x),\] where $x$ is unique since $\iota$ is faithful and injective on objects;
    \item $y$ is a morphism with neither domain nor codomain in $\iota(\CatJ_1)$;
    \item for some unique object $x$ in $\CatJ_1$, $y$ is a morphism from $\iota(x)$ to an object not in $\iota(\CatJ_1)$;
    \item for some unique $x$ in $\CatJ_1$, $y$ is a morphism from an object not in $\iota(\CatJ_1)$ to $\iota(x)$.
\end{enumerate}

For each case respectively, define $\D_{2,b}\colon \CatJ_2 \rightarrow \C$ by:
\begin{enumerate}
        \item\label{item:faithful} $\D_{2,b}$ takes $y$ to $\D_{1,b}(x)$;
        \item $\D_{2,b}$ takes $y$ to $\D_2(y)$;
        \item $\D_{2,b}$ takes $y$ to $\D_2(y) \circ (\beta^{1,b}_x)^{-1}$;
        \item $\D_{2,b}$ takes $y$ to $\beta^{1,b}_{x} \circ \D_2(y)$.
\end{enumerate}  
Item \ref{item:faithful} in the definition of $\D_{2,b}$ implies 
\[
\D_{1,b} = \D_{2,b} \circ \iota.
\]

Since $\D_{1,b}$, and $\D_2$ are functors and $\beta^{1,b}$ is a natural transformation, for any composite morphism $g \circ f$, 
\[
\D_{2,b}(g \circ f) = \D_{2,b}(g) \circ \D_{2,b}(f)
\] 
which demonstrates that $\D_2^b$ is a functor. Define $\beta^{2,b}\colon \D_2 \Rightarrow \D_{2,b}$ by
\[
\beta^{2,b}_{\iota(x)} = \beta^{1,b}_{x} \quad \text{and} \quad \beta^{2,b}_y = \id_y.
\]  
Take $y' \leq x'$ to be a morphism in $\CatJ_2$. Either $\beta^{2,b}_i$ is equal to $\beta^{1,b}_{\iota(j_i)}$, or $\beta^{2,b}_i$ is equal to $\id_i$
    where $i$ is in $\{x', y'\}$ and any $j_i$ is in $\CatJ_1$. In either case, cancellation laws of identities and naturality of $\beta_j^{1,b}$ together imply
\[
\beta^{2,b}_{y'} \circ \D_2(\leq) = \D_{2,b}(\leq) \circ \beta^{2,b}_{x'},
\] 
which together with each $\beta^{2,b}_i$ being an isomorphism implies that $\beta^{2,b}$ is a natural isomorphism.
\end{proof}

To formally capture the notion of subsystem inclusion, take $\D_1$ and $\D_2$ to be ACM-diagrams in $\CatC$ that actor-index categories $\CatJ_1$ and $\CatJ_2$ index,  respectively.  If $\iota^*\colon \D_1 \rightarrow \D_2$ is an ACM subdiagram and if $[\D_1^{X_1}]$ and $[\D_2^{X_2}]$ are the corresponding ACM-systems for $[\D_1]$ and $[\D_2]$, then define
\[
  [\D_1^{X_1}]\subseteq[\D_2^{X_2}]
\]
and say that $[\D_1^{X_1}]$ is an \emph{ACM subsystem} of $[\D_2^{X_2}]$. The symbol $\subseteq$ denotes \emph{subsystem inclusion} and refers to the specific inclusion functor $\iota\colon \CatJ_1 \rightarrow \CatJ_2$, which gives rise to an inclusion of the isomorphism classes of diagrams in the skeleton of $\Cat \downarrow \C$. Lemma~\ref{lem:natisosacrossinclusions} implies that $[\D_1^{X_1}]\subseteq[\D_2^{X_2}]$ is well-defined as a subclass containment between natural isomorphism classes of diagrams and that the natural isomorphisms between diagrams are compatible with subsystem inclusions.

The goal is to identify a category where these inclusions serve as morphisms that decompose an open system into simple pieces. The diagram $\D_2$ could consist of exactly the actors of $\D_1$ and additional constraints. If $\C$ and $\C'$ are $\mathsf{Set}$ and $\F$ is the identity functor, this results in the diagrams $\D_1$ including into $\D_2$ but apices of $\F$-limits (which in this case are just limits) $X_2$ including into $X_1$.

\subsection{The rigid inclusion category}

\begin{definition}
Take $[\D_1^{X_1}]$ and $[\D_2^{X_2}]$ to be ACM-systems in $\CatC$ with representative ACM-diagrams $\D_1$ and $\D_2$ which have shape $\CatJ_1$ and $\CatJ_2$, respectively.  A subsystem inclusion ${[\D_1^{X_1}]\subseteq[\D_2^{X_2}]}$ defined by an ACM subdiagram $\iota^*\colon \D_1 \rightarrow \D_2$ is a \emph{simple rigid inclusion} if precisely one of the following hold:
\begin{enumerate}
\item[$\bullet$] $\iota$ is an isomorphism; 
\item[$\bullet$] $\CatJ_2$ has exactly one additional constraint index (that is, one additional constraint morphism) for an existing actor index in $\CatJ_2$ than has $\CatJ_1$;
\item[$\bullet$] $\CatJ_2$ has exactly one more actor index than $\CatJ_1$, with only a trivial constraint morphism.
\end{enumerate}
The subsystem inclusion $[\D_1^{X_1}]\subseteq[\D_2^{X_2}]$ is a \emph{rigid inclusion} if it is a finite composite of simple rigid inclusions.  If $\CatJ_2$ has an additional actor index, then say that $\subseteq$ \emph{includes an additional actor}. If $\CatJ_2$ has an additional constraint index, then say that $\subseteq$ \emph{includes an additional constraint}.
\end{definition}

In addition to the conditions on $\F$ that Section~\ref{ACM-diagrams-and-weldings} initially imposed, henceforth require that $\F$ has nondegenerate $\F$-pullbacks in $\CatC^\prime$, $\F$ is faithful, cone-tight, and span-extendable. The principal abstract results of this work are Theorems~\ref{MainA} and ~\ref{MainB}.

\begin{theorem}\label{MainA}
Composition of rigid inclusions is composition in a category $\Kin(\F)$. Furthermore, composition of inclusion functors of ACM-diagrams uniquely determines the composition of rigid inclusions.
\end{theorem}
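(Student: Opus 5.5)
The plan is to build $\Kin(\F)$ explicitly and check the category axioms. The objects are the ACM-systems $[\D^X]$. Definition~\ref{def:ACMDX} and Proposition~\ref{lem:IsoDiagIsoDiagX} show that these depend only on $[\D]$, at least for diagrams admitting $\F$-limits, for instance those that are reducible to decomposable (Proposition~\ref{prop:redtosimpthenFlimit}). The morphisms are rigid inclusions $[\D_1^{X_1}]\subseteq[\D_2^{X_2}]$. Each such morphism is recorded by its underlying inclusion functor $\iota\colon\CatJ_1\to\CatJ_2$ respecting the ACM structure, together with the requirement that $\iota$ factor as a finite composite of simple rigid inclusions.

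Composition of $\iota\colon\CatJ_1\to\CatJ_2$ and $\kappa\colon\CatJ_2\to\CatJ_3$ is defined as $\kappa\circ\iota$. The first step is to check that this is again a morphism.
\begin{enumerate}
\item[(a)] A composite of faithful functors that are injective on objects is again faithful and injective on objects, and it still sends actor, constraint, and interaction indices to indices of the same type and $\star$ to $\star$. It therefore respects the ACM structure.
\item[(b)] By Lemma~\ref{lem:InclusionsAreSubdiagrams}, $\D_3\circ\kappa\circ\iota$ is an ACM-diagram. Since $\D_2=\D_3\circ\kappa$ and $\D_1=\D_2\circ\iota$, it equals $\D_1$. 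Hence $(\kappa\circ\iota)^\ast$ is an ACM subdiagram, and $(\kappa\circ\iota)^\ast=\kappa^\ast\circ\iota^\ast$ in $\Cat\downarrow\CatC$.
\item[(c)] Concatenating the two finite chains of simple rigid inclusions gives a finite chain for the composite, so the composite is again rigid.
\end{enumerate}
Associativity is then inherited directly from associativity of functor composition in $\Cat$. The identity at $[\D^X]$ is $\id_{\CatJ}$, which is a simple rigid inclusion of the first kind, since it is an isomorphism. It is a unit because $\id\circ\iota=\iota=\iota\circ\id$.

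The second assertion, that composition of inclusion functors uniquely determines composition of rigid inclusions, amounts to showing that composition is well defined on classes. Replace $\D_2$ by a naturally isomorphic representative $\D_{2,a}$ via $\beta^{2,a}$. Lemma~\ref{lem:natisosacrossinclusions} restricts $\beta^{2,a}$ along $\iota$ to a natural isomorphism $\D_1\Rightarrow\D_{2,a}\circ\iota$. In the other direction, it extends any natural isomorphism of $\D_1$ to one of $\D_2$ agreeing on $\iota(\CatJ_1)$. Consequently the same functor $\iota$ witnesses an inclusion between any chosen representatives, and the composite $\kappa\circ\iota$ is independent of those choices. Proposition~\ref{lem:IsoDiagIsoDiagX} then lifts these natural isomorphisms to the cone diagrams $\D^X$. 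Cone-tightness gives a lift of the apex isomorphism to $\CatC$, and faithfulness of $\F$ makes that lift unique. This shows the apex plays no role in determining the morphism: the morphism is fixed entirely by the functor $\iota$ on shapes.

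I expect the main obstacle to be this well-definedness step, specifically making sure that restricting and extending natural isomorphisms commutes with composing inclusions. Concretely, one must check that restricting along $\kappa$ and then along $\iota$ gives the same result as restricting along $\kappa\circ\iota$. This is immediate from the formula $\beta^{1,a}_x=\beta^{2,a}_{\iota(x)}$, but it has to be verified against the extension construction in the proof of Lemma~\ref{lem:natisosacrossinclusions}.

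There is a second subtlety I would resolve before anything else. The composite $\kappa\circ\iota$ might also be expressible through a different intermediate chain, and it is not clear that the result would be literally the same functor. I would handle this by taking the underlying functor itself, not the chain, as the data of a morphism. With that convention, uniqueness is tautological.
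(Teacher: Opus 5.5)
Your proposal is correct and follows essentially the same route as the paper. Both arguments compose the underlying inclusion functors, note that $\D_3\circ\kappa\circ\iota=\D_1$ is an ACM subdiagram, concatenate the chains of simple rigid inclusions, inherit associativity and identities from functor composition, and use Lemma~\ref{lem:natisosacrossinclusions} to make the composite independent of the chosen representatives. Your additional appeal to Proposition~\ref{lem:IsoDiagIsoDiagX}, cone-tightness and faithfulness for the apexes is harmless but not needed, since the paper also takes a morphism to be determined by the shape functor alone, exactly as in your final convention.
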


\begin{proof}
For any rigid inclusions $f$ and $g$ where 
\[
f\colon [\D_1^{X_1}] \rightarrow [\D_2^{X_2}] \quad \text{and} \quad g\colon [\D_2^{X_2}]  \rightarrow [\D_3^{X_3}],
\] 
take 
\[
\D_1\colon \CatJ_1 \rightarrow \C, \quad \D_2\colon \CatJ_2 \rightarrow \C, \quad \text{and} \quad \D_3\colon \CatJ_3 \rightarrow \C
\] 
to be representative ACM-diagrams of the ACM-systems $[\D_1^{X_1}]$, $[\D_2^{X_2}]$, and $[\D_3^{X_3}]$, respectively. The rigid inclusions $f$ and $g$ are determined by functors $F\colon \CatJ_1 \rightarrow \CatJ_2$ and $G\colon \CatJ_2 \rightarrow \CatJ_3$ with 
\[
\D_2 \circ F = \D_1 \quad \text{and} \quad \D_3 \circ G = \D_2.
\] 
The equality 
\[
\D_3 \circ G \circ F = \D_2 \circ F = \D_1
\] 
and compositionality of inclusion functors together imply that $G \circ F$ is an ACM subdiagram of $\D_3$. Lemma~\ref{lem:natisosacrossinclusions} implies that $g \circ f$ is uniquely determined by the composition of the underlying inclusion functors which have a well-defined domain $\CatJ_1$ and codomain $\CatJ_3$. Since $f$ and $g$ are finite composites of simple rigid inclusions, so is their composite $g \circ f$, hence it is a rigid inclusion. 

Composition of functors is associative and rigid inclusions and their composition are uniquely determined by functors and the composition of those functors, so composition of rigid inclusions is also associative.

For any ACM-system $[\D^X]$ with representative 
\[
\D^X\colon \CatJ^X \rightarrow \C,
\] 
the identity functor $\id_\CatJ\colon \CatJ \rightarrow \CatJ$ on the shape $\CatJ$ is an isomorphism of actor-index categories. Thus, $\id_\CatJ^*\colon \D \rightarrow \D$ defines a simple rigid inclusion $\id_{[\D^X]}\colon [\D^X] \rightarrow [\D^X]$.  Lemma~\ref{lem:natisosacrossinclusions} implies that this simple rigid inclusion acts on each ACM-diagram for a diagram in $[\D^X]$ as the same identity functor on the shape $\CatJ$, hence it is an identity morphism in $\Kin(\F)$.

Since $\Kin(\F)$ is closed under an associative composition operation and has identity morphisms, $\Kin(\F)$ is a category whose objects are ACM-systems and whose morphisms are rigid inclusions.
    
\end{proof}

\begin{remark}
Every ACM-diagram is a functor $\D\colon\CatJ\to\CatC$. Hence ACM-systems correspond to objects of the slice category $\mathsf{Cat}\downarrow\CatC$ modulo natural isomorphism of diagrams. A rigid inclusion determined by an inclusion functor
$\iota\colon\CatJ_1\hookrightarrow\CatJ_2$ yields a commuting triangle
\[
\D_1=\D_2\circ\iota,
\]
and therefore defines a morphism in $\mathsf{Cat}\downarrow\CatC$. Theorem~\ref{MainA} implies that rigid inclusions depend only on the natural
isomorphism classes of the diagrams, so this assignment defines a faithful functor
\[
\Kin(\F)\hookrightarrow (\mathsf{Cat}\downarrow\CatC)/\!\!\sim
\]
where $\sim$ denotes equivalence of objects under natural isomorphism.
\end{remark}

Simple rigid inclusions depend only on the inclusions of the underlying ACM-diagrams, so if $[\D_1^{X_1}]$ is an ACM subsystem of $[\D_2^{X_2}]$ such that $\CatJ_2$ has either exactly one more actor index than $\CatJ_1$ with a trivial constraint morphism, or exactly one more constraint morphism than $\CatJ_1$, then there is a simple rigid inclusion $[\D_1^{X_1}] \rightarrow [\D_2^{X_2}]$.

A straightforward induction argument proves Lemma~\ref{lem:SimpleHasRigidInclusion}.  The challenge lies not in constructing an appropriate chain of decomposing rigid inclusions from an ACM subdiagram $\D_1$ of an ACM-diagram $\D_2$ to $\D_2$, but in showing that each ACM-diagram that appears in this chain admits an $\F$-limit. If $\D_2$ decomposes external constraints, then Lemma~\ref{lem:SupersimpleSubdiags} guarantees that every intermediate ACM-diagram in this chain has an $\F$-limit.

\begin{lemma}\label{lem:SimpleHasRigidInclusion}
For any ACM subdiagram $\D_1$ of an ACM-diagram $\D_2$ that decomposes external constraints, there is a rigid inclusion $\iota\colon [\D_1^{X_1}] \rightarrow [\D_2^{X_2}]$ from the system $[\D_1^{X_1}]$ for $\D_1$ to a system $[\D_2^{X_2}]$ for $\D_2$.
\end{lemma}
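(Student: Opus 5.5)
We argue by induction on the number of objects of $\CatJ_2$ outside $\iota(\CatJ_1)$. First replace $\CatJ_1$ by its isomorphic image $\iota(\CatJ_1)$. Since $\iota$ is an isomorphism onto its image, this costs one simple rigid inclusion of the first kind, so we may assume $\iota$ is a set-inclusion of actor index categories.

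\textbf{Chain of intermediate index categories.} Build a finite chain
\[
\CatJ_1=\CatK_0\subset \CatK_1\subset\cdots\subset \CatK_n=\CatJ_2
\]
of actor index categories in which each $\CatK_{m+1}$ arises from $\CatK_m$ by one of two steps:
\begin{enumerate}
\item[$\bullet$] adjoining one new actor index $a$ with constraint set $\{\star\}$, together with all interaction indices $\{a,b\}$ for actors $b$ of $\CatK_m$;
\item[$\bullet$] adjoining one constraint index $c\le a$ of $\CatJ_2$ for an actor $a$ already present in $\CatK_m$.
\end{enumerate}
Order the steps as follows: first add all missing actors with only $\star$, then add the missing constraint arrows one at a time. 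Since $\CatJ_2$ is finite, the process terminates. Each $\CatK_m$ is an actor index category, because interaction indices are determined by the actor set and every constraint index is attached to an actor. Set $\D_{(m)}:=\D_2|_{\CatK_m}$. These are ACM subdiagrams of $\D_2$: Lemma~\ref{lem:InclusionsAreSubdiagrams} applies provided the interaction objects of $\D_2$ remain $\F$-pullbacks over the restricted shared-constraint products (see the last paragraph).

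\textbf{Existence of $\F$-limits.} By Lemma~\ref{lem:SupersimpleSubdiags}, each $\D_{(m)}$ decomposes external constraints. Hence each is reducible to decomposable, and Proposition~\ref{prop:redtosimpthenFlimit} gives an $\F$-limit $\Phi^{X_{(m)}}$. Definition~\ref{def:ACMDX} and Proposition~\ref{lem:IsoDiagIsoDiagX} then produce well-defined ACM-systems $[\D_{(m)}^{X_{(m)}}]$. Each inclusion $\CatK_m\subset\CatK_{m+1}$ is a simple rigid inclusion by definition. By Theorem~\ref{MainA} the composite is a rigid inclusion from $[\D_1^{X_1}]$ to $[\D_2^{X_2}]$, and it is determined by the composite functor, which is $\iota$. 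Lemma~\ref{lem:natisosacrossinclusions} shows that the result does not depend on the chosen representatives.

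\textbf{Main obstacle.} The delicate point is that $\D_{(m)}$ must genuinely be an ACM-diagram. When an actor enters with only $\star$, or before all its constraints are added, the interaction object $\D_2(\{a,b\})$ is an $\F$-pullback over $\prod_{c\in C_a\cap C_b}\D(c)$ computed in $\CatJ_2$, not over the smaller product in $\CatK_m$. To handle this, I would take $\D_{(m)}$ to agree with $\D_2$ on actors and constraints and redefine the interactions as fresh $\F$-pullbacks over the restricted shared products. These exist because $\C$ has $\F$-pullbacks and $\F$-product projections exist by Lemma~\ref{lem:projectionscancel}. Nondegeneracy keeps the apices non-terminal, and span-extendability with faithfulness supplies the comparison maps. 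The cost is that $\D_{(m)}$ is then no longer literally a restriction of $\D_{(m+1)}$, and it is unclear how to recover the identity $\D_{(m)}=\D_{(m+1)}\circ\iota$ that the definition of a simple rigid inclusion requires.
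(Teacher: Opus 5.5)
Your outline is the paper's own argument. The paper dispatches this lemma as a ``straightforward induction'' along a chain of simple rigid inclusions: Lemma~\ref{lem:SupersimpleSubdiags} and Proposition~\ref{prop:redtosimpthenFlimit} supply an $\F$-limit at each stage, and Theorem~\ref{MainA} composes the steps. The paper tacitly treats each restriction $\D_2|_{\mathsf{K}_m}$ as an ACM-diagram via Lemma~\ref{lem:InclusionsAreSubdiagrams}. You instead observe that this needs condition~(v) of Definition~\ref{defACMDiag}, and you leave that point open.

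As submitted, your proposal is therefore not a proof, because the intermediate diagrams are never shown to be ACM-diagrams. The repair you sketch also cannot work. A simple rigid inclusion needs $\D_{(m)}=\D_{(m+1)}\circ\iota$ up to natural isomorphism. A fresh $\F$-pullback over a smaller product of constraints is in general not even isomorphic to the original interaction object.

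Your suspicion is nevertheless correct, and the obstruction is real, not presentational. The ``direct check'' behind Lemma~\ref{lem:InclusionsAreSubdiagrams} never verifies~(v) once constraint arrows are dropped. Here is a concrete case in $\SurjSub$:
\begin{itemize}
\item Take $\D_2$ with actors $A_1=A_2=\R$, a constraint $c$ shared by both with $\D_2(c)=\R$ and $f_{1,c}=f_{2,c}=\id_{\R}$, and interaction the diagonal $\Delta\cong\R$. It decomposes external constraints, since the relevant $\F$-product morphisms are $\id_{\R}$.
\item Take $\D_1$ to be the single constrained actor $A_1$ with constraint set $\{\star,c\}$.
\end{itemize}
Any chain of simple rigid inclusions from $[\D_1^{X_1}]$ to $[\D_2^{X_2}]$ must, at some step, adjoin the second actor carrying only $\star$. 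At that step, (v) forces the new interaction object to be an $\F$-product, which is a $2$-manifold. Every later step is a restriction up to natural isomorphism, so that object can never become the $1$-manifold $\Delta$.

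Read literally, then, the statement fails. Neither your argument nor the paper's closes without changing a definition. One option is to relax~(v) for intermediate diagrams. The paper says~(v) is not needed here, but the welding lemmas behind Proposition~\ref{prop:redtosimpthenFlimit} do use it, so the existence of $\F$-limits would have to be rechecked. Another option is to enlarge the class of simple rigid inclusions.
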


The open ACM-systems are the morphisms of $\Kin(\F)$ and contain all information about how systems are part of a larger system, and how they may combine to form larger systems. Theorem~\ref{thm:UnionIsSystem} shows how a total system may be constructed from two subsystems that intersect only along constraints.  This theorem demonstrates how information about constraints may be used to not only decompose a system into open subsystems, but to compose subsystems into larger systems.

Proposition~\ref{prop:prodUnionHasLimit} and Lemma~\ref{lem:SimpleHasRigidInclusion} together imply Theorem~\ref{thm:UnionIsSystem}.

\begin{theorem}\label{thm:UnionIsSystem}
For any ACM-systems $[\D_1^{X_1}]$ and $[\D_2^{X_2}]$, if $\D_1$ and $\D_2$ are ACM subdiagrams of an ACM-diagram $\D$ that decomposes into constraints so their intersection $\D_{12}$ is either
\begin{enumerate}
	\item a diagram with no actors, or
	\item an ACM subdiagram of $\D_1$ and $\D_2$ such that for every actor $a$ in $\D_{12}$, \[\iota_1(C_a) = C_{\iota_1(a)} \quad \text{and} \quad \iota_2(C_a) = C_{\iota_2(a)},\]
\end{enumerate}
    and if $\D$ is their union over $\D_{12}$, then there is an ACM-system $[\D^X]$ for $\D$ and rigid inclusions \[[\D_1^{X_1}] \rightarrow [\D^{X}] \quad \text{and} \quad [\D_2^{X_2}] \rightarrow [\D^{X}].\]
\end{theorem}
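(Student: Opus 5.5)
The plan is to assemble Theorem~\ref{thm:UnionIsSystem} from Proposition~\ref{prop:prodUnionHasLimit} and Lemma~\ref{lem:SimpleHasRigidInclusion}, as the sentence preceding the statement indicates. The hypotheses of the theorem are exactly those of Proposition~\ref{prop:prodUnionHasLimit}, together with the standing assumptions on $\F$ fixed before Theorem~\ref{MainA}: nondegenerate $\F$-pullbacks, faithful, cone-tight, span-extendable, and preserving terminal objects. So the first step invokes that proposition. It yields $\F$-limits $\Phi_1^{X_1}$, $\Phi_2^{X_2}$ and $\Phi_{12}^{X_{12}}$, the $\F$-product projections $\Psi_i\colon X_i\to X_{12}$, and an $\F$-pullback of $\rangle\Psi_1,\Psi_2\langle$ with apex $X$ carrying an $\F$-limit cone $\Phi^X$ over $\D$. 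By Definition~\ref{def:ACMDX} this gives the ACM-system $[\D^X]$. Proposition~\ref{lem:IsoDiagIsoDiagX} makes it independent of the chosen representative of $[\D]$. It also shows that the systems $[\D_i^{X_i}]$ built here agree with the given ones $[\D_i^{X_i}]$, since any two $\F$-limits of naturally isomorphic diagrams give naturally isomorphic cone diagrams.

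The second step produces the rigid inclusions. The aim is to apply Lemma~\ref{lem:SimpleHasRigidInclusion} to the ACM subdiagrams $\iota_i^\ast\colon\D_i\to\D$, which requires that $\D$ decompose external constraints. I will check that decomposing into constraints implies this. For each actor $A_a$, the $\F$-product isomorphism $A_a\to\prod_{c\in C_a}\D(c)$ can be composed with the $\F$-product projection onto $\prod_{c\in\Ext[a:\CatJ]}\D(c)$. That projection exists because $\Ext[a:\CatJ]\subseteq C_a$, using associativity of $\F$-products from Lemma~\ref{lem:projectionscancel} and equation~\eqref{eq:lem:projectionscancel}. The composite is the $\F$-product morphism required by Definition~\ref{defSimple}. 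With this in hand, Lemma~\ref{lem:SimpleHasRigidInclusion} provides rigid inclusions $[\D_1^{X_1}]\to[\D^X]$ and $[\D_2^{X_2}]\to[\D^X]$. Lemma~\ref{lem:SupersimpleSubdiags} guarantees that every intermediate diagram in the chain of simple rigid inclusions still decomposes external constraints. By Proposition~\ref{prop:redtosimpthenFlimit}, each such intermediate diagram therefore admits an $\F$-limit, so each one determines an honest object of $\Kin(\F)$.

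I expect the main obstacle to be compatibility of the target object. Lemma~\ref{lem:SimpleHasRigidInclusion} produces a rigid inclusion into ``a system $[\D^{X'}]$ for $\D$'', and this must be identified with the system $[\D^X]$ coming from Proposition~\ref{prop:prodUnionHasLimit}, for both $i=1,2$ simultaneously. I would settle this with cone-tightness. The two $\F$-limits $\Phi^X$ and $\Phi^{X'}$ of the same diagram $\D$ have a unique cone isomorphism between their images under $\F$, and cone-tightness lifts it to a cone isomorphism $X\to X'$ in $\C$. This gives a natural isomorphism $\D^X\cong\D^{X'}$ that is the identity on $\D$, hence $[\D^X]=[\D^{X'}]$. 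Finally, Lemma~\ref{lem:natisosacrossinclusions} ensures that the inclusions $\iota_i$ act compatibly on these natural isomorphism classes, so the two rigid inclusions have the common codomain $[\D^X]$.
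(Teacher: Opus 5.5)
Your proposal is correct and follows the paper's route: the paper's entire proof is the single sentence that Proposition~\ref{prop:prodUnionHasLimit} and Lemma~\ref{lem:SimpleHasRigidInclusion} together imply the theorem, and you invoke those same two results. Your additions are the implicit steps behind that sentence. One is the check that decomposing into constraints implies decomposing external constraints, so that Lemma~\ref{lem:SimpleHasRigidInclusion} applies to each $\iota_i^\ast\colon \D_i\to\D$. The other is the identification of the common codomain via cone-tightness and Proposition~\ref{lem:IsoDiagIsoDiagX}.
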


\begin{definition}\label{Def:Skeleton}
A \textit{constraint skeleton} $G_{\D}$ of an ACM-diagram $\D$ is an undirected graph whose vertices are actor indices of $\D$. There is an edge between vertices $a_i$ and $a_j$ if there is a nontrivial constraint index between $a_i$ and $a_j$, which is to say that 
\[
C_i \cap C_j \neq \{\star\}.
\]
\end{definition}

\begin{theorem}\label{Thm:Acyclic}
If $\D$ is an ACM-diagram whose constraint skeleton is acyclic, then $\D$ is reducible to decomposable.
\end{theorem}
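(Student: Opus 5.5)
Since the constraint skeleton $G_\D$ is a forest, I plan to induct on the number of actors, welding along edges, so that each step reduces an actor $A_k$ sitting at a leaf into its unique neighbour. If $\D$ has a single actor, then $\Ext[i:\CatJ]=\{\star\}$, or at most a set of trivial constraints. The required $\F$-product morphism $A_i\to\D(\star)$ is then the unique map to the terminal object, because $\F$ preserves terminal objects. So $\D$ decomposes external constraints.

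For the general case, fix an actor index $k$ whose vertex has degree at most one in $G_\D$. Such a vertex exists in any finite forest.

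If $k$ is isolated, then $C_k\cap C_j=\{\star\}$ for every $j\ne k$, so $\Ext[k:\CatJ]=\{\star\}$ and $A_k$ already has its $\F$-product morphism.

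If $k$ is a leaf with neighbour $l$, then $\Ext[k:\CatJ]=C_k\cap C_l$. Condition~(iv) of Definition~\ref{defACMDiag} then supplies the $\F$-product morphism $A_k\to\prod_{c\in C_k\cap C_l}\D(c)$. Next I weld $A_k$ and $A_l$ to $A_{kl}=\KSpan(A_k,A_l)$ with constraint set $C_k\cup C_l$.

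The key point is that, for every third actor $b$, acyclicity forces
\[
C_{kl}\cap C_b=(C_k\cap C_b)\cup(C_l\cap C_b)=C_l\cap C_b,
\]
since $C_k\cap C_b=\{\star\}$. The cospan needed for condition~(iv) of the welded diagram is therefore obtained as follows. On the $A_{kl}$ side, take the projection $\rho_l\colon A_{kl}\to A_l$ composed with the existing morphism $A_l\to\prod_{c\in C_l\cap C_b}\D(c)$. On the $A_b$ side, take the existing morphism $A_b\to\prod_{c\in C_l\cap C_b}\D(c)$. The interaction $\KSpan(A_{kl},A_b)$ then exists as an $\F$-pullback, since $\C$ has $\F$-pullbacks. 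Nondegeneracy keeps condition~(iii), as in the proof of Lemma~\ref{lem:CanAlwaysWeldSimple}, so $\D\xrightarrow{(k,l)}\D'$ is a legitimate reduction.

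It remains to check that the skeleton of $\D'$ is the forest obtained by contracting the edge $\{k,l\}$, and contracting an edge of a forest gives a forest. The inductive hypothesis then applies to $\D'$, and prepending this weld to the resulting chain gives a chain of reductions of $\D$ ending in a diagram that decomposes external constraints. By induction, one can even continue down to a single actor.

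The main obstacle is checking that the contracted skeleton stays acyclic and that no new nontrivial shared constraints appear. The risk is a constraint index $c$ lying in $C_k$, $C_l$ and $C_b$ simultaneously. In a forest this cannot happen with $c\ne\star$, because $c$ would give an edge $k$–$b$, contradicting that $k$ is a leaf. I would argue this carefully using Definition~\ref{Def:Skeleton}: a cycle in the new skeleton through $kl$ lifts to a cycle or a repeated edge in $G_\D$.

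A secondary subtlety is that the leaf $k$ alone need not have an $\F$-product morphism onto $\prod_{c\in\Ext}\D(c)$ when its neighbour has many constraints. This is why I weld rather than attempt to verify Definition~\ref{defSimple} directly. Alternatively, the whole argument can be run as a direct induction yielding a single-actor diagram, which trivially decomposes external constraints.
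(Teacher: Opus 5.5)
Your argument is correct up to one small omission, and it takes a different route from the paper. The paper deletes a degree-one actor $a$ and applies the inductive hypothesis to the restriction $\D|_{\CatJ'}$, handling disconnected skeletons component by component. It then uses Proposition~\ref{prop:redtosimpthenFlimit} to replace the whole remainder by one welded actor $X$, the apex of an $\F$-limit, and asserts that $\D(a)$ and $X$ form a two-actor diagram that decomposes external constraints.

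You instead weld the leaf into its neighbour at the top level. This produces a chain of reductions of $\D$ itself, and each step only requires checking that the welded diagram is again an ACM-diagram with forest skeleton. Your identity $C_{kl}\cap C_b=C_l\cap C_b$ settles both points:
\begin{itemize}
\item the required $\F$-product morphism for $A_{kl}$ is the existing morphism $A_l\to\prod_{c\in C_l\cap C_b}\D(c)$ precomposed with $\rho_l$;
\item the new skeleton is $G_\D$ with $k$ deleted.
\end{itemize}
This identity uses only $\deg k\le 1$; acyclicity is what guarantees such a vertex exists at every stage.

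What your route buys is self-containment. It needs only the ACM axioms, existence and nondegeneracy of $\F$-pullbacks, and faithfulness of $\F$ for commutativity of the new squares. It makes no appeal to $\F$-limits, span-extendability, or cone-tightness. It also spells out two steps the paper passes over quickly:
\begin{itemize}
\item how a chain of reductions of $\D|_{\CatJ'}$ becomes a chain of reductions of $\D$ with $a$ carried along (each new welded actor needs an interaction with $a$, which is exactly your verification with $a$ in the role of $b$);
\item why the welded remainder maps to the constraints shared with $a$ (through the projection to the neighbour, your $\rho_l$).
\end{itemize}
The paper's route is shorter on the page and phrases the last step as a leaf-plus-rest two-actor system, in keeping with the open-system viewpoint. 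However, it delegates most of the work to Proposition~\ref{prop:redtosimpthenFlimit}.

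The omission: when $k$ is isolated, you record that $A_k$ has its $\F$-product morphism but perform no reduction, so the induction stalls there. Weld $k$ with any other actor $l$. Since $C_k\cap C_b=\{\star\}$ for every $b\neq k$, the identity $C_{kl}\cap C_b=C_l\cap C_b$ and the rest of your verification go through verbatim.

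Two cosmetic points:
\begin{itemize}
\item For a single actor, $\Ext[i:\CatJ]$ is empty rather than $\{\star\}$, because the complementary constraint set is an empty union. The conclusion is unchanged.
\item In your last paragraph, the actor that may lack its $\F$-product morphism onto its external constraints is the neighbour $l$ (or any actor of degree at least two), not the leaf $k$. You already showed via condition~(iv) that $k$ has one.
\end{itemize}
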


\begin{proof}
Every ACM-diagram $\D$ with only one actor decomposes external constraints because its unique actor has no nontrivial external constraints.

Fix a natural number $n$. Assume the theorem holds for every ACM-diagram with $n$ or fewer actors whose constraint skeleton is connected and acyclic. Take an ACM-diagram $\D$ with $n+1$ actors, and write $\Gamma$ for its constraint skeleton. Denote by $V(\Gamma)$ and $E(\Gamma)$ the vertex and edge sets of $\Gamma$.

If $\Gamma$ is not connected, then it has finitely many connected components, each of which has $n$ or fewer vertices.  Under the inductive hypothesis, each component corresponds to an ACM subdiagram that is reducible to decomposabe.  Proposition~\ref{prop:redtosimpthenFlimit} shows that each of these subdiagrams is weldable to a single welded actor. These welded actors have no external constraints, and so $\D$ is reducible to decomposable.

If $\Gamma$ is connected and every vertex of $\Gamma$ has degree at least two, then
\[
2|E(\Gamma)| = \sum_{v\in V(\Gamma)}\deg(v) \ge 2|V(\Gamma)|, \quad \text{hence} \quad |E(\Gamma)|\ge |V(\Gamma)|.
\]
Since $\Gamma$ is connected and acyclic, it is a tree, so \[|E(\Gamma)|=|V(\Gamma)|-1,\quad \text{contradicting the inequality} \quad |E(\Gamma)|\ge |V(\Gamma)|.\] Therefore, if $\Gamma$ is connected, then it has at least one vertex of degree one. Equivalently, $\D$ has an actor index $a$ whose actor shares nontrivial constraints with only one other actor of $\D$.

Take $\CatJ$ to be the shape of $\D$ and $\CatJ'$ to be the full subcategory on the objects of $\CatJ$ obtained by removing $a$, any interaction index involving $a$, and any constraint index internal to the actor associated with $a$. Take
\[
\D' := \D|_{\CatJ'},
\]
so that $\D'$ has $n$ actors, and its constraint skeleton is acyclic. The inductive hypothesis implies that $\D'$ is reducible to decomposable. Proposition~\ref{prop:redtosimpthenFlimit} therefore provides an $\F$-limit of $\D'$, whose apex $X$ is isomorphic to the single actor obtained by reducing $\D'$ by welding.

Since $a$ shares nontrivial constraints with at most one actor of $\D'$, every external constraint connecting $\D(a)$ to $\D'$ already occurs as an external constraint connecting $\D(a)$ to a single actor of $\D'$.  Denote by $C$ the $\F$-product of these external constraints. The actor $\D(a)$ admits a morphism to $C$, and $X$ admits a morphism to $C$. Therefore, the actors $\D(a)$ and $X$ form a two-actor system that decomposes external constraints, and so $\D$ is reducible to decomposable.
\end{proof}

\begin{definition}
A set of constrained actors $\mathcal A$ is \emph{admissible} if there is an ACM-system $[\D^X]$ with $\mathcal A$ as its set of constrained actors.
\end{definition}

\begin{theorem}\label{MainB}
For any set of constrained actors $\A$ in $\CatC$ that are the constrained actors for an ACM-diagram $\D$, if $\D$ is reducible to an ACM-diagram that decomposes external constraints, then $\A$ is admissible.  Furthermore, there is a rigid inclusion whose target is $[\D^{\prime X'}]$, where $\D'$ is a decomposing reduction of $\D$, and whose source is an isomorphism class of $\F$-limits over a single actor of $\D'$.
\end{theorem}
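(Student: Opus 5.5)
The first assertion is essentially Proposition~\ref{prop:redtosimpthenFlimit} read through Definition~\ref{def:ACMDX}. Since $\D$ is reducible to decomposable, Proposition~\ref{prop:redtosimpthenFlimit} gives an $\F$-limit $\Phi^X$ of $\D$. Definition~\ref{def:ACMDX} then produces the ACM-system $[\D^X]$, which is well defined by Proposition~\ref{lem:IsoDiagIsoDiagX} using cone-tightness. The constrained actors of $\D$ are precisely $\A$, so $\A$ is admissible. I would check explicitly that the constrained actors of any representative of $[\D]$ agree with $\A$ up to the natural isomorphisms. Lemma~\ref{lem:natisosacrossinclusions}, applied to the single-actor ACM subdiagrams $\A_i\hookrightarrow\D$, shows that these agree up to isomorphism, which is the intended reading of ``set of constrained actors''.

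For the second assertion, fix a chain of reductions $\D=\D_0\to\cdots\to\D_m=\D'$ with $\D'$ decomposing external constraints. Applying Proposition~\ref{prop:redtosimpthenFlimit} to $\D'$ itself, which is trivially reducible to decomposable, gives an $\F$-limit $\Phi^{X'}$ and hence the system $[\D'^{X'}]$. Next choose any actor index $a$ of $\D'$. Restrict $\D'$ to the subposet consisting of $a$, the constraint indices in $C_a$ (including $\star$), and no interaction indices. This restriction is an ACM subdiagram $\A_a\hookrightarrow\D'$ via the set inclusion, so Lemma~\ref{lem:InclusionsAreSubdiagrams} makes it an ACM-diagram. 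It has one actor, and therefore trivially decomposes external constraints (it is also a subdiagram of a decomposable diagram, so Lemma~\ref{lem:SupersimpleSubdiags} applies as well).

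The first paragraph of the proof of Proposition~\ref{prop:redtosimpthenFlimit}, which relies on Lemma~\ref{lem:actorsonly} and faithfulness of $\F$, shows that $\A_a$ has an $\F$-limit with apex $\D'(a)$ and identity leg. Thus $[\A_a^{\D'(a)}]$ is an isomorphism class of $\F$-limits over a single actor of $\D'$. Lemma~\ref{lem:SimpleHasRigidInclusion}, applied to $\A_a\subseteq\D'$ with $\D'$ decomposing external constraints, then yields the required rigid inclusion $[\A_a^{\D'(a)}]\to[\D'^{X'}]$.

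The main obstacle is making Lemma~\ref{lem:SimpleHasRigidInclusion} honest in this instance. The inclusion must factor as a finite chain of simple rigid inclusions, and every intermediate diagram must be an ACM-diagram with an $\F$-limit. My plan is to order the steps as follows. First add the remaining constraint morphisms of $\D'$ one at a time to existing actors; these are only for actors already present, so initially just constraints of $a$ not yet included. Then add each remaining actor with only the trivial constraint $\star$. Then add that actor's constraints one at a time, together with the interaction indices forced by the actor-index-category axioms.

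Each intermediate diagram is a restriction of $\D'$ (a full-on-interactions subposet), so it is an ACM subdiagram by Lemma~\ref{lem:InclusionsAreSubdiagrams}. It decomposes external constraints by Lemma~\ref{lem:SupersimpleSubdiags}, and so has an $\F$-limit by Proposition~\ref{prop:redtosimpthenFlimit}. The delicate point is the interaction objects at each stage. The interaction $\D'(\{i,j\})$ is an $\F$-pullback over the constraints shared in all of $\D'$, but an intermediate stage may not yet contain all those shared constraints. I would handle this by replacing interactions at intermediate stages with $\F$-pullbacks over the currently shared constraints, using span-extendability to relate them. If that proves too messy, I would fall back on ordering the additions so that each pair of actors acquires all its shared constraints before its interaction index appears.
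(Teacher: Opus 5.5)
Your proposal is correct and is essentially the paper's proof. Admissibility follows from Proposition~\ref{prop:redtosimpthenFlimit} and Proposition~\ref{lem:IsoDiagIsoDiagX} (through Definition~\ref{def:ACMDX}), and the rigid inclusion comes from regarding a single constrained actor of the decomposing reduction $\D'$ as an ACM subdiagram that decomposes external constraints and applying Lemma~\ref{lem:SimpleHasRigidInclusion}.

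The paper simply cites that lemma, so your last two paragraphs are not needed for this theorem. Neither repair you propose there works as stated, though. The interaction index $\{i,j\}$ is present as soon as both actors are, and a newly added actor carries only $\star$, so your fallback ordering is impossible. Replacing interaction objects breaks the equality $\D_1=\D_2\circ\iota$ that an ACM subdiagram requires.
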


Theorem \ref{MainB} supplies a concrete sufficient condition for a set of actors to constitute an ACM-system and furnishes a compositional framework that reflects the system’s openness.  

\begin{proof}
Proposition~\ref{prop:redtosimpthenFlimit} implies that any ACM-diagram $\D$ that is reducible to an ACM-diagram that decomposes external constraints has an $\F$-limit $\Phi^X$. The $\F$-limit is defined only up to a cone isomorphism, which is unique since $\F$ is cone-tight, so Proposition~\ref{lem:IsoDiagIsoDiagX} implies that there is a unique ACM-system $[\D^X]$ of $\D$.  The system $\A$ is, therefore, an admissible system of constrained actors.

A diagram consisting of a single constrained actor $A$ of a decomposing reduction $\D'$ of $\D$ is an ACM-diagram that decomposes external constraints and is an ACM subdiagram of $\D'$. Lemma~\ref{lem:SimpleHasRigidInclusion} implies that there is a rigid inclusion $[A] \rightarrow [\D^{\prime X'}]$ from the system $[A]$ with underlying diagram having only the constrained actor $A$. 
\end{proof}


\section{Open CMK systems as rigid inclusions}\label{Sec:CMK-systems-as-rigid-inclusions}


A classical kinematic system is a model for describing the motion of point particles without considering the forces that cause their motion.  The configuration space of the system is a smooth manifold $M$.  A state of the system is a point in the tangent or cotangent bundle of $M$, $TM$ or $T^\ast M$, respectively \cite{AM, Arn}.  A path of motion is a smooth function $c$ that takes values in $M$ and whose domain is an interval $I$.  With this in mind, it is important to specialize $\Kin(\F)$ by specializing $\F$.

\subsection{CMK systems and open CMK systems}

Denote by $\Diff$ the category whose objects are smooth manifolds and whose morphisms are smooth functions between smooth manifolds.  Denote by $\SurjSub$ the subcategory of $\Diff$ whose morphisms are surjective submersions.  

\begin{definition}\label{def:CMK-systeM}
A CMK system is an ACM-system in $\SurjSub$ where the functor $\F$ is the inclusion functor from $\SurjSub$ to $\Diff$.  
\end{definition}

The principal concrete result of the current paper is Theorem~\ref{Main:SurjSub:A}. Henceforth, take $\F$ to be the inclusion functor from $\SurjSub$ to $\Diff$.

\begin{theorem}\label{Main:SurjSub:A}
Composition of rigid inclusions is composition in a category $\mathsf{Kin}(\F)$ whose objects are ACM-systems in $\SurjSub$ and whose morphisms are rigid inclusions.
\end{theorem}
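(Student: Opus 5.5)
The plan is to reduce Theorem~\ref{Main:SurjSub:A} to Theorem~\ref{MainA}. That theorem yields $\Kin(\F)$ as soon as the standing hypotheses hold. These are: $\SurjSub$ and $\Diff$ have terminal objects that $\F$ preserves, $\SurjSub$ has nondegenerate $\F$-pullbacks, and $\F$ is faithful, cone-tight, and span-extendable. So the proof consists of verifying each hypothesis for the inclusion $\F\colon\SurjSub\to\Diff$.

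\textbf{Routine hypotheses.} The one-point manifold is terminal in both categories. The map from any nonempty manifold to a point is a surjective submersion, so we restrict to nonempty manifolds or treat the empty manifold separately. Faithfulness is immediate because $\F$ is an inclusion.

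\textbf{$\F$-pullbacks and nondegeneracy.} For a cospan $\rangle f,g\langle$ of surjective submersions $f\colon M\to C$, $g\colon N\to C$, the fiber product $M\times_C N$ is an embedded submanifold of $M\times N$, since $f$ and $g$ are transverse. It is a pullback in $\Diff$. Its projections are surjective, because $g$ is surjective, and they are submersions, because base change of a submersion is a submersion. So it is an $\F$-pullback. For nondegeneracy, $\dim M\times_C N=\dim M+\dim N-\dim C$, and $\dim M\geq\dim C$. Hence if $N$ is not a point, the apex is not a point either, after handling the degenerate zero-dimensional cases.

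\textbf{Span-extendability.} Given $\F$-pullbacks $P$ of $\rangle c_L,c_R\langle$ and $Q$ of $\rangle c_L\circ f,c_R\circ g\langle$, the universal property in $\Diff$ gives $\Phi=f\times g$ restricted to $Q_A=Q_L\times_{C_A}Q_R\to P_A$. I must show $\Phi$ is a surjective submersion. Surjectivity follows by lifting through $f$ and $g$ pointwise. For the submersion property, I use the exact-sequence description of tangent spaces of fiber products, $T_{(x,y)}(M\times_C N)=\{(v,w)\mid df\,v=dg\,w\}$. This surjectivity of the differential is the main obstacle. Given $(v,w)$ with $dc_L\,v=dc_R\,w$, choose a lift $v'$ of $v$ under $df$. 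Then correct a lift $w'$ of $w$ under $dg$ by an element of $\ker dg$, so that $d(c_R g)w'=d(c_L f)v'$. This works because $dc_R\,w=dc_L\,v$ already holds, and the only freedom needed is adjusting within fibers. Concretely, $d(c_Rg)w'=dc_R\,w=dc_L\,v=d(c_Lf)v'$ holds automatically, so no correction is required. I expect this step to be checkable directly.

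\textbf{Cone-tightness.} Suppose $\Phi^X$ and $\Phi^{X'}$ are $\F$-limits of $\D$ in $\SurjSub$. The unique cone isomorphism $\Psi\colon X\to X'$ in $\Diff$ is a diffeomorphism, and a diffeomorphism is a surjective submersion. So $\Psi$ lies in $\SurjSub$, and it is a cone isomorphism there because $\F$ is faithful. With all hypotheses verified, Theorem~\ref{MainA} gives the category $\Kin(\F)$ whose objects are ACM-systems in $\SurjSub$ and whose morphisms are rigid inclusions.
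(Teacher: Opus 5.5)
Your proposal is correct and follows essentially the same route as the paper. You reduce to Theorem~\ref{MainA}, check terminal objects and faithfulness, and verify cone-tightness by noting that a cone diffeomorphism is a surjective submersion. You verify span-extendability by lifting points and tangent vectors through $f$ and $g$, as the paper does; the paper cites \cite{WY} for the existence of $\F$-pullbacks rather than reproving it. Your nondegeneracy argument is more roundabout than needed: the legs of an $\F$-pullback are surjective, so a one-point apex forces both feet to be points, which also settles the zero-dimensional cases you left open without any dimension count.
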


\begin{proof}

The inclusion functor $\F$ from $\SurjSub$ to $\Diff$ is span-tight and $\SurjSub$ has $\F$-pullbacks, \cite[Theorem~4.5]{WY}.  Furthermore, both $\SurjSub$ and $\Diff$ have terminal objects, namely one-point manifolds, which $\F$ preserves. Since $\SurjSub$ is a subcategory of $\Diff$ with the same objects, the functor $\F$ is the inclusion on hom-sets, hence faithful.

Take any cospan $C$ and any $\F$-pullback $\langle f, g\rangle$ of $C$ in $\SurjSub$, where
\[
f\colon X \to Y \quad \text{and} \quad g\colon X \to Z.
\]
If $X$ is terminal, then $f$ and $g$ are surjective, so both $Y$ and $Z$ are one-point manifolds and hence terminal. In particular, if at least one of $Y$ or $Z$ is not terminal, then $X$ is not terminal, so $\SurjSub$ has non-degenerate $\F$-pullbacks.

To obtain Theorem~\ref{Main:SurjSub:A} from Theorem~\ref{MainA}, it suffices to prove that $\F$ is cone-tight and span-extendable.

To show that $\F$ is cone-tight, take $\Phi^A$ and $\Phi^B$ to be any two $\F$-limits in $\SurjSub$ of a diagram $\D$. Since $\F(\Phi^A)$ and $\F(\Phi^B)$ are limits in $\Diff$ of $\F \circ \D$, there is a unique cone diffeomorphism
\[
\Psi \colon \F(A) \to \F(B).
\]
Every diffeomorphism is a surjective submersion, so $\Psi$ is a morphism in $\SurjSub$. Regard $\Psi'$ as the morphism in $\SurjSub$ that acts as $\Psi$ on the underlying objects; then $\Psi'$ is a cone isomorphism between $\Phi^A$ and $\Phi^B$. The functor $\F$ is, therefore, cone-tight.  The faithfulness of $\F$ implies that $\Psi'$ is unique.

To show that $\F$ is span-extendable, take
\[
f\colon A \to B, \quad g\colon C \to B, \quad f'\colon A' \to A, \quad \text{and} \quad g'\colon C' \to C
\]
to be morphisms in $\SurjSub$, the span $\langle \pi_{A'}, \pi_{C'}\rangle$ an $\F$-pullback of the cospan $\langle f \circ f', g \circ g'\rangle$ such that the source of this $\F$-pullback is the fibered product $A' \times_B C'$, and the span $\langle \pi_A, \pi_C \rangle$ and $\F$-pullback of the cospan $\rangle f, g \langle$ such that the source of this $\F$-pullback is the fibered product $A \times_B C$. The surjective submersions $\pi_{A'}$, $\pi_{C'}$, $\pi_A$, and $\pi_C$ are the restricted projections from the product.

Since $\langle f' \circ \pi_{A'}, g' \circ \pi_{C'}\rangle$ is a span over the cospan $\langle f \circ f', g \circ g'\rangle$ and $\langle \pi_A, \pi_C \rangle$ is a pullback in $\Diff$, there is a unique span morphism
\[
\Psi\colon A' \times_B C' \to A \times_B C,
\]
given by
\[
\Psi(a',c') = \big(f'(a'), g'(c')\big).
\]

To show that $\Psi$ is surjective, take $(a,c)$ in $A \times_B C$, so
\[
f(a) = g(c).
\]
The maps $f'$ and $g'$ are surjective, so there exist $a'$ in $A'$ and $c'$ in $C'$ with
\[
f'(a') = a \quad \text{and} \quad g'(c') = c.
\]
The equalities
\[
f(f'(a')) = f(a) = g(c) = g(g'(c')),
\]
imply that
\[
f \circ f'(a') = g \circ g'(c').
\]
The pair $(a',c')$ is in $A' \times_B C'$, hence
\[
\Psi(a',c') = \big(f'(a'), g'(c')\big) = (a,c),
\]
so $\Psi$ is surjective.

To show that $\Psi$ is a submersion, take $(a',c')$ to be in $A' \times_B C'$ and $v$ to be in $T_{\Psi(a',c')}(A \times_B C)$. Write \[\Psi(a',c') = (a,c),\quad \text{so that} \quad f(a) = g(c).\]  Since $A \times_B C$ is a submanifold of $A \times C$,
\[
v = (v_1,v_2), \quad \text{for some} \quad v_1 \in T_aA\quad \text{and}\quad v_2 \in T_cC.
\]
Because $A \times_B C$ is the fiber product of the submersions $f$ and $g$, its tangent space is
\[
T_{(a,c)}(A \times_B C)
  = \{(u_1,u_2) \in T_aA \times T_cC \mid {\rm d}f_a(u_1) = {\rm d}g_c(u_2)\}.
\]
Since $v$ is in $T_{(a,c)}(A \times_B C)$, 
\[
{\rm d}f_a(v_1) = {\rm d}g_c(v_2).
\]

The map $\Psi$ is given by
\[
\Psi = (f' \circ \pi_{A'},\, g' \circ \pi_{C'}),
\]
so the isomorphism \[T(A\times_BC) \cong TA\times_{TB}TC\] and the chain rule together imply that
\[
{\rm d}\Psi_{(a',c')} = ({\rm d}f'_{a'} \circ {\rm d}\pi_{A'},\, {\rm d}g'_{c'} \circ {\rm d}\pi_{C'}).
\]
Since $f'$ and $g'$ are submersions, the linear maps
\[
{\rm d}f'_{a'} \colon T_{a'}A' \to T_aA \quad \text{and} \quad {\rm d}g'_{c'} \colon T_{c'}C' \to T_cC
\]
are surjective, and so there exist
\[
w_1 \in T_{a'}A' \quad \text{and} \quad w_2 \in T_{c'}C'
\]
such that
\[
{\rm d}f'_{a'}(w_1) = v_1 \quad \text{and} \quad {\rm d}g'_{c'}(w_2) = v_2.
\]

Use the condition \[{\rm d}f_a(v_1) = {\rm d}g_c(v_2)\] and the equalities \[v_1 = {\rm d}f'_{a'}(w_1)\quad \text{and}\quad v_2 = {\rm d}g'_{c'}(w_2)\] to obtain the equalities
\[
{\rm d}f_a\big({\rm d}f'_{a'}(w_1)\big)
  = {\rm d}f_a(v_1)
  = {\rm d}g_c(v_2)
  = {\rm d}g_c\big({\rm d}g'_{c'}(w_2)\big),
\]
hence
\[
{\rm d}f_a \circ {\rm d}f'_{a'}(w_1) = {\rm d}g_c \circ {\rm d}g'_{c'}(w_2).
\]
The tangent space of the fiber product $A' \times_B C'$ is given by the equality
\[
T_{(a',c')}(A' \times_B C')
  = \{(u_1,u_2) \in T_{a'}A' \times T_{c'}C' \mid {\rm d}f_a \circ {\rm d}f'_{a'}(u_1)
      = {\rm d}g_c \circ {\rm d}g'_{c'}(u_2)\},
\]
so
\[
(w_1,w_2) \in T_{(a',c')}(A' \times_B C').
\]

Finally, the equalities
\[
{\rm d}\Psi_{(a',c')}(w_1,w_2)
  = \big({\rm d}f'_{a'}(w_1), {\rm d}g'_{c'}(w_2)\big)
  = (v_1,v_2)
  = v
\]
imply that ${\rm d}\Psi_{(a',c')}$ is surjective for every $(a',c')$ in $A' \times_B C'$, and so $\Psi$ is a submersion.

\end{proof}

The functor $\F$ is cone-tight, span-extendable, and $\SurjSub$ has non-degenerate $\F$-pullbacks.  Theorem~\ref{MainB} implies that $\SurjSub$ has $\F$-limits of reducible to decomposable ACM-diagrams.  However, $\SurjSub$ does not have limits since it does not have pullbacks \cite[p. 451]{WY}, nor does it have $\F$-limits, as Example~\ref{nonexample:Flimitb} demonstrates.

\begin{example}\label{nonexample:Flimitb}

Take $h_1$ and $h_2$ to be the functions
\[h_1\colon \R \rightarrow \R \quad \text{and} \quad h_2\colon \R^2 \rightarrow \R\]
given by
\[h_1(x) = \begin{cases}
    0 & x \leq 0 \\
    {\rm e}^{-\frac{1}{x}} & x > 0,
\end{cases}\]
and
\[h_2(x,y) = h_1(x)h_1(y) + h_1(-x)h_1(y) + h_1(x)h_1(-y) + h_1(-x)h_1(-y).\]
Take $\D$ to be a diagram in $\SurjSub$ with actors $A_1$, $A_2$, and $A_3$, and constraints $C_{4}$, $C_{5}$, and $C_{6}$, all equal to $\R^2$, and constraint morphisms 
\begin{align*}
&f_{1,4}\colon A_1 \rightarrow C_4, \quad f_{1,5}\colon A_1 \rightarrow C_5, \quad f_{2,4}\colon A_2 \rightarrow C_4, \quad
    f_{2,6}\colon A_2 \rightarrow C_6, \quad \text{and} \quad f_{3,5}\colon A_3 \rightarrow C_5,
\end{align*}
given by \[f_{1,4} = f_{1,5} = f_{2,4} = f_{2,6} = f_{3,5} = \id_{\R^2},\quad \text{where} \quad \id_{\mathds R^2}(x,y) = (x,y),\] and \[f_{3,6}\colon A_3 \rightarrow C_6, \quad \text{where} \quad f_{3,6}(x,y) = (x+h_2(x,y), y+h_2(x,y)).\]

A straightforward calculation shows that all maps $f_{i,j}$ above are surjective submersions. Since $\SurjSub$ has $\F$-pullbacks, take $\D$ to also have interactions, which are $\F$-pullbacks of the cospans $\rangle f_{1,4}, f_{2,4} \langle$, $\rangle f_{1,5}, f_{3,5} \langle$, and $\rangle f_{2,6}, f_{3,6} \langle$ and are all diffeomorphic to $\R^2$. Therefore, $\D$ is an ACM-diagram that looks like this:

\[\begin{tikzcd}[ampersand replacement=\&,column sep=tiny]
	\&\& {A_1 \times_{C_4} A_2} \\
	{A_1} \&\& {C_4} \&\& {A_2} \\
	\& {C_5} \&\& {C_6} \\
	{A_1 \times_{C_5} A_3} \&\& {A_3} \&\& {A_2 \times_{C_6} A_3}
	\arrow[from=1-3, to=2-1]
	\arrow[from=1-3, to=2-5]
	\arrow["{f_{1,4}}", from=2-1, to=2-3]
	\arrow["{f_{1,5}}"', from=2-1, to=3-2]
	\arrow["{f_{2,4}}"', from=2-5, to=2-3]
	\arrow["{f_{2,6}}", from=2-5, to=3-4]
	\arrow[from=4-1, to=2-1]
	\arrow[from=4-1, to=4-3]
	\arrow["{f_{3,5}}", from=4-3, to=3-2]
	\arrow["{f_{3,6}}"', from=4-3, to=3-4]
	\arrow[from=4-5, to=2-5]
	\arrow[from=4-5, to=4-3]
\end{tikzcd}\]

If it exists, an $\F$-limit of $\D$ is a limit of $\F \circ \D$ in $\Diff$, which is isomorphic to an equalizer of the morphisms \[f_{1,4} \times f_{3,5} \times f_{2,6}\colon A_1 \times A_2 \times A_3 \rightarrow C_4 \times C_5 \times C_6\] with \[f_{2,4} \times f_{1,5} \times f_{3,6}\colon A_1 \times A_2 \times A_3 \rightarrow C_4 \times C_5 \times C_6.\] Since the forgetful functor from $\Diff$ to $\Set$ is representable \cite[Lemma~3.2]{WY}, to show that an $\F$-limit of $\D$ does not exist, it suffices to show that this set $X$ with the subspace topology of $\R^6$ does not admit a manifold structure:
\[
X = \{p = (x_1, y_1, x_2, y_2, x_3, y_3) \in A_1 \times A_2 \times A_3 \mid f_{1,4} \times f_{3,5} \times f_{2,6}(p) = f_{2,4} \times f_{1,5} \times f_{3,6}(p) \}.
\]
If $X$ were a manifold, then it would be diffeomorphic to 
\[
\{(x,y) \in \R^2 \mid x = 0 \text{ or } y = 0\},
\]
which does not admit a manifold structure. Hence $\D$ does not have an $\F$-limit. 
\end{example}

The diagram $\D$ does not decompose external constraints, since all constraints are external to each actor and the product morphisms $f_{1,4} \times f_{1,5}$, $f_{2,4} \times f_{2,6}$, and $f_{3,5} \times f_{3,6}$ are from $2$-manifolds to $4$-manifolds, so they are not surjective submersions. However, the subdiagram $\D_1$ of $\D$ that lacks $A_1$ and the subdiagram $\D_2$ of $\D$ that lacks $A_2$, $A_3$, and $C_6$ do decompose external constraints and intersect only along constraints $C_4$ and $C_5$. Further, $\D$ is a union of $\D_1$ and $\D_2$. This example shows that decomposing external constraints is insufficient to guarantee that the union of two diagrams that each decompose external constraints has an $\F$-limit.

\subsection{The Newton Daemon}

Recall that the category $\Kin(\F)$ has CMK systems as objects and open CMK systems as morphisms, where $\F$ is the inclusion fuctor $\F\colon\SurjSub\to\Diff$. For a CMK system $[\D^X]$ with a constraint index category $\CatJ$ and constraint indices $\text{Ob}_C(\CatJ)$, choose an $\F$-limit $X$ and write 
\[
  \Phi^X_C\colon \F(X) \longrightarrow \prod_{c\in \text{Ob}_C(\CatJ)} \F \circ \D(c)
\]
for the product of the legs of $\F(\Phi^X)$ from $\F(X)$ to constraints.  The paths of motion of the CMK system $[\D^X]$ with $X$ as the specified configuration space are the smooth paths in $X$.  Some physically realistic systems require time-dependent restrictions that lie outside the basic ACM construction. To describe them, introduce the following notion.

\begin{definition}
Take a non-empty subset $\Lambda$ of the constraint index set $\text{Ob}_C(\CatJ)$ and the function $\pi$ to be the product
\[
\pi = \bigtimes_{c\in\Lambda} \Phi^X_c \colon X \to \prod_{c\in\Lambda} \D(c).
 \]
For any $\Lambda$ so that $\pi$ is a surjective submersion, and for any nonempty real interval $I$ and any $r$ in $\mathds N_0$, a $C^r$-\emph{Newton Daemon} with time interval $I$ is a $\mathcal C^r$ path
\[
\mathcal N\colon I \to \prod_{c\in\Lambda} \D(c), \quad t \;\mapsto\; \mathcal N_t .
\]
For each $t$ in $I$ define the \emph{daemon-restricted configuration manifold}
\[
M_t := \big\{x\in X \mid \pi(x)=\mathcal N_t\big\}.
\]
\end{definition}

A Newton Daemon is an exogenous, non-responsive controller: at each $t$ in $I$ it specifies the slice $M_t$, thereby restricting the configuration space available to the system.  The construction permits a description within the ACM framework of the admissible paths of systems that are over-constrained or required to interact with their environment in a prescribed way. Since $\pi$ in the definition above is a submersion, every $M_t$ embeds smoothly in $X$.  

If $\D$ decomposes into constraints, then $\Phi_C^X$ is a surjective submersion, so $\pi$ is also a surjective submersion for any choice of $\Lambda$. The framework of ACM-diagrams and the category $\Kin(\F)$ supports the construction of daemon-restricted configuration spaces and the study of time varying configuration manifolds. An upcoming work will explore this from a dynamical perspective.


\subsection{Planar and spatial linkages}


The classification of planar and spatial linkages within the framework of ACM-systems requires consideration of the ambient space and its symmetry group. Denote by $\SE(n)$ the special Euclidean group in dimension $n$, acting on $\mathds E^n$, the $n$-dimensional Euclidean space. Although the present discussion focuses on the physical cases when $n$ is $2$ or $3$, the basic definitions make sense in any dimension. The classical theory of linkages treats a linkage as a collection of rigid bodies, or \emph{links}, attached at joints that permit some degree of relative motion of the links. In dimension $n$, the group $\SE(n)$ preserves all rigid bodies; hence it describes all possible orientations of the links of a linkage. 

It is helpful to use a consistent diagrammatical scheme to visualize the modeling of linkages by ACM-diagrams.  This scheme enriches the labeling of interactions provided by a skeleton because it captures the way in which actors interact over shared constraints.  

\begin{example}\label{ex:planarActor}

View an actor $A$ in the plane  with configuration space $\SE(2)$ as a circular bearing, like this:

\medskip

\begin{center} 
\begin{tikzpicture}[scale = 1]
\begin{scope}[xshift = 0in]
\coordinate(A) at (180:2);

\coordinate[shift = {(180:2)}](D) at (70:15pt);


\draw[] (A) circle (15pt);
\draw[fill = black] (D) circle (2pt);

\draw[fill = black] (A) circle (2pt) node[font = \small, anchor = east, outer sep = 0pt] at (A) {$A$};

\end{scope}
\end{tikzpicture}
\end{center} 

\noindent The center point indicates the position of the axle in space and the point on the circle indicates the orientation of the outer casing of the circular bearing, so the configuration space of the bearing is $\SE(2)$.

Take $\theta_a$ to indicate an element of $\mathds S^1$, viewed as the unit circle in $\mathds R^2$.  Identify a point in $\SE(2)$ by a triple $(x_a \theta_a + y_a \theta_a^\perp, \theta_a)$, where $x_a$ and $y_a$ are real numbers and $\theta_a^\perp$ is the point on $\SS^1$ that is rotated counterclockwise from $\theta$ by a fourth of a circle.
\end{example}

\begin{example}\label{ex:RigidBar}

Two planar actors $A$ and $B$ may be joined together by a rigid bar that forces an agreement between the orientation of the actors, like this:

\begin{center} 
\begin{tikzpicture}[scale = 1]
\begin{scope}[xshift = 0in]

\coordinate (A) at (180:2);
\coordinate (B) at (0:2);

\coordinate[shift = {(180:2)}] (D) at (0:15pt);
\coordinate[shift = {(0:2)}]   (E) at (180:15pt);
\draw (B) circle (15pt);
\draw (A) circle (15pt);


\draw[ultra thick, white] (D) -- (E);
\draw[thick]             (D) -- (E);

\fill (D) circle (2pt);
\fill (E) circle (2pt);

\fill (A) circle (2pt);
\fill (B) circle (2pt);

\node[font=\small, above] at ($(A)$) {$A$};
\node[font=\small, above] at ($(B)!0.5!(C)$) {$B$};

\end{scope}
\end{tikzpicture}
\end{center}

\noindent Note that the placement of the bar in the diagram indicates an agreement of the angles $\theta_a$ and $\theta_b$.

The constraint space of the system is $\SE(2)$.  Given that the bar has (positive) length $L$, view this space as a fibered product by taking both $A$ and $B$ to be $\SE(2)$, and their shared constraint $C$ to also be $\SE(2)$. Define the constraint morphisms $\pi_{a,c}$ and $\pi_{b,c}$ by \[\pi_{a,c}\colon (x_a \theta_a + y_a \theta_a^\perp, \theta_a) \mapsto ((x_a+L) \theta_a + y_a \theta_a^\perp, \theta_a)\] and \[\pi_{b,c}\colon (x_b\theta_b + y_b \theta_b^\perp, \theta_b) \mapsto (-x_b \theta_b - y_b \theta_b^\perp, -\theta_b).\] 

An $\F$-pullback of $\rangle \pi_{a,c}, \pi_{b,c}\langle$  is span isomorphic to the fiber product %
\begin{align*}
X &= \big\{(x_a \theta_a + y_a \theta_a^\perp, \theta_a,  x_b \theta_b + y_b \theta_b^\perp, \theta_b) \in A \times B \\
&\hspace{2in}\mid ((x_a+L)\theta_a + y_a \theta_a^\perp, \theta_a) = (-x_b\theta_b - y_b\theta_b^\perp, -\theta_b)\big\}.
\end{align*}
The manifold $X$ is diffeomorphic to $\SE(2)$, and a reduction of the diagram by welding together $A$ and $B$ produces an externally unconstrained one-link linkage isomorphic to $A$ and $B$ individually, but further contains information about the distance between $A$ and $B$, which will become important in studying dynamics. 

Since the angles $\theta_a$ and $-\theta_b$ agree, it is more efficient to represent the system like this:

\smallskip

\begin{center} 
\begin{tikzpicture}[scale = 1]
\begin{scope}[xshift = 0in]

\coordinate (A) at (180:2);
\coordinate (B) at (0:2);
\coordinate[shift = {(0:15pt)}] (a) at (A);

\draw[ultra thick, white] (a) -- (B);
\draw[thick]             (a) -- (B);

\draw (A) circle (15pt);

\fill (A) circle (2pt);
\fill (a) circle (2pt);
\fill (B) circle (2pt);

\node[font=\small, above] at ($(A)$) {$A$};
\node[font=\small, above] at ($(B)!0.5!(C)$) {$B$};

\end{scope}
\end{tikzpicture}
\end{center}

\end{example}

The present modeling of linkages with the ACM framework characterizes linkages as an assemblage of oriented point-masses with respect to kinematic constraints.

\begin{enumerate}
  \item[$\bullet$] A \emph{linkage} in $n$-dimensional Euclidean space is an object of $\Kin(\F)$ such that every actor is isomorphic to $\SE(n)$.

\medskip

  \item[$\bullet$] An \emph{open linkage} is a morphism in $\Kin(\F)$ whose source and target are both linkages.

\medskip

  \item[$\bullet$] A (lower) \emph{kinematic pair} is a linkage with two actors.

\medskip

  \item[$\bullet$] An \emph{open kinematic pair} is a morphism in $\Kin(\F)$ whose source is a kinematic pair.
\end{enumerate}

\noindent Morphisms between two actors into the same constraint represent links in the ACM framework by encoding the constraint data. The modeling of linkages typically (but not always) involves fixing the position and orientation of a rigid body in the linkage, for example, a bar in a system of bars connected to other bars. The ACM framework does not initially fix the motion of a specified actor, but may do so using a Newton Daemon after constructing the linkage. Adjoining an additional actor and its constraint morphisms with an existing actor is kinematically equivalent to adjoining a link in a classical linkage since an additional link may have only one additional attachment point. It is also dynamically equivalent if the links are considered massless and the attachment points are massive. A forthcoming work treats dynamics in an ACM framework.
 
Modeling physical linkages imposes two additional structural features: Each actor and each constraint between actors carries a transitive smooth left action of $\SE(n)$, that is, it is an $\SE(n)$-manifold. Furthermore, constraint morphisms are smooth and $\SE(n)$-equivariant. This is to say that for any actor $A$, constraint $C$, constraint morphism $\pi_{a,c}$, element $(g,x)$ of $\SE(n)\times A$, \[\pi_{a,c}(gx) = g\pi_{a,c}(x).\]

In the classical theory of linkages, the group $\SE(n)$ appears as the group under which rigid bodies are invariant. The ACM framework makes the role of $\SE(n)$ explicit and mechanical. An actor in the plane or in space is modeled as a framed point particle that one may think of physically as a small bearing, whose configuration space is naturally $\SE(n)$. Constraints then specify how two such bearings may be attached or interact, encoding allowable relative configurations via constraint cospans. Kinematic pairs arise as pullbacks determined by these constraints. Accordingly, $\SE(n)$ appears not merely as a descriptive tool for relative motion, but as the intrinsic configuration space of the state-bearing entities themselves. In this way, the ACM framework does not introduce new structure; it relocates the role of $\SE(n)$ from an informal description of relative motion in the classical literature to a precise, compositional description of actor states and their interactions.

\begin{definition}
For any linkage $L$ that has a representative $\D^X$, where 
\[
\D\colon\CatJ\to\SurjSub,
\] 
the linkage $L$ is \emph{overconstrained} if 
\begin{equation}\label{Sec5:Eq:Overconstrained}
\sum_{a \in \text{Ob}_A(\CatJ)} \dim \D(a) < \dim \SE(n) + \sum_{c \in \Ext[\CatJ:\CatJ]} \dim \D(c). 
\end{equation}
\end{definition}

\begin{theorem}
If an ACM-diagram representing a linkage \simple, then the linkage is not overconstrained. 
\end{theorem}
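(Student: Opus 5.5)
The plan is to compare both sides of \eqref{Sec5:Eq:Overconstrained} with the dimension of an $\F$-limit of the diagram. Write $\D\colon\CatJ\to\SurjSub$ for the representative, which decomposes external constraints. First I will dispose of the literal reading of $\Ext[\CatJ:\CatJ]$. By definition, $C^\perp_{\CatJ,\CatJ}$ is the union of the $C_j$ over actor indices \emph{not} in $\CatJ$. That union is empty, so $\Ext[\CatJ:\CatJ]$ is empty. The right-hand side of \eqref{Sec5:Eq:Overconstrained} is then just $\dim\SE(n)$. Since there is at least one actor and every actor is isomorphic to $\SE(n)$, the left-hand side is at least $\dim\SE(n)$, so the strict inequality fails. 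Because this is degenerate, I will also prove the stronger, intended statement, in which $\Ext[\CatJ:\CatJ]$ is read as the set $E=\bigcup_i \Ext[i:\CatJ]$ of constraints shared by at least two actors. Any nonempty reading of $\Ext[\CatJ:\CatJ]$ should be contained in $E$, and $E$ only enlarges the right-hand side.

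For the stronger statement I will compute $\dim X$ by the welding induction of Proposition~\ref{prop:redtosimpthenFlimit}. Since $\D$ decomposes external constraints, Lemma~\ref{lem:CanAlwaysWeldSimple} lets me weld actors one at a time, each stage again decomposing external constraints. Lemma~\ref{lem:WeldedLimitsAreLimits} shows that the final single actor is the apex $X$ of an $\F$-limit of $\D$. Each weld replaces $A_i, A_j$ by the fiber product $A_i\times_P A_j$, where $P=\prod_{c\in C_i\cap C_j}\D(c)$. Both maps to $P$ are surjective submersions, obtained from the $\F$-product morphisms into the external constraints followed by projections, via Lemma~\ref{lem:projectionscancel}. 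Hence the fiber product is a nonempty manifold of dimension $\dim A_i+\dim A_j-\dim P$, and $\dim P$ is the sum of the $\dim\D(c)$ over $c\in C_i\cap C_j$. A constraint $c$ shared by $m_c\ge 2$ actors is subtracted exactly $m_c-1$ times over the whole chain, because each weld merges two constraint sets and removes one redundant copy of each shared $c$. The bookkeeping for this is \eqref{eq:extconprodA} and \eqref{eq:Lem:weldunioncon}. Therefore
\[
\dim X=\sum_{a\in\Ob_A(\CatJ)}\dim\D(a)-\sum_{c\in E}(m_c-1)\dim\D(c)\le \sum_{a\in\Ob_A(\CatJ)}\dim\D(a)-\sum_{c\in E}\dim\D(c).
\]

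It remains to show $\dim X\ge\dim\SE(n)$, and I will use the equivariance assumption for this. The group $\SE(n)$ acts diagonally on $\prod_a\D(a)$. Because every constraint morphism is $\SE(n)$-equivariant, this action preserves the limit set $X\subseteq\prod_a\D(a)$. The action on $X$ is free, because its action on a single factor $\D(a)\cong\SE(n)$ is free: $\SE(n)$ acts on each actor as a torsor. Since $X$ is nonempty, an orbit map $\SE(n)\to X$ is an injective immersion, which gives $\dim X\ge\dim\SE(n)$. Combining this with the display yields $\sum_a\dim\D(a)\ge\dim\SE(n)+\sum_{c\in E}\dim\D(c)$, so the linkage is not overconstrained.

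The main obstacle is the multiplicity count: showing that the total dimension subtracted across the welding chain is exactly $\sum_c(m_c-1)\dim\D(c)$, independent of the welding order. I would first try an induction on the number of actors, tracking the constraint sets $C_{ij}=C_i\cup C_j$ and using that $\F$-products over unions decompose as in \eqref{eq:lem:projectionscancel}. I also need to confirm that the $\SE(n)$-action on actors is free rather than merely transitive; if only transitivity is assumed, I would use that a transitive action on a manifold isomorphic to $\SE(n)$ has discrete stabilizers, which still makes orbit maps immersions.
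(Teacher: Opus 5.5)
Your proof is correct, but it takes a different route from the paper's. The paper never computes $\dim X$. It inducts on the number of actors by deleting one actor $A_i$ and applies the inductive hypothesis to the remaining subdiagram, which still decomposes external constraints by Lemma~\ref{lem:SupersimpleSubdiags}. It then adds the single inequality $\dim A_i\ge\sum_{c\in\Ext[i:\CatJ]}\dim\D(c)$, which comes from the surjective submersion $A_i\to\prod_{c\in\Ext[i:\CatJ]}\D(c)$; the base case is $\dim A=\dim\SE(n)$ for a lone actor. Under your reading, that step closes because every constraint shared by two actors either involves $i$ or is shared by two actors of the subdiagram. As you noticed, under the literal definition $\Ext[\CatJ:\CatJ]=\emptyset$ and the statement is trivial. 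The content of the paper's argument is pure dimension counting on the actors; it uses neither $\dim X$ nor the $\SE(n)$-action. Your route instead produces an exact Gr\"ubler--Kutzbach-type mobility formula $\dim X=\sum_a\dim\D(a)-\sum_c(m_c-1)\dim\D(c)$ and then bounds $\dim X$ from below. The multiplicity count you flag as the main obstacle is immediate: track the number of current, possibly welded, actors whose constraint set contains $c$. A weld subtracts $\dim\D(c)$ exactly when it lowers this number by one, and the number falls from $m_c$ to $1$ regardless of welding order. Your argument only uses that each weld is a fiber product over surjective submersions. It therefore applies verbatim to every linkage diagram that is reducible to decomposable, for example the two linked revolute joints of Example~\ref{Ex:RevoluteJoint2}, where the paper's per-actor inequality fails for $A_2$. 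So it actually proves a stronger theorem. Finally, your lower bound needs no freeness or equivariance. Each leg of an $\F$-limit cone is a morphism of $\SurjSub$, so $X\to\D(a)$ is a surjective submersion and $\dim X\ge\dim\D(a)=\dim\SE(n)$ directly.
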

\begin{proof}
For any linkage $[\D^X]$ with a representative ACM-diagram $\D$ that decomposes external constraints, there is a surjective submersion from every actor $A$ in $\D$ to the product of the external constraints of $A$. The codomain of a surjective submersion has dimension no larger than the domain and 
\[
\dim\Big(\prod_{c \in \Ext[a:\CatJ]} \D(c)\Big) = \sum_{c \in \Ext[a:\CatJ]} \dim \D(c), \quad \text{so} \quad \dim A \geq \sum_{c \in \Ext[a:\CatJ]} \dim\D(c). 
\]
When there is only a single link in the linkage $[\D^X]$, there are no external constraints, which implies the negation of \eqref{Sec5:Eq:Overconstrained}.

Take $k$ to be any natural number and assume that the negation of \eqref{Sec5:Eq:Overconstrained} holds for any ACM-diagram with $k$ actors. For any linkage $[\D^X]$ with $k+1$ actors and actor index category $\CatJ$, take $A_i$ to be an actor in a representative ACM-diagram $\D$ and $\D'\colon \CatJ' \rightarrow \SurjSub$ to be an ACM subdiagram of $\D$ where $\CatJ'$ has all actor indices except $i$, all constraint indices not internal to $i$, and all interaction indices determined by the remaining actor indices and constraint indices. Lemma~\ref{lem:SupersimpleSubdiags} implies $\D'$ decomposes external constraints,  so Proposition~\ref{prop:redtosimpthenFlimit} implies that $\D'$ has an $\F$-limit $\Phi^{\prime X'}$, so it has an associated linkage $[\D^{\prime X'}]$. Since $[\D'^{X'}]$ is a linkage with $k$ actors, %
\begin{equation}\label{Sec5:Eq:OverconstrainedThB}
\sum_{a \in \text{Ob}_A(\CatJ')} \dim A_a \geq \dim \SE(n) + \sum_{c \in \Ext[\CatJ':\CatJ']} \dim \D(c).
\end{equation}
Since $\D$ decomposes external constraints, 
\begin{equation}\label{Sec5:Eq:OverconstrainedThC}
\dim(A_i) \geq \sum_{c \in \Ext[i:\CatJ]} \D(c).
\end{equation}
Ineequalities~\eqref{Sec5:Eq:OverconstrainedThB} and \eqref{Sec5:Eq:OverconstrainedThC} together imply that
    \begin{align*}
        \sum_{a \in \text{Ob}_A(\CatJ)} \dim A_a &= \dim(A_i) + \sum_{a \in \text{Ob}_A(\CatJ')} \dim(A_a) \\ &\geq \sum_{c \in \Ext[i:\CatJ]} \D(c) +  \dim \SE(n) + \sum_{c \in \Ext[\CatJ':\CatJ']} \dim \D(c) \\
        &\geq \dim \SE(n) + \sum_{c \in \Ext[\CatJ:\CatJ]} \dim \D(c)
    \end{align*}
    where \[ \sum_{c \in \Ext[\CatJ:\CatJ]} \D(c) \leq \sum_{a \in \text{Ob}_A(\CatJ)} \sum_{c \in \Ext[a:\CatJ]} \D(c)\] implies the ultimate inequality.
\end{proof}

\begin{lemma}\label{lem:EquivariantMapsAreHomogeneous}
Take $G$ to be a Lie group with identity $e$ acting on itself by left multiplication and $M$ to be a smooth $G$--manifold. Take $p$ to be a smooth $G$--equivariant surjective submersion from $G$ to $M$ and write \[m_0:=p(e)\quad \text{and} \quad H:=\Stab(m_0),\] where $\Stab(m_0)$ indicates the stabilizer of $m_0$. The map
\[
\phi\colon G/H\to M \quad \text{by} \quad [g]\mapsto g\cdot m_0,
\]
is a $G$--equivariant diffeomorphism.
\end{lemma}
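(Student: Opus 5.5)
The plan is to reduce everything to the identity $p(g)=g\cdot m_0$, which follows from equivariance: $p(g)=p(g\cdot e)=g\cdot p(e)=g\cdot m_0$. Once this is in hand, the map $\phi$ is precisely the map that $p$ induces on the coset space $G/H$. The proof then has four parts: show $H$ is a closed subgroup, check that $\phi$ is a well-defined equivariant bijection, show it is smooth, and show it is a local diffeomorphism.

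First I verify that $H=\Stab(m_0)$ is a closed subgroup of $G$. It equals $p^{-1}(m_0)$, since $g\cdot m_0=m_0$ exactly when $p(g)=m_0$. As a fiber of a continuous map into a Hausdorff manifold, it is closed. By the closed subgroup theorem, $H$ is an embedded Lie subgroup. Hence $G/H$ carries its unique smooth manifold structure for which the quotient map $q\colon G\to G/H$ is a smooth submersion.

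Next I treat well-definedness, equivariance, and bijectivity.
\begin{enumerate}
\item[$\bullet$] $\phi$ is well defined because $(gh)\cdot m_0=g\cdot m_0$ for every $h$ in $H$.
\item[$\bullet$] $\phi$ is equivariant because $\phi([g'g])=g'g\cdot m_0=g'\cdot\phi([g])$.
\item[$\bullet$] $\phi$ is surjective because $p$ is surjective and $p=\phi\circ q$.
\item[$\bullet$] $\phi$ is injective because $g_1\cdot m_0=g_2\cdot m_0$ implies $g_1^{-1}g_2\in H$, so $[g_1]=[g_2]$.
\end{enumerate}

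For smoothness, I use the identity $\phi\circ q=p$. Since $p$ is smooth and $q$ is a surjective submersion, the standard characteristic property of submersions (smooth maps constant on fibers of a surjective submersion descend to smooth maps) gives that $\phi$ is smooth.

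The step I expect to require the most care is showing that $\phi^{-1}$ is smooth, that is, that $\phi$ is a local diffeomorphism. From ${\rm d}p_g={\rm d}\phi_{[g]}\circ {\rm d}q_g$ and surjectivity of ${\rm d}p_g$, the differential ${\rm d}\phi_{[g]}$ is surjective at every point. For injectivity of ${\rm d}\phi$, I would argue by dimension. The fiber $p^{-1}(m_0)=H$ is an embedded submanifold of dimension $\dim G-\dim M$, since $p$ is a submersion. Therefore
\[
\dim G/H=\dim G-\dim H=\dim M .
\]
A surjective linear map between spaces of equal dimension is an isomorphism, so ${\rm d}\phi_{[g]}$ is invertible for every $[g]$. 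The inverse function theorem then makes $\phi$ a local diffeomorphism, and a bijective local diffeomorphism is a diffeomorphism. As an alternative for this last step, one could use equivariance: ${\rm d}\phi$ has constant rank, and a constant-rank equivariant bijection is a diffeomorphism by the global rank theorem.
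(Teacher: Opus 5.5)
Your proof is correct and follows essentially the same route as the paper. Both arguments show $H$ is closed, factor $p$ through the quotient map $G\to G/H$, check that $\phi$ is a well-defined equivariant bijection, and show ${\rm d}\phi$ is an isomorphism because the fibers of $p$ are exactly the cosets $gH$; the paper compares kernels (${\rm d}p_g$ and ${\rm d}\pi_g$ both have kernel $T_g(gH)$), while you compare dimensions via the regular level set $H=p^{-1}(m_0)$, which is the same fact, and you also make explicit the smoothness of $\phi$ that the paper leaves implicit.
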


\begin{proof}

For any $m$ in $M$, surjectivity of $p$ implies that there is a $g$ in $G$ so that
\[
m = p(g) = p(g \cdot e) = g \cdot p(e) = g \cdot m_0.
\]
Thus $M$ is the orbit of a single point. For any $g$ and $g'$ in $G$, $p(g)$ is equal to $p(g')$ if and only if
\[
g^{-1}g' \cdot m_0 = m_0,
\]
which holds if and only if $g^{-1}g'$ is in $H$. Hence the fibers of $p$ are precisely the left cosets of $H$.

Since $H$ is the preimage of $\{m_0\}$ under the continuous map \[g\mapsto g\cdot m_0,\] it is a closed subgroup of $G$. Therefore $G/H$ is a smooth manifold and the quotient map
\[
\pi\colon G\to G/H
\]
is a surjective submersion.  Define $\phi\colon G/H\to M$ by
\[
\phi([g]) = g\cdot m_0.
\]
If $[g]$ is equal to $[g']$, then $g^{-1}g'$ is in $H$, hence \[g\cdot m_0=g'\cdot m_0,\] so $\phi$ is well-defined, and 
\[
p = \phi\circ\pi.
\]

For each $g$ in $G$, the kernel of $d\pi_g$ is $T_g(gH)$, while equivariance of $p$ implies that the kernel of ${\rm d}p_g$ is $T_g(gH)$ as well. The equality \[{\rm d}p_g = d\phi_{[g]}\circ d\pi_g\] together with the surjectivity of ${\rm d}\pi_g$ implies that ${\rm d}\phi_{[g]}$ is an isomorphism for every $[g]$ in $G/H$. Hence $\phi$ is a local diffeomorphism.

Since $\phi$ is bijective, it is a diffeomorphism. The identity
\[
\phi(k\cdot[g]) = \phi([kg]) = kg\cdot m_0 = k\cdot \phi([g])
\]
implies that $\phi$ is $G$-equivariant.
\end{proof}

\begin{lemma}\label{lem:FiberProductNormalForm}
For any $i$ in $\{1,2\}$, take $p_i$ to be a smooth $G$--equivariant surjective submersion from $G$ to $M$.  Write \[m_i := p_i(e) \quad \text{and} \quad H_i:=\Stab(m_i).\]
Take $P$ to be the fibered product \[P:=G\times_M G = \{(g_1,g_2)\in G\times G \mid p_1(g_1) \\= p_2(g_2)\}.\] For any $g_0$ in $G$ such that $m_1$ is equal to $g_0\cdot m_2,$
\begin{align*}
P = \big\{(g_1,g_2)\in G\times G \mid g_2^{-1}g_1g_0\in H_2\big\}.
\end{align*}
Moreover, the map
\begin{equation}\label{Lemmaeq:FiberProdJointDecomA}
\Phi\colon P\to G\times H_2 \quad \text{by} \quad \Phi(g_1,g_2):=(g_2, g_2^{-1}g_1g_0)
\end{equation}
is a $G$--equivariant diffeomorphism, where $G$ acts by left multiplication on the first factor and trivially on the second.
\end{lemma}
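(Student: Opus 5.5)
The plan is to reduce everything to the orbit description of equivariant submersions out of $G$, exactly as in the proof of Lemma~\ref{lem:EquivariantMapsAreHomogeneous}. Equivariance of each $p_i$ gives
\[
p_i(g) = p_i(g\cdot e) = g\cdot p_i(e) = g\cdot m_i \quad\text{for every } g\in G.
\]
So the defining condition $p_1(g_1)=p_2(g_2)$ of $P$ reads $g_1\cdot m_1 = g_2\cdot m_2$. Substituting $m_1 = g_0\cdot m_2$, this becomes $g_1g_0\cdot m_2 = g_2\cdot m_2$, which is equivalent to $g_2^{-1}g_1g_0\cdot m_2 = m_2$, that is, to $g_2^{-1}g_1g_0\in H_2$. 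This proves the set-theoretic description of $P$. The element $g_0$ exists because Lemma~\ref{lem:EquivariantMapsAreHomogeneous} shows that $M$ is a single orbit, although the statement simply assumes $g_0$ is given.

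Next I check that $\Phi$ is a bijection with an explicit inverse. By the description just obtained, $\Phi(g_1,g_2)=(g_2,g_2^{-1}g_1g_0)$ lands in $G\times H_2$. I define
\[
\Psi\colon G\times H_2\to P \quad\text{by}\quad \Psi(g,h) := (g h g_0^{-1},\, g).
\]
This lands in $P$ because $g^{-1}(ghg_0^{-1})g_0 = h\in H_2$. Direct substitution gives $\Psi\circ\Phi=\id_P$ and $\Phi\circ\Psi=\id_{G\times H_2}$.

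Smoothness is the step I expect to need the most care, because both $P$ and $G\times H_2$ carry induced, not ambient, structures. For $P$: since $p_1$ and $p_2$ are submersions, $P$ is an embedded submanifold of $G\times G$, as used in the proof of Theorem~\ref{Main:SurjSub:A}. For $H_2$: it is closed, being the preimage of $m_2$ under a continuous map, so by Cartan's closed subgroup theorem it is an embedded Lie subgroup. Then $\Phi$ is the restriction to $P$ of the smooth map $G\times G\to G\times G$, $(g_1,g_2)\mapsto(g_2,g_2^{-1}g_1g_0)$. A smooth map into an ambient manifold whose image lies in an embedded submanifold is smooth into that submanifold, so $\Phi$ is smooth into $G\times H_2$. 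Symmetrically, $\Psi$ is the restriction of a smooth map $G\times G\to G\times G$ to the embedded submanifold $G\times H_2$, with image in the embedded submanifold $P$, hence smooth. Therefore $\Phi$ is a diffeomorphism.

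Finally, I check equivariance, with $G$ acting on $P$ by the diagonal left action $k\cdot(g_1,g_2)=(kg_1,kg_2)$, which preserves $P$ by equivariance of the $p_i$. Then
\[
\Phi(kg_1,kg_2) = \bigl(kg_2,\,(kg_2)^{-1}kg_1g_0\bigr) = \bigl(kg_2,\,g_2^{-1}g_1g_0\bigr) = k\cdot\Phi(g_1,g_2),
\]
where the right-hand side uses left multiplication on the first factor and the trivial action on the second. This is exactly the claimed action.
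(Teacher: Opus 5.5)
Your proof is correct and follows the paper's argument essentially step for step: you use the same orbit description $p_i(g)=g\cdot m_i$ to rewrite the fiber-product condition, the same explicit inverse $(g,h)\mapsto(ghg_0^{-1},g)$, and the same equivariance computation. The one addition is your explicit smoothness check, which the paper leaves implicit when it concludes that $\Phi$ is a diffeomorphism from the two composites being identities; you show that $P$ and $G\times H_2$ are embedded submanifolds of $G\times G$, using the submersion hypothesis and Cartan's closed subgroup theorem.
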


\begin{proof}
Follow Lemma~\ref{lem:EquivariantMapsAreHomogeneous} by taking \[m_i := p_i(e)\] so that the fiber product condition is equivalent to the equality
\[
g_2^{-1}g_1\cdot m_1 = m_2.
\]
Transitivity of the action of $G$ on $M$ implies that there is a $g_0$ in $G$ so that 
\[
m_1 = g_0\cdot m_2,
\] 
hence
\[
g_2^{-1}g_1\cdot (g_0\cdot m_2) = m_2  \quad \text{if and only if} \quad g_2^{-1}g_1g_0\in H_2,
\]
which implies \eqref{Lemmaeq:FiberProdJointDecomA}.

The condition that $g_2^{-1}g_1g_0$ is in $H_2$ makes $\Phi$ well-defined. Define
\[
\Phi^{-1}\colon G\times H_2\to P \quad \text{by} \quad  \Phi^{-1}(g,h) := (ghg_0^{-1},\ g).
\]
The equalities
\[
\Phi(\Phi^{-1}(g,h))=\Phi(ghg_0^{-1},g)=(g,\ g^{-1}(ghg_0^{-1})g_0)=(g,h),
\]
and
\[
\Phi^{-1}(\Phi(g_1,g_2))=\Phi^{-1}(g_2,\ g_2^{-1}g_1g_0)=(g_2(g_2^{-1}g_1g_0)g_0^{-1},\ g_2)=(g_1,g_2).
\]
imply that $\Phi$ is a diffeomorphism. The equalities 
\[
\Phi(kg_1,kg_2)=(kg_2,\ (kg_2)^{-1}(kg_1)g_0)=(kg_2,\ g_2^{-1}g_1g_0)=k\cdot \Phi(g_1,g_2)
\]
imply that $\Phi$ is $G$--equivariant under the diagonal left action on $P$ and the given $G$-action on $G\times H_2$.
\end{proof}

\paragraph{\bf The universal joint.}

Recall that the configuration space for a universal joint is isomorphic to $\SE(3)\times \mathds S^1\times \mathds S^1$.  In the literature, the orientation and position of a linkage is fixed by a fixed link, so the configuration space for a universal joint is typically given to be $\mathds S^1\times\mathds S^1$.  Lemma~\ref{lem:FiberProductNormalForm} guarantees that the system will have the same configuration space on fixing the location and orientation of one of the actors.

The ACM framework describes how a linkage may be assembled and thereby identifies obstructions to the existence of linkages arising from a given collection of actors and constraints. In this sense, the universal joint serves as a minimal and physically meaningful application showing that the ACM framework imposes genuine and nontrivial constraints on compositional modeling in classical mechanics.

\begin{theorem}[No Two--Actor Realization of the Universal Joint]
\label{thm:NoTwoActorUniversalJoint}
Take $\SE(3)$ to act on itself by left multiplication.  Take $X$ to be a smooth $\SE(3)$--manifold and
\[
p_1, p_2\colon \SE(3) \to X
\]
to be $\SE(3)$--equivariant surjective submersions. Denote by $P$ the $\F$-pullback source
\[
P = \SE(3) \times_X \SE(3)
\]
with the diagonal $\SE(3)$--action.  

There are no choices of $X$, $p_1$, and $p_2$ so that $P$ is $\SE(3)$--equivariantly diffeomorphic to $\SE(3) \times {\mathds S}^1 \times {\mathds S}^1$ with $\SE(3)$ acting on the first factor (by left multiplication).
\end{theorem}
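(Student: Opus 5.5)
By Lemma~\ref{lem:FiberProductNormalForm}, the problem reduces to a statement about a closed subgroup of $\SE(3)$. Set $m_i := p_i(e)$ and $H_i := \Stab(m_i)$. By Lemma~\ref{lem:EquivariantMapsAreHomogeneous}, $X$ is $\SE(3)$-equivariantly diffeomorphic to $\SE(3)/H_2$, and $H_2$ is a closed subgroup. Lemma~\ref{lem:FiberProductNormalForm} then gives an $\SE(3)$-equivariant diffeomorphism
\[
P \cong \SE(3)\times H_2,
\]
where $\SE(3)$ acts by left multiplication on the first factor and trivially on $H_2$. Suppose, for contradiction, that $P$ is also $\SE(3)$-equivariantly diffeomorphic to $\SE(3)\times \SS^1\times\SS^1$ with the analogous action. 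Composing the two identifications yields an $\SE(3)$-equivariant diffeomorphism
\[
\Theta\colon \SE(3)\times H_2 \to \SE(3)\times \SS^1\times \SS^1 .
\]

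The next step is to pass to orbit spaces. Each action is free and proper, and the orbits are the slices $\SE(3)\times\{h\}$ and $\SE(3)\times\{t\}$. An equivariant diffeomorphism sends orbits to orbits, so $\Theta$ descends to a homeomorphism of orbit spaces. In fact it is a diffeomorphism: the map $h\mapsto \mathrm{pr}_2\,\Theta(e,h)$ is smooth, and it has the smooth inverse induced by $\Theta^{-1}$ in the same way. This gives $H_2\cong \SS^1\times\SS^1$ as manifolds. Being a closed subgroup, $H_2$ is then a compact connected abelian Lie group of dimension two, that is, a $2$-torus embedded in $\SE(3)$.

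It remains to show that $\SE(3)$ contains no $2$-torus, and this is the main obstacle. A compact subgroup of $\SE(3)$ fixes a point: average any orbit with respect to Haar measure and take the centroid, which is fixed because the action is affine. Conjugating by a translation therefore places $H_2$ inside the stabilizer of the origin, which is $\SO(3)$. But $\SO(3)$ has rank one; its maximal tori are the circles of rotations about a fixed axis. One can see this directly: a $2$-torus would have an abelian Lie algebra of dimension two inside $\mathfrak{so}(3)\cong(\R^3,\times)$, and any two commuting elements there are parallel, so no such subalgebra exists. This contradicts $H_2\cong \SS^1\times\SS^1$.

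Hence no choice of $X$, $p_1$, and $p_2$ produces the universal joint. The remaining points to check carefully are the smoothness of the descended orbit-space map and the fixed-point argument for compact subgroups of $\SE(3)$. For the latter, an alternative is to use that the projection $\SE(3)\to\SO(3)$ restricted to a compact subgroup is injective, since $\R^3$ has no nontrivial compact subgroups, which gives the same rank bound.
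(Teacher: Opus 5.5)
Your proposal is correct and follows the paper's route: identify $P$ with $\SE(3)\times H$ via the normal form, read off $H\cong\SS^1\times\SS^1$, conjugate the compact subgroup $H$ into $\SO(3)$, and conclude because $\SO(3)$ has no $2$-dimensional connected subgroup. Your orbit-space argument, the centroid fixed-point argument, and your use of Lemma~\ref{lem:FiberProductNormalForm} with the base-point correction $g_0$ only make rigorous steps the paper states tersely; in particular, the paper's claim that both $p_i$ agree with a single quotient map is slightly imprecise when $p_1(e)\neq p_2(e)$.
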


\begin{proof}
For any $\SE(3)$--equivariant surjective submersion $p\colon \SE(3) \to X$, surjectivity and equivariance imply that the $\SE(3)$--action on $X$ is transitive. Transitivity identifies $X$ as isomorphic to a homogeneous space $\SE(3)/H$, where $H$ is the stabilizer of a point \cite{HelgasonDG}. Up to an $\SE(3)$--equivariant diffeomorphism of $X$, both $p_1$ and $p_2$ agree with the canonical quotient map
\[
\pi\colon \SE(3) \to \SE(3)/H.
\]
Replacing the original cospan by this canonical one does not change the pullback up to $\SE(3)$--equivariant diffeomorphism.

Define the smooth map
\[
\Phi\colon \SE(3) \times H \to \SE(3) \times_{\SE(3)/H} \SE(3), \quad \text{by} \quad \Phi(g,h) = (g, gh).
\]
The map $\Phi$ is an $\SE(3)$--equivariant diffeomorphism with smooth inverse
\[
(g_1,g_2) \to \big(g_1, g_1^{-1} g_2\big),
\]
so $P$ is isomorphic to $\SE(3) \times H$ as $\SE(3)$--manifolds. If $P$ is $\SE(3)$--equivariantly diffeomorphic to $\SE(3) \times {\mathds S}^1 \times {\mathds S}^1$, then $H$ is diffeomorphic to ${\mathds S}^1 \times {\mathds S}^1$ and hence is a $2$--dimensional compact Lie subgroup of $\SE(3)$.

Every compact subgroup of $\SE(3)$ is isomorphic to a conjugate subgroup of $\SO(3)$. The connected closed subgroups of $\SO(3)$ have dimensions $0$, $1$, or $3$. There is no $2$--dimensional closed connected subgroup of $\SO(3)$ and hence none in $\SE(3)$. This contradicts the requirement that $H$ be a $2$--torus. Therefore no such $X$, $p_1$, and $p_2$ exist.
\end{proof}

\begin{example}\label{Ex:Pendulum}
A pendulum may be constructed using a Newton Daemon and a rigid bar construction.  Take $A$ to be an actor whose position, but not orientation, a Newton Daemon fixes.  Take $B$ to be an actor connected to $A$ by a rigid bar so that aligns the rotation of $B$ by that of $A$, like this:

\medskip

\begin{center}
\begin{tikzpicture}

\begin{scope}
\clip[] (-2, .5) -- (-2, 0) -- (2, 0) -- (2, .5) -- cycle;
\foreach \x in {-2.5, -2.4, ..., 2} {
        \pgfmathsetmacro{\vx}{\x + .5}
        \draw[thin] (\x, 0) -- (\vx, .5);
        }      
\end{scope}

\draw[] (-2, .5) -- (-2, 0) -- (2, 0) -- (2, .5);


\coordinate (A) at (0,0);
\coordinate (B) at (0,-2);

\coordinate[shift = {(0,0)}] (a) at (270:15pt);

\coordinate[shift = {(0,-2)}] (D) at (90:15pt);

\draw[ultra thick, white] (A) circle (15pt);
\draw (A) circle (15pt);
\draw (B) circle (15pt);

\draw[ultra thick, white] (a) -- (D);
\draw[thick] (a) -- (D);


\fill (D) circle (2pt);

\fill (A) circle (2pt);
\fill (a) circle (2pt);
\fill (B) circle (2pt);


\node[font=\small, below] at (A){$A$};

\node[font=\small, below] at (B) {$B$};

\end{tikzpicture}
\end{center}

The configuration space $X$ is given in Example~\ref{ex:RigidBar}.  Take $z$ to be the constraint index for the $\mathds R^2$ constraint for both $A$ and $B$, so that Example~\ref{ex:RigidBar} also provides a surjective submersion $\pi_{X,z}$ from $X$ to $\mathds R^2$ that is given by
\begin{align*}
&\pi_{X,z}\colon (x_a \theta_a + y_a \theta_a^\perp, \theta_a, (x_a+L)\theta_a + y_a \theta_a^\perp, -\theta_a) \to x_a\theta_a+y_b\theta_b^\perp
\end{align*}

Take $\gamma$ to be the path in $\mathds R^2$, the Newton Daemon, given by \[\gamma(t) = (0, 0),\] so that \[M_t = \pi_{X,z}^{-1}(\gamma(t)) = \{(0\theta_1+0\theta_1^\perp, \theta_1, L\theta_1 - 0\theta_1^\perp, -\theta_1)\},\] which is isomorphic to $\mathds S^1$, the configuration space of the pendulum.

\end{example}

\begin{example}\label{Ex:RevoluteJoint}
A \emph{revolute joint} or hinge has two actors $A$ and $B$ and an $\mathds R^2$ constraint $C$, with respective indices $a$, $b$, and $c$. Here is a representation of the system with actors viewed as bearings:

 \medskip
 \begin{center} 
\begin{tikzpicture}[scale = 1]
\begin{scope}[xshift = 0in]


\coordinate (A) at (180:2);
\coordinate (B) at (0:2);

\coordinate (C) at (0:6);

\coordinate[shift = {(180:2)}] (a) at (0:15pt);

\coordinate[shift = {(180:2)}] (D) at (180:15pt);
\coordinate[shift = {(0:2)}]   (E) at (0:15pt);
\coordinate[shift = {(0:6)}]   (F) at (180:15pt);

\draw (A) circle (15pt);
\draw (B) circle (15pt);

\draw[ultra thick, white] (a) -- (B);
\draw[thick] (a) -- (B);


\fill (E) circle (2pt);

\fill (A) circle (2pt);
\fill (a) circle (2pt);
\fill (B) circle (2pt);

\node[font=\small, above] at (A){$A$};
\node[font=\small, above] at (B) {$B$};

\end{scope}
\end{tikzpicture}
\end{center}

\noindent The constraint space of the system is $\mathds R^2$.  Given that the bar has (positive) length $L_{ab}$, and that $A$ and $B$ are both copies of $\R^2\times \SS^1$, and that the shared constraint $C$ is $\mathds R^2$, define $\pi_{a,c}$ and $\pi_{b,c}$ by \[\pi_{a,c}\colon (x_a \theta_a+ y_a \theta_a^\perp, \theta_a) \mapsto (x_a-L_{ab})\theta_a + y_a \theta_a^\perp \quad \text{and} \quad \pi_{b,c}\colon (x_b \theta_b + y_b \theta_b^\perp, \theta_b) \mapsto x_b\theta_b + y_b \theta_b^\perp.\] 

An $\F$-pullback of $A$ and $B$ over this constraint $C$ is isomorphic to the fiber product \[\big\{(x_a \theta_a + y_a \theta_a^\perp, \theta_a,  x_b \theta_b + y_b \theta_b^\perp, \theta_b) \in A \times B \mid (x_a-L_{ab})\theta_a + y_a\theta_a^\perp = x_b\theta_b + y_b\theta_b^\perp\big\}.\]
This configuration manifold is diffeomorphic to $\SE(2) \times \mathds S^1$. 
\end{example}

\begin{example}\label{Ex:RevoluteJoint2} This diagram represents two linked revolute joints:

 \medskip
 \begin{center} 
\begin{tikzpicture}[scale = 1]
\begin{scope}[xshift = 0in]


\coordinate (A1) at (180:2);
\coordinate (A2) at (0:2);
\coordinate (A3) at (0:6);

\coordinate[shift = {(180:2)}] (a1) at (0:15pt);
\coordinate[shift = {(0:2)}] (a2) at (0:15pt);
\coordinate[shift = {(0:6)}]   (a3) at (0:15pt);

\draw (A1) circle (15pt);
\draw (A2) circle (15pt);
\draw (A3) circle (15pt);

\draw[ultra thick, white] (a1) -- (A2);
\draw[thick] (a1) -- (A2);
\draw[ultra thick, white] (a2) -- (A3);
\draw[thick] (a2) -- (A3);


\fill (A1) circle (2pt);
\fill (A2) circle (2pt);
\fill (A3) circle (2pt);
\fill (a1) circle (2pt);
\fill (a2) circle (2pt);
\fill (a3) circle (2pt);

\node[font=\small, above] at (A1){$A_1$};
\node[font=\small, above] at (A2) {$A_2$};
\node[font=\small, above] at (A3) {$A_3$};

\end{scope}
\end{tikzpicture}
\end{center}

\noindent Actors $A_1$, $A_2$, and $A_3$ have actor indices $a_1$, $a_2$, and $a_3$. Actor $A_2$ has two surjective submersions onto $\mathds R^2$, and so the diagram for this system is not simple.  A welding $I_{A_1A_2}$ of actors $A_1$ and $A_2$ guarantees the existence of a configuration space for the total system.  Take $\pi_{a_1}$ and $\pi_{a_2}$ to be the morphisms for a cospan that describes a revolute joint formed from $A_1$ and $A_2$, with the length of the bar connecting $A_1$ and $A_2$ to be $L_1$. Use Example~\ref{Ex:RevoluteJoint} to see that the welded actor $I_{12}$ is isomorphic to $\SE(2)\times \mathds S^1$, and may be identified with the subset
\[
I_{12} = \big\{(x_2\theta_2 - L_1\theta_1 + y_2\theta_2^\perp, \theta_1,  x_2 \theta_2 + y_2 \theta_2^\perp, \theta_2)\mid (x_2,y_2)\in \mathds R^2,\; (\theta_1, \theta_2) \in \SS^1\times\SS^1\big\}
\] 
 of $\SE(2)\times\SE(2)$.
Take $\pi_{12,c}$ and $\pi_{3,c}$ to be the surjective submersions and $L_2$ to be the length of the bar connecting $A_2$ to $A_3$, so that
\[\pi_{12,c}\colon (x_2\theta_2 - L_1\theta_1 + y_2\theta_2^\perp, \theta_1,  x_2 \theta_2 + y_2 \theta_2^\perp, \theta_2) \mapsto (x_2-L_2)\theta_2 + y_2\theta_2^\perp\]
and
\[
\pi_{3,c}\colon (x_3\theta_3+y_3\theta_3^\perp, \theta_3) \mapsto x_3\theta_3+y_3\theta_3^\perp.
\]
An $\F$-pullback of $I_{12}$ and $A_3$ over their shared $\mathds R^3$ constraint is isomorphic to the fiber product $I_{123}$
\begin{align*}
I_{123} & = \big\{(x_2\theta_2 - L_1\theta_1 + y_2\theta_2^\perp, \theta_1,  x_2 \theta_2 + y_2 \theta_2^\perp, \theta_2, x_3\theta_3 + y_3\theta_3^\perp, \theta_3) \\ & \qquad \mid (x_2,y_2,x_3,y_3)\in \mathds R^4,\; (\theta_1, \theta_2, \theta_3) \in (\SS^1)^3, \; (x_2-L_2)\theta_2 + y_2\theta_2^\perp = x_3\theta_3 + y_3\theta_3^\perp\big\}.
\end{align*}
In later examples, if it is useful to treat these two linked revolute joints as a welded actor in order to have actor $A_3$ interact with an additional actor, it will be useful to rewrite the configuration space for $I_{123}$ as 
\begin{align*}
I_{123} & = \big\{(x_3\theta_3 - L_1\theta_1 - L_2\theta_2 + y_3\theta_3^\perp, \theta_1,  x_3 \theta_3 - L_2\theta_2 + y_3 \theta_3^\perp, \theta_2, x_3\theta_3 + y_3\theta_3^\perp, \theta_3) \\ & \qquad \mid (x_3,y_3)\in \mathds R^2,\; (\theta_1, \theta_2, \theta_3) \in (\SS^1)^3\big\}.
\end{align*}
\end{example}

\begin{example}\label{Ex:SliderConstruct} 
A \emph{slider} consists of two actors $A$ and $B$ with aligned angular orientation that can move freely along that orientation, like this:

\medskip

\begin{center} 
\begin{tikzpicture}[scale=1]
\begin{scope}[xshift=0in]

\coordinate (AL)  at (180:2);
\coordinate (AR)  at (0:2); 
\coordinate[yshift = 0pt] (BL)  at (0:0);
\coordinate[yshift = 0pt] (BR)  at (0:4); 
\coordinate[shift = {(180:2)}] (a) at (0:15pt);
\coordinate[yshift = 0pt, shift = {(0:4)}] (b) at (180:15pt);

\draw (AL) circle (15pt);
\draw (BR) circle (15pt);

\draw[thick] (a) -- (AR);
\draw[thick] (b) -- (BL);

\draw[fill = white] (0,-5pt) rectangle (2,5pt);

\draw[fill = black] (a) circle (2pt);
\fill (a) circle (2pt);
\fill (b) circle (2pt);
\fill (AL) circle (2pt);
\fill (BR) circle (2pt);
\end{scope}

\node[font=\small, above] at (AL){$A$};
\node[font=\small, above] at (BR) {$B$};

\end{tikzpicture}
\end{center}

\medskip

\noindent It is helpful to simplify the picture by sketching this as a welded slider, meaning that the welded slider is to be considered as a single actor, like this:

\medskip

\begin{center} 
\begin{tikzpicture}[scale=1]
\begin{scope}[xshift=0in]

\coordinate (AL)  at (180:2);
\coordinate (AR)  at (0:2); 
\coordinate[yshift = 0pt] (BL)  at (0:0);
\coordinate[yshift = 0pt] (BR)  at (0:4); 
\coordinate[shift = {(180:2)}] (a) at (0:15pt);
\coordinate[yshift = 0pt, shift = {(0:4)}] (b) at (180:15pt);

\draw (AL) circle (15pt);

\draw[thick] (a) -- (AR);
\draw[thick] (BL) -- (BR);

\draw[fill = white] (0,-5pt) rectangle (2,5pt);

\draw[fill = black] (a) circle (2pt);
\fill (a) circle (2pt);
\fill (AL) circle (2pt);
\fill (BR) circle (2pt);
\end{scope}

\node[font=\small, above] at (AL){$A$};
\node[font=\small, above] at (BR) {$B$};

\end{tikzpicture}
\end{center}

Take $a$ and $b$ to be the actor indices for $A$ and $B$, respectively, and $c$ to be the constraint index for the $\mathds R\times \SS^1$ constraint between the two actors.  Define surjective submersions $\pi_{a,c}$ and $\pi_{b,c}$ by 
\[
\pi_{a,c}\colon (x_1 \theta_1 + y_1 \theta_1^\perp, \theta_1) \mapsto (y_1\theta_1^\perp, \theta_1)\quad \text{and} \quad \pi_{b,c}\colon (x_2 \theta_2 + y_2 \theta_2^\perp, \theta_2) \mapsto (y_2\theta_2^\perp, -\theta_2).
\] 
An $\F$-pullback $I_{AB}$ of the cospan $\rangle \pi_{ac}, \pi_{bc}\langle$ is the fibered product \[I_{AB} = \{(x_1 \theta_1 + y_1 \theta_1^\perp, \theta_1,  x_2 \theta_1 + y_1 \theta_1^\perp, -\theta_1) \in A \times B \mid (x_1, y_1) \in \mathds R^2, \; x_2 \in \mathds R, \; \theta_1 \in \SS^1\}.\]
This fibered product is diffeomorphic to $\SE(2) \times \R$.
\end{example}

\begin{example}\label{Ex:Slidinghinge2DConstruct}
A sliding hinge should be a system that looks like this:

\medskip

\begin{center} 
\begin{tikzpicture}[scale=1]
\begin{scope}[xshift=0in]

\coordinate (AL)  at (180:2);
\coordinate (AR)  at (0:2); 
\coordinate[yshift = 0pt] (BL)  at (0:0);
\coordinate[yshift = 0pt] (BR)  at (0:4); 
\coordinate[shift = {(180:2)}] (a) at (0:15pt);
\coordinate[yshift = 0pt, shift = {(0:4)}] (b) at (0:15pt);

\draw (AL) circle (15pt);
\draw (BR) circle (15pt);

\draw[thick] (a) -- (AR);
\draw[thick] (BL) -- (BR);

\draw[fill = white] (0,-5pt) rectangle (2,5pt);

\draw[fill = black] (a) circle (2pt);
\fill (a) circle (2pt);
\fill (b) circle (2pt);
\fill (AL) circle (2pt);
\fill (BR) circle (2pt);
\end{scope}

\node[font=\small, above] at (AL){$A$};
\node[font=\small, above] at (BR) {$B$};

\end{tikzpicture}
\end{center}

\noindent The configuration space should be $\SE(2)\times \mathds R \times \mathds S^1$. It appears that this linkage may be constructed using two actors, as sketched, but Theorem~\ref{thm:NoTwoActorPlanarSlidingHinge} shows that this is not possible in two dimensions.  To construct this linkage in two dimensions, begin with the slider, viewed as the welded actor with configuration space $I_{12}$.  View a third actor $A_3$ connecting to $A_2$ like this:

\medskip

\begin{center} 
\begin{tikzpicture}[scale=1]
\begin{scope}[xshift=0in]

\coordinate (AL)  at (180:2);
\coordinate (AR)  at (0:2); 
\coordinate[yshift = 0pt] (BL)  at (0:0);
\coordinate[yshift = 0pt] (BR)  at (0:4); 
\coordinate[shift = {(180:2)}] (a) at (0:15pt);
\coordinate[yshift = 0pt, shift = {(0:4)}] (b) at (180:15pt);
\coordinate[yshift = 0pt] (C)  at (0:8); 
\coordinate[yshift = 0pt, shift = {(0:8)}] (c)  at (0:15pt);

\draw (AL) circle (15pt);
\draw (BR) circle (15pt);
\draw (C) circle (15pt);

\draw[thick] (a) -- (AR);
\draw[thick] (b) -- (BL);
\draw[thick] (BR) -- (C);

\draw[fill = white] (0,-5pt) rectangle (2,5pt);

\draw[fill = black] (a) circle (2pt);
\fill (a) circle (2pt);
\fill (b) circle (2pt);
\fill (AL) circle (2pt);
\fill (BR) circle (2pt);
\fill (C) circle (2pt);
\fill (c) circle (2pt);
\end{scope}

\node[font=\small, above] at (AL){$A_1$};
\node[font=\small, above] at (BR) {$A_2$};
\node[font=\small, above] at (C) {$A_3$};
\end{tikzpicture}
\end{center}

Take $c$ to be the actor index for the $\mathds R^2$ constraint between $I_{12}$ and $A_3$.  Denote by $\pi_{12,c}$ and $\pi_{3,c}$ the constraint morphisms
\[
\pi_{12,c} \colon (x_1 \theta_1 + y_1 \theta_1^\perp, \theta_1,  x_2 \theta_1 + y_1 \theta_1^\perp, -\theta_1) \to x_2 \theta_1 + y_1 \theta_1^\perp
\]
and
\[\pi_{3,c} \colon (x_3\theta_3 + y_3\theta_3^\perp, \theta_3) \mapsto x_3\theta_3 + y_3\theta_3^\perp.\]

The fibered product $X$ is an $\F$-pullback of the cospan $\rangle \pi_{12,c}, \pi_{3,c}\langle$, where 
\begin{align*}
X &= \{(x_1 \theta_1 + y_1 \theta_1^\perp, \theta_1,  x_2 \theta_1 + y_1 \theta_1^\perp, -\theta_1, x_2 \theta_1 + y_1 \theta_1^\perp, \theta_3) \\& \hspace{2in}\mid (x_1, y_1) \in \mathds R^2, \; x_2 \in \mathds R, \; (\theta_1, \theta_3) \in (\SS^1)^2\},
\end{align*}
which is isomorphic to $\SE(2)\times \mathds R \times \mathds S^1$.

\end{example}

\begin{theorem}[No Two--Actor Realization of the Planar Sliding Hinge]
\label{thm:NoTwoActorPlanarSlidingHinge}
Take $\SE(2)$ acting on itself by left multiplication and $M$ to be a smooth $\SE(2)$--manifold with equivariant surjective submersions
\[
p_1,p_2\colon \SE(2)\to M.
\]
Denote by $P$ the $\F$-pullback
\[
P = \SE(2)\times_M \SE(2)
\]
with the diagonal $\SE(2)$--action induced by that on the $\F$-product $\SE(2)\times\SE(2)$. There are no choices of $M$, $p_1$, and $p_2$ so that $P$ is $\SE(2)$--equivariantly diffeomorphic to $\SE(2)\times \R \times \SS^1$.
\end{theorem}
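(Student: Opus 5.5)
The argument parallels the proof of Theorem~\ref{thm:NoTwoActorUniversalJoint}, with the Lie-theoretic obstruction adapted to $\SE(2)$. First, by Lemma~\ref{lem:EquivariantMapsAreHomogeneous}, each $p_i$ makes $M$ an $\SE(2)$-homogeneous space. Writing $m_i := p_i(e)$ and $H_i := \Stab(m_i)$, we get $M \cong \SE(2)/H_i$ equivariantly. Lemma~\ref{lem:FiberProductNormalForm} then gives an $\SE(2)$-equivariant diffeomorphism $P \cong \SE(2)\times H_2$, with $\SE(2)$ acting only on the first factor. Here $H_2$ is a closed subgroup, since it is the preimage of a point under a continuous map, so it is an embedded Lie subgroup. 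Also $\dim H_2 = 3-\dim M$.

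Now suppose $P$ were $\SE(2)$-equivariantly diffeomorphic to $\SE(2)\times\R\times\SS^1$. The plan is to recover the second factor intrinsically, as the orbit space. Since the action on $\SE(2)\times H_2$ is free, with quotient map the projection to $H_2$, and likewise for $\SE(2)\times\R\times\SS^1$, an equivariant diffeomorphism descends to a diffeomorphism
\[
H_2 \cong P/\SE(2) \cong \R\times\SS^1 .
\]
Descending works because an equivariant diffeomorphism induces a homeomorphism of quotients, and both quotient maps are trivial principal bundle projections, so the induced map is smooth with smooth inverse. Thus $H_2$ would be a two-dimensional closed subgroup of $\SE(2)$ diffeomorphic to the cylinder $\R\times\SS^1$. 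In particular $H_2$ is connected, non-compact, and has fundamental group $\ZZ$.

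The main step is to show that no such subgroup exists. The identity component of a two-dimensional closed subgroup corresponds to a two-dimensional subalgebra of $\mathfrak{se}(2)$. Take a basis $J$ (rotation), $T_1$, $T_2$ (translations), with $[J,T_1]=T_2$, $[J,T_2]=-T_1$ and $[T_1,T_2]=0$. Let $\mathfrak h$ be a two-dimensional subalgebra. If $\mathfrak h$ contains an element $J+v$ with $v$ a translation, then $\mathfrak h$ also contains some nonzero translation $w$. Bracketing gives $[J+v,w]$, which is the translation $w$ rotated by $90^\circ$, and this is linearly independent of $w$. Together with $J+v$ that forces $\dim\mathfrak h = 3$, a contradiction. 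Hence $\mathfrak h$ is the translation ideal $\R^2$, and the identity component of $H_2$ is the translation subgroup $\R^2$. That subgroup is diffeomorphic to $\R^2$, not to $\R\times\SS^1$, contradicting the previous paragraph.

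The remaining gap is connectedness: a closed subgroup diffeomorphic to the connected manifold $\R\times\SS^1$ is connected, so it equals its identity component $\R^2$. Then $\R^2\cong\R\times\SS^1$, which is impossible because $\R^2$ is simply connected and $\R\times\SS^1$ is not. This finishes the proof.

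I expect the main obstacle to be justifying the passage from an abstract equivariant diffeomorphism to a diffeomorphism of $H_2$ itself, that is, the orbit-space argument. The subalgebra classification is routine.
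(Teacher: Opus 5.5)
Your proof is correct and follows the paper's route: identify $M$ with $\SE(2)/H$, use the normal form $P\cong\SE(2)\times H$, show that every two-dimensional subalgebra of $\mathfrak{se}(2)$ is the translation ideal, and conclude that $H^\circ\cong\R^2$. The differences are refinements only: you obtain $\dim H=2$ from the orbit-space diffeomorphism $H\cong P/\SE(2)\cong\R\times\SS^1$ rather than from the count $\dim P=6-\dim M$, and that descent together with your connectedness remark justifies the paper's last step from ``$H\not\cong\R\times\SS^1$'' to ``$\SE(2)\times H\not\cong\SE(2)\times\R\times\SS^1$,'' which the paper asserts without argument.
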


\begin{proof}
Since $p_1$ and $p_2$ are surjective submersions, the fiber product $P$ is a smooth manifold and the dimension formula for submersions gives
\[
\dim(P)=\dim(\SE(2))+\dim(\SE(2))-\dim(X)=6-\dim(X).
\]
If $P$ is $\SE(2)$--equivariantly diffeomorphic to $\SE(2)\times \R\times \SS^1$ with $\SE(2)$ acting on the first factor, as Lemma~\ref{lem:FiberProductNormalForm} would require, then %
\begin{equation}\label{Sec5:hinge:eqA}
\dim(M) = 1.
\end{equation}

Surjectivity and $\SE(2)$--equivariance of $p_1$ imply that the $\SE(2)$--action on $M$ is transitive.  Fix $x_0$ in $M$, take $H$ to be the stabilizer of $x_0$, and identify $M$ to be isomorphic to the homogeneous space $\SE(2)/H$ \cite{HelgasonDG}. Equation \eqref{Sec5:hinge:eqA} implies that
\begin{equation}\label{Sec5:hinge:eqB}
\dim(H)=\dim(\SE(2))-\dim(M)=3-1=2.
\end{equation}

Compute the pullback as in the universal--joint argument to obtain an $\SE(2)$--equivariant diffeomorphism
\[
\SE(2)\times_{\SE(2)/H}\SE(2)\;\cong\;\SE(2)\times H, \quad \text{by} \quad (g,h)\mapsto (g,gh).
\]  %
The pullback source $P$ is therefore $\SE(2)$--equivariantly diffeomorphic to $\SE(2)\times H$.

Choose a basis $\{A,X,Y\}$ for the Lie algebra $\mathfrak{se}(2)$ so that $A$ generates rotations and $X,Y$ generate translations, hence
\[
[A,X]=Y,\qquad [A,Y]=-X,\qquad [X,Y]=0.
\]
Denote by $H^\circ$ the connected component of $H$ that contains the identity and set \[\mathfrak h=\Lie(H^\circ).\]  Since $\dim(H^\circ)$ is equal to $\dim(H)$, equation \eqref{Sec5:hinge:eqB} implies that $\dim(\mathfrak h)$ is equal to $2$.

If $A$ is in $\mathfrak h$, then $\mathfrak h$ is the linear span $\LSpan\{A,v\}$ for some nonzero $v$ in $\LSpan\{X,Y\}$.  There are real numbers $\alpha$ and $\beta$ so that \[v=\alpha X+\beta Y.\] Linearity of the Lie bracket gives the equality
\[
[A,v]=\alpha[A,X]+\beta[A,Y]=\alpha Y-\beta X.
\]
The vector $\alpha Y-\beta X$ is linearly independent from $v$ unless $v$ is the zero vector. Thus $[A,v]$ is not in $\LSpan\{A,v\}$ for every nonzero $v$, which contradicts that $\mathfrak h$ is a Lie subalgebra. Therefore $A$ is not in $\mathfrak h$.  Furthermore, if there is a $w$ in $\LSpan\{X,Y\}$ and real numbers $a$ and $b$ so that $aA + bw$ is in $\mathfrak h$, then then the closure of $\mathfrak h$ under the bracket implies that $A$ is also in $\mathfrak h$, or $a$ is zero.  The Lie subalgebra $\mathfrak h$ is, therefore, a subset of $\LSpan\{X,Y\}$.  Since $\LSpan\{X,Y\}$ is abelian and $\dim(\mathfrak h)$ is equal to $2$, \[\mathfrak h = \LSpan\{X,Y\},\] and so $H^\circ$ is the translation subgroup $\R^2$ of $\SE(2)$.

The identity component of $H$ is diffeomorphic to $\R^2$, so $H$ is not diffeomorphic to $\R\times \SS^1$. Therefore $\SE(2)\times H$ is not diffeomorphic to $\SE(2)\times \R\times \SS^1$. This contradicts the assumption that $P$ is $\SE(2)$--equivariantly diffeomorphic to $\SE(2)\times \R\times \SS^1$.  Hence no such $M$, $p_1$, and $p_2$ exist.
\end{proof}

\paragraph{{\bf Sliding hinge motion set.}}

Fix two orthonormal unit vectors $u_{\mathrm h}$ and $u_{\mathrm s}$ in $\SS^2$. For each $\theta$ in $\SS^1$, identify $\theta$ with an angle measure and write $R_{u_{\mathrm h}}(\theta)$ for the rotation in $\SO(3)$ about the axis $u_{\mathrm h}$ through angle $\theta$. Identify \[G=\SE(3)=\R^3\rtimes\SO(3).\] Define the subset $S$ of $G$, the \emph{sliding hinge motion set}, by
\[
S:=\big\{(t u_{\mathrm s},R_{u_{\mathrm h}}(\theta)) \mid t\in\R,\; \theta\in\SS^1\big\}.
\]

\begin{lemma}
\label{lem:SlidingHingeNotSubgroup}
The sliding hinge motion set $S$ is not a subgroup of $G$.
\end{lemma}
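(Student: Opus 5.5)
We will show that $S$ is not closed under the group multiplication of $G=\R^3\rtimes\SO(3)$. Recall that in the semidirect product
\[
(a,R)\cdot(b,Q)=(a+Rb,\;RQ).
\]
The strategy is to pick two elements of $S$ whose product has a translation part that leaves the line $\R u_{\mathrm s}$.

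First, fix a rotation angle $\theta$ with $R_{u_{\mathrm h}}(\theta)u_{\mathrm s}\notin\R u_{\mathrm s}$. This is possible because $u_{\mathrm s}$ is orthogonal to the axis $u_{\mathrm h}$. Take $\theta=\pi/2$. Then $R_{u_{\mathrm h}}(\pi/2)u_{\mathrm s}=u_{\mathrm h}\times u_{\mathrm s}$, which is a unit vector orthogonal to both $u_{\mathrm h}$ and $u_{\mathrm s}$.

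Next, consider the elements
\[
g_1=(0,R_{u_{\mathrm h}}(\pi/2))\quad\text{and}\quad g_2=(u_{\mathrm s},I),
\]
which are in $S$ with parameters $t=0,\theta=\pi/2$ and $t=1,\theta=0$ respectively. Their product is
\[
g_1g_2=\big(R_{u_{\mathrm h}}(\pi/2)u_{\mathrm s},\;R_{u_{\mathrm h}}(\pi/2)\big)=\big(u_{\mathrm h}\times u_{\mathrm s},\;R_{u_{\mathrm h}}(\pi/2)\big).
\]

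Finally, we check that $g_1g_2\notin S$. Membership in $S$ would require the translation component to equal $tu_{\mathrm s}$ for some real $t$. That would give $u_{\mathrm h}\times u_{\mathrm s}=tu_{\mathrm s}$, so $u_{\mathrm h}\times u_{\mathrm s}$ would be parallel to $u_{\mathrm s}$. But it is orthogonal to $u_{\mathrm s}$ and nonzero, which is a contradiction. So $S$ is not closed under multiplication and is therefore not a subgroup.

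The only step needing care is fixing the semidirect product convention. With the other convention, $(a,R)(b,Q)=(a+b,\dots)$ acting on the right, one instead uses the product $g_2g_1$, and the same orthogonality argument applies. Closure under inverses fails in the same way: $g^{-1}=(-R^{-1}a,R^{-1})$ for $a=tu_{\mathrm s}$ and a generic $\theta$. This gives an alternative check if needed.
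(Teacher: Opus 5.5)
Your proof is correct and is essentially the paper's argument. The paper multiplies $(tu_{\mathrm s},R_{u_{\mathrm h}}(\theta))$ by $(t'u_{\mathrm s},I)$ with $\theta$ neither trivial nor a half-turn and $t'\neq 0$, and observes that the translation part $tu_{\mathrm s}+t'R_{u_{\mathrm h}}(\theta)u_{\mathrm s}$ leaves the line $\R u_{\mathrm s}$. You take the special case $t=0$, $\theta=\pi/2$, $t'=1$.
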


\begin{proof}
Fix $\theta$ in $\SS^1$ to be neither the identity for $\SS^1$ nor half of the circle. Since $u_{\mathrm s}$ and $u_{\mathrm h}$ are perpendicular, the vector $R_{u_{\mathrm h}}(\theta)u_{\mathrm s}$ is not a scalar multiple of $u_{\mathrm s}$. The semidirect product law implies that
\[
(tu_{\mathrm s},R_{u_{\mathrm h}}(\theta))(t'u_{\mathrm s},I) = (tu_{\mathrm s}+t'R_{u_{\mathrm h}}(\theta)u_{\mathrm s}, R_{u_{\mathrm h}}(\theta)).
\]
For any nonzero $t'$, there is no real number $C$ with \[tu_{\mathrm s}+t'R_{u_{\mathrm h}}(\theta)u_{\mathrm s} = Cu_{\mathrm s},\] so the sum is not a scalar multiple of $u_{\mathrm s}$, hence the product is not in $S$.  Thus $S$ is not closed under multiplication.
\end{proof}

\begin{lemma}
\label{lem:ClosureForcesPlane}
For any subgroup $H$ of $G$, if for all $\theta$ in $\SS^1$ the pair $(0,R_{u_{\mathrm h}}(\theta))$ is in $H$ and for some nonzero $t$ the pair $(tu_{\mathrm s},I)$ is in $H$, then the pair $(v,I)$ is in $H$ for all $v$ in $u_{\mathrm h}^{\perp}$.
\end{lemma}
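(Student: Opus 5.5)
The approach is to use conjugation inside $H$ by the rotations about $u_{\mathrm h}$ to move the translation $(tu_{\mathrm s},I)$ around the plane $u_{\mathrm h}^\perp$. The resulting translations then generate all of that plane.

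First, the semidirect product law in $G=\R^3\rtimes\SO(3)$ gives
\[
(0,R)(w,I)(0,R)^{-1}=(Rw,I)
\]
for any $R$ in $\SO(3)$ and $w$ in $\R^3$. Applying this with $R=R_{u_{\mathrm h}}(\theta)$ and $w=tu_{\mathrm s}$, closure of $H$ under products and inverses puts $(tR_{u_{\mathrm h}}(\theta)u_{\mathrm s},I)$ in $H$ for every $\theta$ in $\SS^1$.

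Second, observe that $u_{\mathrm s}$ lies in $u_{\mathrm h}^\perp$. Completing $u_{\mathrm h},u_{\mathrm s}$ to a right-handed orthonormal basis with $u_{\mathrm h}\times u_{\mathrm s}$, we have
\[
R_{u_{\mathrm h}}(\theta)u_{\mathrm s}=\cos\theta\, u_{\mathrm s}+\sin\theta\,(u_{\mathrm h}\times u_{\mathrm s}).
\]
As $\theta$ ranges over $\SS^1$, the vectors $tR_{u_{\mathrm h}}(\theta)u_{\mathrm s}$ therefore sweep out the full circle of radius $|t|$ in the plane $u_{\mathrm h}^\perp$.

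Third, note that the pure translations $(v,I)$ in $H$ form a subgroup isomorphic to an additive subgroup $V$ of $\R^3$, since $(v,I)(v',I)=(v+v',I)$. So it suffices to show that an additive subgroup of the plane containing a full circle of radius $r=|t|>0$ is the whole plane. Every vector of length $s$ in $[0,2r]$ is a sum $a+b$ of two vectors of length $r$ (isosceles triangle). Hence $V$ contains the closed disc of radius $2r$ in $u_{\mathrm h}^\perp$. Since every $v$ in $u_{\mathrm h}^\perp$ is $n$ times a vector in that disc for some $n$ in $\mathds N$, closure under addition gives all of $u_{\mathrm h}^\perp$.

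The only step needing care is this last one: a group generated by a circle is not a priori the plane unless one uses the two-vector-sum argument. With that done, the conclusion that $(v,I)$ is in $H$ for every $v$ in $u_{\mathrm h}^\perp$ follows directly.
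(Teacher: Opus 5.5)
Your proof is correct and follows the paper's argument. Both conjugate $(tu_{\mathrm s},I)$ by the rotations $(0,R_{u_{\mathrm h}}(\theta))$ to get a full circle of translations of radius $|t|$ in $u_{\mathrm h}^\perp$, then use closure under addition; your isosceles-triangle step (an additive subgroup containing such a circle contains the disc of radius $2|t|$, hence the whole plane) makes explicit what the paper leaves to the phrase ``closure under addition of translations.''
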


\begin{proof}
Fix $t$ to be nonzero with $(tu_{\mathrm s},I)$ in $H$. Closure of $H$ under conjugation (by elements of $H$) implies that
\[
(0,R_{u_{\mathrm h}}(\theta))(tu_{\mathrm s},I)(0,R_{u_{\mathrm h}}(\theta))^{-1}\in H.
\]
The equality \[(0,R)^{-1} = (0,R^{-1})\] implies that
\[
(0,R)(tu_{\mathrm s},I)(0,R^{-1}) = (tRu_{\mathrm s},I),
\]
hence for all $\theta$, $(tR_{u_{\mathrm h}}(\theta)u_{\mathrm s},I)$ is in $H$.

As $\theta$ varies, $R_{u_{\mathrm h}}(\theta)u_{\mathrm s}$ traces the unit circle in the plane $u_{\mathrm h}^{\perp}$. Therefore $H$ contains translations along every direction in $u_{\mathrm h}^{\perp}$. Closure under addition of translations and the equality
\[
(v,I)(w,I)=(v+w,I),
\]
together imply that $(v,I)$ is in $H$ for every $v$ in $u_{\mathrm h}^{\perp}$.
\end{proof}

\begin{theorem}[No Two--Actor Realization of the Spatial Sliding Hinge]
\label{thm:NoTwoActorFiberProductSpatialSlidingHinge}
There is no smooth $G$--manifold $M$ with smooth $G$--equivariant surjective submersions $p_1$ and $p_2$ from $G$ to $M$ such that the fiber product
\[
P:=G\times_M G
\]
has relative motion set equal to $S$.

\end{theorem}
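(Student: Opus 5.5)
The plan is to reduce the fiber product to a subgroup, exactly as in the proofs of Theorems~\ref{thm:NoTwoActorUniversalJoint} and \ref{thm:NoTwoActorPlanarSlidingHinge}, and then invoke Lemmas~\ref{lem:SlidingHingeNotSubgroup} and \ref{lem:ClosureForcesPlane}. First I would fix the meaning of \emph{relative motion set}. For a pair $(g_1,g_2)$ in $P$, the relative configuration is $g_1^{-1}g_2$, and the relative motion set is
\[
R(P):=\{g_1^{-1}g_2 \mid (g_1,g_2)\in P\}.
\]
This set is invariant under the diagonal $G$-action, so it is intrinsic to $P$.

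Suppose, for contradiction, that $M$, $p_1$, $p_2$ exist with $R(P)=S$. By Lemma~\ref{lem:EquivariantMapsAreHomogeneous}, $M$ is $G$-equivariantly diffeomorphic to $G/H_1$ via $p_1$ and to $G/H_2$ via $p_2$, where $H_i=\Stab(p_i(e))$. By Lemma~\ref{lem:FiberProductNormalForm}, a pair $(g_1,g_2)$ lies in $P$ if and only if $g_2^{-1}g_1g_0\in H_2$, where $p_1(e)=g_0\cdot p_2(e)$. Setting $g_1=e$ shows that the relative motion set is the coset-type set
\[
R(P)=\{g_2\mid g_2^{-1}g_0\in H_2\}=g_0H_2^{-1}=g_0H_2 .
\]
Now $e\in S$ (take $t=0$ and $\theta=0$), so $e\in g_0H_2$. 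Hence $g_0\in H_2$, and therefore $R(P)=H_2$ is a subgroup of $G$. This contradicts Lemma~\ref{lem:SlidingHingeNotSubgroup}.

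For robustness, in case the intended convention is $g_2^{-1}g_1$ or an inverse, I would note that all of these variants give $H_2$, $H_2^{-1}=H_2$, or a conjugate of it, and each of these is a subgroup once it contains $e$. As an independent check, Lemma~\ref{lem:ClosureForcesPlane} gives a second route. Any subgroup containing $S$ contains all translations in $u_{\mathrm h}^\perp$. Such a subgroup has dimension at least $3$, while $S$ has dimension $2$, so $S$ cannot equal a subgroup.

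The main obstacle is the bookkeeping in the middle step. One has to pin down the convention for the relative motion set so that it matches Lemma~\ref{lem:FiberProductNormalForm}, and then justify that the translate by $g_0$ collapses to $H_2$ itself. This is where the observation $e\in S$ is essential; without it, $R(P)$ would only be a coset. That subtlety is exactly what forces the appeal to Lemma~\ref{lem:SlidingHingeNotSubgroup} rather than a dimension count.
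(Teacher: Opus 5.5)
Your proof is correct and follows essentially the paper's route: Lemma~\ref{lem:FiberProductNormalForm} identifies the relative motion set with (a translate of) the stabilizer subgroup $H_2$, and $S$ fails to be a subgroup. The differences are minor. The paper defines the relative motion set as $\{g_2^{-1}g_1g_0 \mid (g_1,g_2)\in P\}$, so it equals $H_2$ outright and your $e\in S$ step is not needed. The paper also closes with Lemma~\ref{lem:ClosureForcesPlane}, which is your ``independent check'', rather than with Lemma~\ref{lem:SlidingHingeNotSubgroup}.
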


\begin{proof}
Take any smooth $G$--manifold $M$ and smooth $G$--equivariant surjective submersions $p_1$ and $p_2$ from $G$ to $M$. Lemma~\ref{lem:FiberProductNormalForm} implies that there exists a subgroup $H$ of $G$ such that $P$ is $G$--equivariantly diffeomorphic to $G\times H$. Under this identification, the relative motion set
\[
\{g_2^{-1}g_1g_0 \mid (g_1,g_2)\in P\}
\]
is exactly $H$.

If $P$ were to model a spatial sliding hinge with rotation axis $u_{\mathrm h}$ and sliding direction $u_{\mathrm s}$, which is perpendicular to $u_{\mathrm h}$, then $H$ would have to contain all rotations $(0,R_{u_{\mathrm h}}(\theta))$ and at least one nontrivial translation $(tu_{\mathrm s},I)$. Lemma~\ref{lem:ClosureForcesPlane} then forces $(v,I)$ to lie in $H$ for every $v$ in $u_{\mathrm h}^{\perp}$. Thus any such subgroup $H$ necessarily contains all planar translations orthogonal to $u_{\mathrm h}$, rather than translations along a single sliding direction. Consequently, the relative motion set cannot equal the sliding hinge motion set $S$.

\end{proof}

\begin{example}\label{Ex:CylindricalJointConstruct}
Although Theorem~\ref{thm:NoTwoActorPlanarSlidingHinge} shows that the sliding hinge is not a kinematic pair in two dimensions, and Theorem~\ref{thm:NoTwoActorFiberProductSpatialSlidingHinge} extends this to the spatial case, the present example shows that the cylindrical joint can be constructed as a spatial kinematic pair. The configuration space for this joint is $\SE(3)\times \mathds R\times \SS^1$, which  gives the proper degrees of freedom for the sliding hinge, but the geometry of the system is different.  To this end, recall that every actor of a spatial linkage is an element of
\[
\SE(3)=\R^3\rtimes\SO(3), \quad \text{where} \quad (a,R)(a',R')=(a+Ra', RR').
\]
The $\R^3$ component records the position of the actor in space. Fix a unit vector $u$ in $\SS^2$ to designate the oriented hinge axis of the actor. The unit sphere $\SS^2$ parameterizes the possible oriented hinge-axis directions in space.

The homogeneous-space description of $\SS^2$ as a quotient of $\SO(3)$ plays a central role in the construction of the three-dimensional sliding hinge.
Denote by $\SS^1$ the stabilizer of $u$ under the left action of $\SO(3)$ on $\SS^2$, that is, 
\[
\SS^1 := \Stab_{\SO(3)}(u) = \{R\in\SO(3) \mid Ru = u\}.
\]
This subgroup is isomorphic to $\SO(2)$. Define the surjective submersion
\[
\pi\colon\SO(3)\to \SS^2 \quad \text{by} \quad \pi(R):=Ru.
\]
The quotient $\SO(3)/\SS^1$ is diffeomorphic to $\SS^2$, and $\pi$ exhibits $\SO(3)$ as a non-trivial principal $\SS^1$--bundle over $\SS^2$.

The sliding hinge constraint must fix not only an axis direction but also an axis line in space. Define the smooth manifold
\[
X := \{(x,v)\in\R^3\times \SS^2 \mid x\cdot v=0\},
\]
which parameterizes oriented lines in $\R^3$ by associating to each $(x,v)$ the oriented line
\[
\ell(x,v)=\{x+t v\mid t\in\R\},
\]
where $v$ is the unit direction and $x$ is the unique point on the line closest to the origin.  Define the surjective submersion
\[
p\colon\SE(3)\to X \quad \text{by} \quad p(a,R):=(x,v),
\]
where
\[
v:=\pi(R)=Ru \quad \text{and} \quad x:=a-(a\cdot v)v.
\]
The equality \[x\cdot v=0\] ensures that $p(a,R)$ is in $X$.  The geometric significance is that $v$ is the spatial direction of the hinge axis of the actor in the state $(a,R)$, and $x$ is the perpendicular component of the actor position $a$, which
determines the unique axis line through $a$ in direction $v$.

The group $\SE(3)$ acts smoothly on $X$ by transporting oriented lines:
\begin{align*}
(a_0,R_0)\cdot(x,v) := (x',v') \quad \text{by} \quad \begin{cases} v':=R_0 v & \text{}\\x':=\proj_{(R_0 v)^\perp}(R_0 x + a_0), & \text{}\end{cases}
\end{align*}
where 
\[
\proj_{w^\perp}(y)=y-(y\cdot w)w.
\]
The equality
\[
p\big((a_0,R_0)(a,R)\big)=(a_0,R_0)\cdot p(a,R),
\]
shows that $p$ is $\SE(3)$--equivariant.

Fix the reference element $e$ in $\SE(3)$ to be $(0,I)$ and write
\[
(x_0,v_0) := p(e) = (0,u) \in X,
\]
so the reference oriented line is
\[
\ell_0=\{t v_0\mid t\in\R\}.
\]
The subgroup $H$ of $\SE(3)$ that is given by
\[
H := \Stab_{\SE(3)}(x_0,v_0) = \{g\in\SE(3)\mid g\cdot(x_0,v_0)=(x_0,v_0)\},
\]
consists exactly of translations along $\ell_0$ and rotations about $\ell_0$.  It is, therefore, given by
\[
H=\{(t v_0, R_{v_0}(\theta))\mid  t\in\R,\ \theta\in\R/2\pi\ZZ\},
\]
hence isomorphic to $\R\times \SS^1$.

Since $p$ is equivariant and \[p(e)=(x_0,v_0),\] for all $g$ in $\SE(3)$, 
\[
p(g)=g\cdot(x_0,v_0).
\]
In particular, the $\SE(3)$--action on $X$ is transitive, and $p$ identifies $X$ with the homogeneous space $\SE(3)/H$. Hence $p$ is a surjective submersion.

Take $A_1$ and $A_2$ to be actors, each equipped with the left $\SE(3)$--action by left multiplication, and define for each $i$ in $\{1,2\}$ the surjective submersion $p_i$ by
\[
p_i:=p\colon A_i\to X .
\]
Define the constrained configuration space as the pullback
\[
P := A_1\times_X A_2 = \{(g_1,g_2)\in\SE(3)\times\SE(3)\mid p(g_1)=p(g_2)\},
\]
so that $(g_1,g_2)$ is in $P$ exactly when the two actors determine the same oriented axis line in space.

For any $(g_1, g_2)$ in $P$, define $\Phi(g_1, g_2)$ by 
\[
\Phi(g_1,g_2):=(g_1,\,g_1^{-1}g_2).
\]
The equality \[p(g_1)=p(g_2)\] implies that
\[
g_1\cdot(x_0,v_0)=g_2\cdot(x_0,v_0),
\]
hence \[(g_1^{-1}g_2)\cdot(x_0,v_0)=(x_0,v_0)\] and $g_1^{-1}g_2$ is in $H$. Therefore $\Phi$ is a well-defined function from $P$ to $\SE(3)\times H$.
Its inverse is
\[
\Phi^{-1}(g,h)=(g,gh),
\]
hence $\Phi$ is an $\SE(3)$--equivariant diffeomorphism. Consequently,
\[
P \cong \SE(3)\times H \cong \SE(3)\times\R\times \SS^1,
\]
with $\SE(3)$ acting by left multiplication on the first factor.

Assume the center of each actor is the origin of its own frame and write
\[
g_i=(a_i, R_i)\in\SE(3),
\]
so that 
\[
g_1^{-1}g_2=\bigl(R_1^{-1}(a_2-a_1),R_1^{-1}R_2\bigr).
\]
For any $(g_1,g_2)$ in $P$, there exist unique $\theta$ in $\SS^1$ and $t$ in $\R$ so that
\[
R_1^{-1}R_2=R_{u}(\theta) \quad \text{and} \quad R_1^{-1}(a_2-a_1)=t u.
\]
Thus $\theta$ is the relative rotation about the common axis and $t$ is the signed displacement of the second body origin along that axis, expressed in the frame of the first actor.

Note that a cylindrical joint without the sliding component is the construction of a circular bearing, or hinge, in space.
\end{example}

\begin{example}\label{example:lock3Bar}
A three bar linkage that forms a closed loop (a truss) is an example of a linkage that does not decompose external constraints, and that has a cyclic skeleton. This linkage reflects the challenge of modeling any such system in the ACM framework.  Here is a proposed diagram for such a system:

\medskip

\begin{center}
\begin{tikzpicture}[scale=1]
\begin{scope}


\coordinate (A)  at (150:2);
\coordinate (B) at (30:2);
\coordinate (C) at (270:2);
\coordinate[shift = {(150:2)}] (a) at (0:15pt);
\coordinate[shift = {(30:2)}] (b) at (240:15pt);
\coordinate[shift = {(270:2)}] (c) at (120:15pt);

\draw (A) circle (15pt);
\draw (B) circle (15pt);
\draw (C) circle (15pt);

\fill (A) circle (2pt);
\fill (B) circle (2pt);
\fill (C) circle (2pt);

\draw[thick] (a) -- (B);
\draw[thick] (b) -- (C);
\draw[thick] (c) -- (A);

\fill (a) circle (2pt);
\fill (b) circle (2pt);
\fill (c) circle (2pt);

\node[anchor = south, font = \small] at (A) {$A_1$}; 
\node[anchor = south, font = \small] at (B) {$A_2$}; 
\node[anchor = north, font = \small] at (C) {$A_3$}; 

\end{scope}
\end{tikzpicture}
\end{center}

Example~\ref{Ex:RevoluteJoint2} already gives the configuration space for the system without the interaction between $A_1$ and $A_3$. The interaction between actor $A_1$ and $A_3$, with their separation required to be $L_3$, introduces two new surjective submersions $\pi_{a_1c}$ and $\pi_{a_3c}$ that are given by 
\[\pi_{a_1c} \colon (x_1\theta_1 + y_1\theta_1^\perp, \theta_1) \mapsto x_1\theta_1 + y_1\theta_1^\perp\]
and
\[\pi_{a_3c} \colon (x_3\theta_3 + y_3\theta_y^\perp, \theta_3) \mapsto (x_3-L_3)\theta_3 + y_3\theta_3^\perp.\]
The total space for the linkage would be the set of six tuples of the form
\[
(x_3\theta_3 - L_1\theta_1 - L_2\theta_2 + y_3\theta_3^\perp, \theta_1,  x_3 \theta_3 - L_2\theta_2 + y_3 \theta_3^\perp, \theta_2, x_3\theta_3 + y_3\theta_3^\perp, \theta_3)
\]
with an additional condition imposed by the equality
\[
x_3\theta_3 - L_1\theta_1 - L_2\theta_2 + y_3\theta_3^\perp = (x_3-L_3)\theta_3 + y_3\theta_3^\perp.
\]
Which is, of course, equivalent to the equality \[L_1\theta_1 + L_2\theta_2 + L_3\theta_3 = \Vec{0},\] where $\Vec{0}$ is the zero vector in $\mathds R^2$.  If this sum is not the zero vector, then the system cannot be constructed. If the sum is the zero vector, then the configuration space $X$ of the system is simply $\SE(2)$, and so there is no surjective submersion from $X$ to the configuration space for any interaction between actors. This system is the \emph{locked three bar linkage} because the only rotational degree of freedom for the system is the rotation that comes from the diagonal action of $\SE(2)$ on the total configuration space.

The three bar linkage may, however, be constructed from a sliding three bar linkage with the use of a Newton Daemon.  The Newton Daemon fixes two bar lengths and an angle between bars, and can be constructed on realizing the sliding three bar system as an interaction between two welded actors, each with new constraints and constraint morphisms.  This system motivates the idea of \emph{compound actors}, to be developed in a subsequent work that treats modeling systems using the ACM framework. 
\end{example}

 \end{document}